\def\min{\qopname\relax n{min}}
\def\max2{\qopname\relax n{max2}}
\def\max{\qopname\relax n{max}}
\def\Pr{\qopname\relax n{\mathbf{Pr}}}
\newcommand{\EE}{\mathbb{E}}
\newcommand{\II}{\mathbb{I}}
\newcommand{\RR}{\mathbb{R}}
\newcommand{\NN}{\mathbb{N}}
\def\G{\mathcal{G}}
\def\M{\mathcal{M}}
\def\S{\mathcal{S}}
\def\T{\mathcal{T}}
\def\X{\mathcal{X}}
\def\al{\bm{\alpha}} 
\def\x{\bm{x}} 
\def\y{\bm{y}}
\def\s{\bm{s}} 
\def\p{\bm{p}}
\def\y{\bm{y}}
\newenvironment{lp*}{\begin{equation*}  \begin{array}{lll}}{\end{array}\end{equation*}}
\definecolor{darkspringgreen}{rgb}{0.09, 0.45, 0.27}
\newcommand{\game}{{competing content creation game}}
\definecolor{grey}{rgb}{0.8,0.8,0.8}
\newtheorem{theorem}{Theorem} 
\newtheorem{proposition}{Proposition}
\newtheorem{lemma}{Lemma}
\newtheorem{corollary}{Corollary}
\newtheorem{definition}{Definition}
\begin{document}

\title{How Bad is Top-$K$ Recommendation under Competing Content Creators?}

\author{\name Fan Yao$^1$ \email fy4bc@virginia.edu
       \AND
       \name Chuanhao Li$^1$ \email cl5ev@virginia.edu 
       \AND
       \name Denis Nekipelov$^2$ \email dn4w@virginia.edu 
       \AND
       \name Hongning Wang$^1$ \email hw5x@virginia.edu 
       \AND
       \name Haifeng Xu$^3$ \email haifengxu@uchicago.edu \\ \\ 
       \addr $^1$Department of Computer Science, University of Virginia, USA \\
       \addr $^2$Department of Economics, University of Virginia, USA\\
       \addr $^3$Department of Computer Science, University of Chicago, USA
    }


\maketitle

\begin{abstract}
Content creators compete for exposure on recommendation platforms, and such strategic behavior leads to a dynamic shift over the content distribution. However, how the creators' competition impacts user welfare and how the relevance-driven recommendation influences the dynamics in the long run are still largely unknown.

This work provides theoretical insights into these research questions. We model the creators' competition under the assumptions that: 1) the platform employs an innocuous top-$K$ recommendation policy; 2) user decisions follow the Random Utility model; 3) content creators compete for user engagement and, without knowing their utility function in hindsight, apply arbitrary no-regret learning algorithms to update their strategies. We study the user welfare guarantee through the lens of Price of Anarchy and show that the fraction of user welfare loss due to creator competition is always upper bounded by a small constant depending on $K$ and randomness in user decisions; we also prove the tightness of this bound. Our result discloses an intrinsic merit of the myopic approach to the recommendation, i.e., relevance-driven matching performs reasonably well in the long run, as long as users' decisions involve randomness and the platform provides reasonably many alternatives to its users.
\end{abstract}

\section{Introduction}

\begin{quote}
   \textit{``(Producers) are led by an invisible hand to make nearly the same distribution of the necessaries of life... thus without intending it, without knowing it, advance the interest of the society.''
   } 
    
    \hspace{35mm} --- Adam Smith, \emph{The Theory Of Moral Sentiments}, 1759. 
\end{quote}

Online recommendation platforms such as Instagram and YouTube have become prevalent in our daily life \citep{bobadilla2013recommender}. At the core of those platforms is a recommender system (RS) designed to match each user with the most relevant content based on predicted relevance. Such a practice, often referred to as the top-$K$ RS, is believed to improve user satisfaction and has served as a rule-of-thumb principle in both academia and industry for decades \citep{konstan1997grouplens,koren2009matrix,he2017neural}. 

Until recently, the community came to realize that users' utilities cannot be maximized unilaterally due to the potential strategic behaviors of content creators \citep{qian2022digital}. Because the content creators' utilities are directly tied to their content's exposure, they are motivated to adaptively maximize their own utilities. This leads to competition that may potentially be harmful to social welfare (defined as the total user satisfaction/engagement) \citep{fleder2009blockbuster}. For example, consider a scenario where the user population contains a large group of sports fans and a small group of travel enthusiasts. Social welfare is maximized when the available content for recommendation covers both topics. However, one possible equilibrium of the competition is that all creators post homogeneous sports content when the gain from creating niche content cannot compensate for the utility loss caused by abandoning the exposure from the majority of users.   It is thus urgent to understand in the long run how bad the social welfare loss could be under strategic content creators driven by a top-$K$ RS. 

In this work, we propose the \textit{\game{}} to model the impact of the creators' competition on user engagement in a top-$K$ RS.
We measure the social welfare guarantee through the lens of Price of Anarchy (PoA) \citep{koutsoupias1999worst}, which quantifies the inefficiency of selfish behavior by the ratio between the worst-case welfare value of the game's equilibrium and that of an optimal outcome. Some previous works touched upon this question under different competition models, and their answers are all pessimistic. For example, \citet{ben2018game} noticed that the PoA of social welfare under the RS implemented by a Shapley mediator is unbounded. \citet{ben2019recommendation} studied a competition model in $1$-dimensional space and showed that the PoA under the top-$1$ matching principle could be as bad as a constant $2$. These negative results are either based on a deterministic user choice model or assume creators compete for the shares of \emph{content exposure}. We overturn these pessimistic conclusions by showing that the PoA induced by a top-$K$ RS is at most $1 +  O( \frac{1}{ \log K})$  when  (1)  $K>1$,  (2) user choices have mild stochastic noises, and (3)   creators are incentivized to compete for \emph{user engagement} instead of content exposure. We also prove its  tightness   by analyzing a lower-bound instance.   Thus   an RS under these assumptions will approach the optimal efficiency  (i.e., PoA ratio approaches $1$) when    $K$ grows, though at a relatively slow rate of $1/\log K$. 
 Notably, our PoA upper bound also holds in dynamic settings where creators gradually learn to improve their strategies in an online fashion. Extensive synthetic and real-world data based simulations also support these theoretical findings. Overall, our results robustly demonstrate that content creation competitions are efficient under properly set incentives. This   echoes the famous insight of Adam Smith, as cited at the beginning of the section,  about the market's ``invisible hand'' on driving socially efficient  production of goods. 


Our results rely on three key assumptions, all of which find their roots in recommendation literature and practice. First, on the platform side, we assume the top-$K$ RS is based on a relevance function that best predicts user satisfaction if recommended content is consumed. To simplify our setting, we assume the true relevance function is known to the RS, since a tremendous amount of research has been spent on this aspect \citep{bobadilla2013recommender,konstan1997grouplens,koren2009matrix,he2017neural} and the goal of our study is not to improve its estimation. 
Second, on the user side, we employ the well-established \emph{Random Utility (RU) model} \citep{manski1977structure} to specify the distributional structure of a user's choices and resulting utility when presented with a list of recommendations. The RU model has been widely adopted and found its success in marketing research to model consumer choices \citep{baltas2001random}. 
Third, on the creator side, we assume that their utilities collected from matching their content with a user are proportional to the user's utility, as it is a common practice by platforms to set revenue sharing with content creators proportional to the user's satisfaction or engagement \citep{meta_experiment,savy2019,youtube2023,tiktok2022}. When we move on to the dynamic setting where the creators do not have oracle access to their utility functions, we allow creators to adopt arbitrary no-regret learning algorithms, which cover a variety of rational learning behaviors. 

An additional merit of our result is that our characterization of social welfare does not require any stability property (e.g., the existence or uniqueness of Nash equilibria) of the competition, i.e., it applies to the content distribution induced by any reasonable sequence of content creation strategies, regardless of whether or not the strategy sequence converges. While previous works \citep{ben2018game,ben2017shapley,ben2019recommendation} strive to establish a unique pure Nash equilibrium (PNE) guarantee in similar game-theoretic settings, we do not consider such a stability requirement crucial for the system's design for two reasons. 
First, as demonstrated in previous works \citep{ben2018game,ben2017shapley,ben2019recommendation}, the unique PNE 
is not guaranteed for top-$K$ RS in general. However, our main result indicates the social welfare can be ensured under top-$K$ RS regardless of the existence of a PNE, thus eliminating the need for tradeoffs between stability and the complexity of the recommendation algorithm. Second, even in cases where a PNE does exist, it is unclear how creators can achieve such a stable outcome in practice. For instance, \citet{ben2018game,ben2017shapley} showed that creators following best-response dynamic \citep{monderer1996potential} converge to the unique PNE in their game-theoretical setting, but it requires creators to have oracle access to their utility functions, which is unrealistic as creators can only evaluate the utility of their taken actions (i.e., bandit feedback) in practice. 
Therefore, we refrain from discussing stability and instead focus on characterizing the average social welfare under the evolving strategies of creators.

\section{Related Work}

The theoretical studies of content creators' strategic behavior under the mediation of an RS date back to the seminal works from \citet{ben2017shapley,ben2018game}, where they extended the game setting in search and ranking systems \citep{ben2015probability,raifer2017information} and proposed an RS based on Shapley value that leads to the unique PNE and several fairness-related requirements. However, they showed that the social welfare under the proposed Shapley mediator could be arbitrarily bad. 


Another line of work studies the RS with strategic content creators under the Hotelling's spatial competition framework \citep{hotelling19291929}. First introduced by \citet{hotelling19291929}, Hotelling's model studied two restaurants trying to determine their locations to attract customers who are evenly distributed on the segment $[0,1]$. The Nash equilibrium (NE) of the resulting game is that both restaurants locate at the center, known as the ``principle of minimum differentiation". Recently, \citet{shen2016hotelling} proposed a variant of Hotelling's competition in which each player has its attraction region, and they showed that the social welfare at the NE is half of the maximum possible social welfare in the worse case. We show that their game settings are special cases of our proposed \game{} in Appendix \ref{app:connection}, and thus our main result directly implies their bound. A more closely related work is from \citet{ben2019recommendation}, where they introduce the RS into the competition as a mediator who directs users to facilities. They studied mediators with different levels of intervention and proposed a limited intervention mediator with a good trade-off between social welfare and intervention cost. Interestingly, their game setting under a no-intervention mediator also turns out to be a special case of ours. We also note that the problem settings and theoretical discussions in both \citep{shen2016hotelling} and \citep{ben2019recommendation} are limited to pure strategy in 1-dimensional cases with a distance-induced user utility function, while our model and result apply to arbitrary dimensions and a generic form of user utility functions. 


Two recent works \citep{hron2022modeling,jagadeesan2022supply} studied the supply-side competition where the creators' strategy space is high dimensional. Their models assume creators directly compete for user exposure without considering the role of an RS. They focused on the characterization of NE and the identification of conditions under which specialization among creators' strategies may occur. In contrast, we study the social welfare under the impact of a top-$K$ RS without being limited to the existence of NE, and our result applies to general user utility functions.


Our user decision model stems from the RU model \citep{baltas2001random} in econometrics, which explains how an individual makes choices among a discrete set of alternatives. In the RU model, the utility that a decision maker could obtain from alternative $j$ is decomposed into $U_j=V_j+\epsilon_j$, where $V_j$ is the known parameterized part, and $\epsilon_j$ is the unknown stochastic part. The observed choice is then given by the alternative with the maximum utility. It is shown that if the unobserved stochastic utility follows the extreme value distribution (i.e., Gumbel distribution), then the choice probability is given by the logit formula, i.e., $P_j \propto \exp(V_j)$ \citep{luce1965preference}. In our work, we apply the RU model to explain how a typical user allocates her attention across the recommended list.

Finally, to analyze the equilibrium efficiency of the \game{}, we employed the standard framework of the price of anarchy (PoA). This originates from the seminal work of \cite{koutsoupias1999worst} and has since led to an extensive literature on understanding the efficiency of numerous strategic games. Our discussion by no means can do justice to this rich literature; here, we only mention the few works that are closely related to ours. Since NE is not guaranteed to exist in our problem with non-continuous agent utilities \citep{hron2022modeling}, it is thus crucial for us to consider a solution concept that is weaker than NE and thus to prove a stronger PoA bound. Specifically, we consider coarse correlated equilibrium (CCE). The PoA for CCE is first studied by \cite{blum2008regret}, who considered the efficiency of a dynamic setup with no-regret learners and coined the new notion of the price of total anarchy, which turns out to be equivalent to the PoA bound for CCE. This is precisely the question we want to address, but the structure of our new \game{} is significantly different from the games they studied, such as Hotelling's game on a graph and the valid utility game of  \cite{vetta2002nash}. Thus their techniques are not readily applicable to our problem. We instead employed a recent framework of \cite{roughgarden2015intrinsic} using the smoothness argument. It is well-known that this framework can yield strong PoA bound applicable to CCE. However, the bounds obtained by this powerful framework are usually not tight; so far, it is only known that it yields tight PoA bounds for linear cost congestion games \citep{christodoulou2005price}, second price auctions \citep{christodoulou2008bayesian}, and the valid utility games \citep{vetta2002nash}. Interestingly, We show that the smoothness argument also yields a tight PoA bound for our \game{} and thus register an additional member to this important list of games.     

\section{A Model of Content Creator Competition}

In this section, we formalize the \game{}. The game $\G$ is defined by a tuple $(\{\S_i\}_{i=1}^n, \X, \sigma, \beta, K)$ with the following ingredients:
\begin{enumerate}
    \item A finite set of users $\X=\{\x_j\in \RR^d\}_{j=1}^m$, and a set of players (i.e., content creators\footnote{We use these two terms interchangeably when there is no ambiguity. }) denoted by $[n]=\{1,\cdots,n\}$. Each player $i$ can take an action $\s_i$, often referred to as a \emph{pure strategy} in game-theoretic literature, from an action set\footnote{The action sets are not assumed to be finite and thus can be continuous.} $\S_i \subset \RR^d$. $\s_i$ can be understood as the embedding for the \emph{type} of content that creator $i$ can produce.
    Let $\S=\prod_{i=1}^n \S_i$ denote the space of joint strategies. As a convention, for any $\s=(\s_1,\cdots,\s_n) \in \S$, we use $\s_{-i}$ to denote the joint strategy $\s$ excluding $\s_i$. Moreover, we use $\al_i \in \Delta(\S_i)$ to denote a mixed strategy of player $i$, which is a probability measure with support $\S_i$. Similarly, $\al\in\Delta(\S)$ is used to represent a (possibly correlated) joint strategy profile distribution over all players.
    \item A relevance function $\sigma(\s, \x): \RR^d\times \RR^d \rightarrow \RR_{\geq 0}$ which measures the \emph{relevance} between a user $\x \in \X$ and content $\s$. Without loss of generality, we normalize $\sigma$ to $[0, 1]$, where $1$ suggests perfect matching. We focus on modeling the strategic behavior of creators and thus abstract away the estimation of $\sigma$. 
    \item Recommendation policy: given a joint strategy $\s=(\s_1,\cdots,\s_n)\in\S$ for all players, for each user $\x_j$, the RS first calculates the relevance scores $\{\sigma(\s_i,\x_j)\}_{i=1}^n$ over all available content and then generates $\T_j(\s;K)$, the subset of $\s$ containing the top-$K$ recommendations for user $j$. Formally, 
    \begin{equation}\label{eq:Tj}
        \T_j(\s;K) = \{\s_{l_i}|i=1,\cdots,K\},
    \end{equation}
    where $(l_i)_{i=1}^n$ is a permutation of $[n]$ such that $\sigma(\s_{l_1},\x_j)\geq \sigma(\s_{l_2},\x_j)\geq \cdots\geq \sigma(\s_{l_n},\x_j)$.\footnote{When $(l_i)_{i=1}^n$ is not unique, $\T_j(\s;K)$ can be the top-$K$ truncation of any such permutation with equal probability.} 
    \item User utility and choice model: we employ the widely adopted random utility (RU) model to capture users' utility and choices of recommendations. Formally, the RU model assumes that the utility for user $\x_j$ to consume content $\s_i$ is $\sigma(\s_i, \x_j)+\varepsilon_i$, where 
    $\varepsilon_i$ is a noise term containing any additional uncertainty that cannot be captured by the RS's prediction $\sigma(\s_i, \x_j)$  (e.g., user's mood at that moment). The RU model assumes that $ \{ \varepsilon_i \} $ are  i.i.d. random, which are often assumed to follow the \emph{Gumbel distribution} with cumulative distribution function Gumbel$(\mu, \beta) = e^{-e^{-\frac{x-\mu}{\beta}}}$.\footnote{There are many natural reasons to use the Gumbel noise model.  This noise model is nearly indistinguishable from a Gaussian distribution empirically, but has slightly thicker tails, allowing for more aberrant user behavior. The RU model with Gumbel noise is also known as the multinomial logit model \citep{mcfadden1974measurement}. It deeply connects to the discrete choice model \citep{mcfadden1984econometric}, quantal response equilibrium to capture bounded rational behaviors \citep{mckelvey1995quantal}, and entropy regularizer for optimizing randomized strategies \citep{ling2018game}. } We further assume $\varepsilon_i$ is zero mean, thus implying $\mu = -\beta\gamma$ where $\gamma\approx 0.577$ is the Euler–Mascheroni constant. The variance of Gumbel$(-\beta\gamma, \beta)$ is $\frac{\pi \beta}{\sqrt{6}}$  and the parameter $\beta$ measures the noise level.  
 
    Upon receiving the recommended list $\T_j(\s;K)$, user $j$ chooses $i_j^*\in \T_j(\s;K)$ that maximizes her utility:  
    \begin{equation}\label{eq:best-user-response}
        i_j^*=\arg\max_{\s_i\in \T_j(\s;K)} \{\sigma(\s_i, \x_j)+\varepsilon_i\}. 
    \end{equation}
    Note that $i^*_j$ is random, with randomness inherited from $\{ \varepsilon_i \}$. Consequently,  user $j$ derives the following expected utility $\pi_j$ from consuming the selected content \begin{equation}\label{eq:user_utility_original}
      \pi_j(\s) \triangleq  \EE_{\{ \varepsilon_i \} }\left[\max_{\s_i\in \T_j(\s;K)} \{\sigma(\s_i, \x_j)+\varepsilon_i\} \right].  
    \end{equation}
     
    \item Player utilities: following the convention, we assume that each player $i$'s expected utility is the sum of the utilities from users that $i$ served, i.e.,  
    \begin{equation}\label{eq:player_utility_original}
            u_i(\s)=\sum_{j=1}^m \EE[\sigma(\s_i, \x_j)+\varepsilon_i|\x_j \shortrightarrow \s_i]\cdot \Pr[\x_j  \shortrightarrow \s_i],
    \end{equation}
    
    where ``$\x_j\shortrightarrow \s_i$'' denotes the event $i_j^*$ defined in   \eqref{eq:best-user-response}  equals $i$. Elegantly, $\Pr[\x_j \shortrightarrow \s_i] \propto e^{\beta \sigma(\x_j, \s_i)}$ for any $i\in \T_j(\s;K)$ \citep{mcfadden1974measurement} and  $\Pr[\x_j \shortrightarrow \s_i]=0$ if $i\notin \T_j(\s;K)$.

    \item Social welfare: the social welfare function is defined as the total utilities from all the users: 
    \begin{equation}\label{eq:welfare_original}
        W(\s) = \sum_{j=1}^{m}\pi_j(\s).
    \end{equation}
    Note that under the player utility function \eqref{eq:player_utility_original}, we have $W(\s) = \sum_{i=1}^{n}u_i(\s)$. That is, the social welfare is also the total utility of players. 
\end{enumerate}

 
We remark that the player $i$'s utility defined in \eqref{eq:player_utility_original} depends on not only the proportion of users matched with $i$, but also the user's engagement  reflected in the term $\EE[\sigma(\s_i, \x_j)+\varepsilon_i|\x_j \shortrightarrow \s_i]$.  
This differs crucially from the settings in Hotelling's competition \citep{hotelling19291929} and its recent applications to recommender systems \citep{shen2016hotelling,ben2019recommendation,hron2022modeling,jagadeesan2022supply}, where players' utilities are set to the total \emph{user exposure}, i.e., total number or proportion of user visits (regardless of how satisfied the users are with the recommendations). Both metrics have been widely used in current industry practice to reward creators \citep{meta_experiment,savy2019}. In this paper, we primarily consider user engagement (i.e., the previously less studied case) as the creator's utility, and in Section \ref{sec:imp_poa} we will compare it with the \emph{user exposure} metric to highlight their different impact.   

\vspace{2mm}
\noindent 
\textbf{Our research question and equilibrium concept. } 
We are particularly interested in quantifying the average social welfare when creators learn to update their strategies adaptively. Specifically, we consider the repeated form of a \game{} played by $n$ creators over a period of time $T$. At each time $t$, each creator chooses an action, observes the utility induced by all creators' strategies at that round, and uses the feedback to adjust their subsequent actions. Naturally, creators aim to optimize their accumulated utility over the course of interactions. However, in real-world online recommendation platforms, creators can only evaluate the utility of their chosen actions and have to gradually learn their optimal strategies through trial and error with such limited information (i.e., bandit feedback). A natural notion for capturing the ``reasonable'' learning behavior under such environment is \emph{no regret}. The (external) regret $R_i(T)$ for player $i$ is defined as the difference between her optimal utility in hindsight and the realized accumulated utilities, i.e.,
\begin{equation}\label{def:regret}
\small R_i(T) =   \max_{\s'_i}  \sum_{t=1}^T\EE_{\s_{-i}\sim\al^t_{-i}}[u_i(\s'_i,\s_{-i})] - \sum_{t=1}^T\EE_{\s\sim\al^t}[u_i(\s)]  
\end{equation}
\normalsize
where $\al^t=\prod_{i=1}^n \al_i^t$ denotes the joint-strategy distribution at time $t$. 
Player $i$'s learning has no regret if $R_i(T) = o(T)$, or equivalently, the average regret $R_i(T)/T \to 0$ as $T$ goes to infinity. Note that such no-regret algorithms exist since any no-regret adversarial online learning algorithm (e.g., Exp3 in bandit literature \citep{auer2002nonstochastic}) guarantees no regret in such a multi-agent learning setup.


To characterize the outcome under no-regret learning players, we focus on an equilibrium concept termed coarse correlated equilibrium (CCE), as it is well known that the empirical action distribution of any no-regret playing sequence in a repeated game converges to its set of CCEs \citep{blum2008regret}. The formal definition of CCE is as follows: 
\begin{definition}\label{def:CCE}
    A \emph{coarse correlated equilibrium} (CCE) is a distribution $\al$ over the space of joint-strategy profile $\S$ such that for every player $i$ and every action $\s'_i\in\S_i$, 
    \begin{equation}
           \EE_{\s\sim\al}[u_i(\s)] \geq \EE_{\s\sim\al}[u_i(\s'_i,\s_{-i})].
    \end{equation}
\end{definition}

Thanks to the nice connection between no-regret dynamics and CCE, we first establish the welfare guarantee for CCE in Section \ref{sec:main_result} and then extend it to account for the accumulated welfare induced by repeated plays in Section \ref{sec:imp_poa}. 

We also note that the concept of CCE is particularly useful for two additional reasons. First, CCE always exists in any finite games (thus in our game), hence eliminating the necessity to address the existence of Nash equilibrium (NE), perhaps the most celebrated solution concept, as in previous research \citep{hron2022modeling}. In fact, when the action sets are continuous, the existence of NE (either pure or mixed) cannot be guaranteed in our game as the player utility function defined in  \eqref{eq:player_utility_original} is not continuous. This is an inherent challenge of the problem, as any change in $\sigma(\s, \x)$ may result in a different top-$K$ recommendation list $\T_j(\s;K)$, leading to dramatically different player utilities. Similar challenges and the non-existence of mixed NE have also been observed by \citet{hron2022modeling}, though their utility model and research questions differ from ours. Second, even in situations where NE exists, it is more realistic to assume that players eventually achieve some CCE rather than NE due to various criticisms about NE, including the computational concerns \citep{daskalakis2009complexity} of NE.



 \section{The Price of Anarchy Analysis} 

We analyze the social welfare of any top-$K$ RS under any possible CCE; or more specifically, \emph{how bad can the welfare possibly be due to the competition among self-interested content creators} -- compared to the \emph{idealized} non-strategic situation in which the platform can ``dictate'' all creators' content choices and thus globally optimize the welfare function \eqref{eq:welfare_original}. This can be captured by the celebrated concept of the Price of Anarchy (PoA) \citep{koutsoupias1999worst}. As its name indicates, PoA captures the welfare inefficacy due to players' selfish behavior. Our main result in this section is a comprehensive characterization of the PoA of \game{}s. 
\begin{definition}[PoA under CCE]
Define the price of anarchy of a game $\G$ as 
\begin{equation}\label{eq:poa_def}
        PoA(\G)=\frac{\max_{\s\in\S}W(\s)}{\min_{\al\in CCE(\G)}\EE_{\s\sim \al}[W(\s)]},
\end{equation}
where CCE$(\G)$ is the set of CCEs of $\G$.  
\end{definition}

By definition, PoA$(\G)\geq 1$ always holds and larger values indicate worse welfare. Our choice of the CCE concept leads to the strongest possible welfare guarantee in the sense that any upper bound of PoA under CCE also trivially holds for the PoA under refined solution concepts such as correlated equilibrium (CE), PNE or mixed NE (if they exist), since these are all CCEs as well. Unless otherwise emphasized, any PoA in this paper refers to the PoA under CCE.

\subsection{Matching PoA Upper and Lower Bounds}\label{sec:main_result}

Our main theoretical findings are an upper and lower bound for the PoA, which match with each other and thus demonstrates the tightness of our analysis. We first present the upper bound as follows.

\begin{theorem}\label{thm:PoA}
The PoA of any \game{} instance $\G$ with parameter  $\beta\geq0$ and $K\geq 1$  satisfies 
\begin{equation}\label{eq:PoA_main}
    PoA(\G) < 1+ \frac{1}{c(\beta,K)},
\end{equation}
where $c(\beta, K)$ is defined as

\begin{align}\label{eq:cbetaK_def}
c(\beta, K)= \frac{(b+1)\log(b+K)}{(b+K)(\log(b+K)-\log K)}, b=e^{\frac{1}{\beta}}-1.
\end{align}    

\end{theorem}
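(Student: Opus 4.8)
The plan is to apply the smoothness framework of \cite{roughgarden2015intrinsic}: it suffices to exhibit constants $\lambda > 0$ and $\mu < 1$ such that for every joint strategy $\s$ and every welfare-optimal profile $\s^* = \argmax_{\s'} W(\s')$, the inequality $\sum_{i=1}^n u_i(\s_i^*, \s_{-i}) \geq \lambda W(\s^*) - \mu W(\s)$ holds. Once this $(\lambda,\mu)$-smoothness is established, the standard argument gives $PoA(\G) \leq \frac{1}{\lambda - \mu}$ for every CCE, so we want to optimize $\lambda,\mu$ to force $\frac{1}{\lambda-\mu}$ below $1 + \frac{1}{c(\beta,K)}$, i.e., $\lambda - \mu \geq \frac{c(\beta,K)}{c(\beta,K)+1}$.

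First I would fix a user $\x_j$ and analyze the single-user contribution to the deviation term. When creator $i$ unilaterally switches to $\s_i^* $, creator $i$'s content may enter the top-$K$ list $\T_j(\s_i^*, \s_{-i}; K)$ of user $j$, displacing the current $K$-th ranked item; so creator $i$ captures at least the logit-weighted engagement that $\x_j$ would give to $\s_i^*$ when competing only against the top-$(K-1)$ incumbents plus itself. The key quantity to control is the ratio between (a) the utility $\x_j$ contributes to $\s_i^*$ in this deviation scenario and (b) $\pi_j(\s^*)$, the utility $\x_j$ contributes at the optimum. Here the structure of the RU/Gumbel model is essential: the expected max utility over a menu has the closed form $\beta \log\!\big(\sum_{l} e^{\sigma(\s_l,\x_j)/\beta \cdot \beta}\big)$ type expression (a log-sum-exp / softmax potential), and the per-item engagement $\EE[\sigma + \varepsilon_i \mid \x_j \shortrightarrow \s_i]$ can likewise be written in terms of these softmax weights. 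Summing the deviation bound over the optimal assignment (each user $\x_j$ is "served" by some creator at $\s^*$, and that creator can be the deviator) and over creators, I would extract the worst-case ratio, which is where the quantities $b = e^{1/\beta}-1$ and the combination $\frac{(b+1)\log(b+K)}{(b+K)(\log(b+K)-\log K)}$ should emerge — $b+K$ counting an effective "menu size" after normalizing the relevance scale, and the $\log(b+K) - \log K$ gap measuring the marginal value of a $K$-item menu over... the relevant comparison menu.

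The main obstacle I anticipate is two-fold. First, one must handle the combinatorial coupling: the top-$K$ list at $\s^*$ need not contain the creator who is best positioned to deviate, and different users' optimal servers may coincide, so the "one deviation per user" accounting has to be done carefully — presumably by charging each optimal user-creator match to that creator's best single deviation and arguing the charges don't conflict, or by a clever averaging over which creator deviates. Second, and more technically delicate, is proving that the worst case of the per-user ratio is actually attained at a boundary configuration (all relevance scores pushed to $0$ or $1$, and exactly $K$ competitors), so that the bound $c(\beta,K)$ is the true minimum of a continuous optimization over relevance vectors; this likely requires monotonicity/convexity properties of the log-sum-exp expression in each coordinate. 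I would structure the proof as: (i) a lemma giving the closed-form single-user utilities and per-item engagements under Gumbel noise; (ii) a deviation lemma lower-bounding $u_i(\s_i^*,\s_{-i})$ user-by-user; (iii) an optimization lemma identifying the extremal relevance configuration and evaluating the ratio to get $c(\beta,K)$; (iv) assembling (i)–(iii) into $(\lambda,\mu)$-smoothness and invoking \cite{roughgarden2015intrinsic}. The strict inequality in \eqref{eq:PoA_main} (as opposed to $\leq$) should follow because the extremal configuration is a limit not attained by any actual instance, or because one of the inequalities in the chain is strict for finite $K,\beta$.
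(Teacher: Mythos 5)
You have the right high-level strategy (Roughgarden smoothness, closed forms under Gumbel noise, a per-user deviation bound, an extremal analysis that puts relevance scores at the boundary to produce $c(\beta,K)$), and your items (i) and (iii) do match the paper's Lemma \ref{lm:closed_form_utility} and its monotonicity analysis of $f(x,y)=\frac{(x+1)\log(x+y)}{(x+y)(\log(x+y)-\log y)}$. But there are two genuine gaps. First, your smoothness-to-PoA conversion is wrong: for a welfare-maximization game, $(\lambda,\mu)$-smoothness in the form $\sum_i u_i(\s_i^*,\s_{-i})\geq \lambda W(\s^*)-\mu W(\s)$ yields $PoA\leq \frac{1+\mu}{\lambda}$, not $\frac{1}{\lambda-\mu}$. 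The paper proves the game is $(c(\beta,K),c(\beta,K))$-smooth, which gives $\frac{1+c}{c}=1+\frac{1}{c}$; under your formula the same parameters give $1/0$, and your target $\lambda-\mu\geq \frac{c}{c+1}$ would send you hunting for smoothness parameters (e.g.\ $\lambda=1$) that the game is not shown to satisfy. This is not cosmetic — it determines which inequality you must actually prove.

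Second, the ``combinatorial coupling'' you flag as the main obstacle is exactly the step you leave unresolved, and the paper's resolution is not a per-user charging scheme over optimal user--creator matches. Instead it treats $W$ as a \emph{set function} of the posted content and proves two structural facts: $W$ is monotone (Proposition \ref{prop:monotone_V}) and submodular (Lemma \ref{lm:submodular_V_}). Submodularity gives $W([\s_i^*,\s_{-i}])-W(\s_{-i})\geq W(\{\s_1^*,\dots,\s_i^*\}\cup S)-W(\{\s_1^*,\dots,\s_{i-1}^*\}\cup S)$ for each $i$; summing telescopes to $W(\s^*\cup\s)-W(\s)\geq W(\s^*)-W(\s)$ by monotonicity, with no need to decide which creator ``serves'' which user at the optimum. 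The per-creator deviation bound is then phrased as a comparison of creator $i$'s utility against the \emph{marginal welfare} of adding $\s_i^*$ to $\s_{-i}$ (Lemma \ref{lm:utility_property_}: $u_i(\s_i^*;\s_{-i})\geq c(\beta,K)\,[W([\s_i^*,\s_{-i}])-W(\s_{-i})]$), which reduces per user to the one-dimensional inequality $\frac{v_1}{v_1+z}\log(v_1+z)\geq c(\beta,K)\log\frac{v_1+z}{v_1'+z}$ and then to the extremal analysis you anticipated. Without the submodularity/telescoping idea (or a worked-out substitute for it), your outline does not yet close.
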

The proof of Theorem \ref{thm:PoA} is intricate and thus the detailed arguments are relegated to Appendix \ref{app:poa_proof}. The primary challenge in the proof is to analyze various smoothness properties of the welfare and players' utility functions, especially how the welfare function changes after excluding any player $i$'s participation. In Section \ref{subsec:proof-sketch}, we highlight some of the noteworthy properties of the welfare function, including its submodularity, which we develop en route to proving Theorem \ref{thm:PoA} but is also of independent merit towards understanding the \game{}.   
 
The format of $c(\beta, K)$ may not be intuitive enough for the readers to appreciate the derived PoA upper bound. We thus provide the following observations, which reveal various properties of $c(\beta, K)$ aiding the interpretation of \eqref{eq:PoA_main}:
\begin{enumerate}
    \item For any $\beta>0$ and $K\geq 1$, we have $c(\beta, K)\geq 1$ and thus PoA$(\G) < 2$ always holds.

    \item $c(\beta, K)=1$ if and only if $K=1$ or $\beta \rightarrow 0$.
    
    \item Fix any $\beta>0$, $c(\beta, K)$  monotonically increases in $K$; similarly, fix any $K\geq 1$, $c(\beta, K)$ monotonically increases in $\beta$.
    
    \item For sufficiently large $\beta$ and $K$, $c(\beta, K) \approx (1+\beta)\log K$ asymptotically, and therefore 
    \begin{equation}\label{eq:PoA_upper}
        PoA(\G) < 1+\frac{1}{(1+\beta)\log K}.
    \end{equation}
\end{enumerate}

Based on these observations, Theorem \ref{thm:PoA} has multiple interesting and immediate implications. First, the welfare loss under any CCE is at most half in any situation, as the PoA is always upper bounded by 2. The second and third facts above show that such worst-case PoA occurs and only occurs when users' choices are made in a ``hard" manner: either the RS dictates the user's choice by setting $K=1$ or the randomness in users' choices is extremely low (i.e., $\beta \rightarrow 0$). Note that in the latter case, the user will only consume the most relevant content (i.e., the top-ranked content) due to small decision randomness.   

Second, the welfare guarantee improves as either $K$ increases (i.e., more items are recommended) or $\beta$ increases (i.e., users' choices have more randomness). Welfare improvement in the latter situation is intuitive because when supplied with multiple items, the user can pick the content with large $\varepsilon_i$ (i.e., the reward component that is not predictable by the RS) to gain utility. These together reveal an interesting operational insight that when the RS cannot perfectly predict user utility (i.e., $\beta > 0$), providing more items can help improve social welfare. This justifies top-$K$ recommendation and the necessity of diversity in recommendation \citep{hurley2011novelty}.   


Our following second main result shows that this PoA upper bound is tight, up to negligible constants. 


\begin{theorem}\label{thm:PoA_lowerbound}
  Given any $0\leq\beta\leq 1 $, $n > 2$ and any $1\leq K\leq \min\{n-1, e^{\frac{1}{5\beta}}\}$, there exists a \game{} instance $\G(\{\S_i\}_{i=1}^n, \X, \sigma, \beta, K)$ such that
    \begin{equation}
        PoA(\G)> \frac{n-1}{n}+\frac{1}{1+5\beta\log K}.
    \end{equation}
\end{theorem}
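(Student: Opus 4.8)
The plan is to build an instance in which a pure Nash equilibrium has all creators crowding onto the content type of a single ``dominant'' user group, while the welfare‑optimal profile spreads the creators so that every group is served; the dominant group is sized exactly at the threshold where no creator gains by peeling off to a niche, and this threshold pins down the welfare ratio.

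Concretely, fix $n>2$ and $1\le K\le\min\{n-1,e^{1/(5\beta)}\}$, write $b=e^{1/\beta}-1$, $L=\beta\log K$, $P=\beta\log(b+K)$, $\Delta=P-L=\beta\log\frac{b+K}{K}$. Take $n$ distinct points $\s^{(1)},\dots,\s^{(n)}\in\RR^d$, let every action set be $\S_i=\{\s^{(1)},\dots,\s^{(n)}\}$, and set $\sigma(\s^{(k)},\x)=\mathbf{1}\{\x=\s^{(k)}\}$. The user population consists of $\nu_1$ copies of $\s^{(1)}$ and $\mu=\lfloor \nu_1(1+L)/(n\,c(\beta,K)\Delta-L)\rfloor$ copies of each of $\s^{(2)},\dots,\s^{(n)}$, with $\nu_1$ taken large (the denominator is positive since $c(\beta,K)\ge1$, $\Delta\ge1-L>0$ and $n\ge3$). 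I would then verify that the profile $\s^\star$ in which every creator plays $\s^{(1)}$ is a PNE, hence a CCE. Under the log‑sum‑exp form of the RU utility, $\s^\star$ gives the dominant group $\pi=1+L$ and each niche group $\pi=L$, so by symmetry every creator earns $\tfrac1n W(\s^\star)$; a deviation to any $\s^{(j)}$, $j\ge2$, is excluded from the dominant group's top‑$K$ list (there are $n-1\ge K$ relevance‑$1$ items ahead of it), captures the $j$‑th niche group where the deviator is chosen with probability $\tfrac{b+1}{b+K}$ and the group's $\pi$ rises to $P$, and changes nothing in the other $n-2$ niche groups. Using that in a Gumbel/logit system the expected realized utility conditional on any particular item being chosen equals the unconditional expected maximum, the deviation payoff is $\mu\,\tfrac{b+1}{b+K}P+(n-2)\mu\tfrac Ln=\mu\,c(\beta,K)\Delta+(n-2)\mu\tfrac Ln$ (the identity $\tfrac{b+1}{b+K}P=c(\beta,K)\Delta$ is immediate from the definition of $c$), so the no‑deviation inequality collapses exactly to $\mu\,(n\,c(\beta,K)\Delta-L)\le\nu_1(1+L)$, which holds by the choice of $\mu$.

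Given this, I would bound the two sides of the PoA ratio: from the equilibrium, $\min_{\al\in CCE(\G)}\EE_{\s\sim\al}[W(\s)]\le W(\s^\star)=\nu_1(1+L)+(n-1)\mu L$; and evaluating $W$ at the perfect‑matching profile (creator $i$ at $\s^{(i)}$), every group receives one relevance‑$1$ item and $K-1$ relevance‑$0$ items, so $\max_{\s}W(\s)\ge P\,(\nu_1+(n-1)\mu)$. Letting $\nu_1\to\infty$ so that $\mu/\nu_1\to (1+L)/(n\,c(\beta,K)\Delta-L)$ and simplifying, this gives
\[
PoA(\G)\;\ge\;\frac{P\bigl(n\,c(\beta,K)\Delta+(n-1)+(n-2)L\bigr)}{(1+L)\bigl(n\,c(\beta,K)\Delta+(n-2)L\bigr)}\;=\;\frac{P}{1+L}\Bigl(1+\frac{n-1}{n\,c(\beta,K)\Delta+(n-2)L}\Bigr).
\]

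The main obstacle is the last step: showing this expression exceeds $\frac{n-1}{n}+\frac{1}{1+5L}$. This is exactly where the hypothesis $K\le e^{1/(5\beta)}$ (equivalently $5L\le1$) is used --- it forces $\log K\le\frac15\log(b+K)$, hence $1\le c(\beta,K)\le\frac1{1-L}$, and, since $K\ll e^{1/\beta}$, it forces $b+K$ to be comparable to $b+1$, hence $P$ and $c(\beta,K)\Delta$ to lie within a small additive constant of $1$. Inserting these ranges (together with $n\ge3$) reduces the theorem to a short elementary inequality in $L\in[0,\tfrac15]$ and $n\ge3$; the factor $5$ is precisely the slack that absorbs the $O(1)$ gaps among $P$, $c(\beta,K)\Delta$ and $1$ as well as the $\frac{n-1}{n}$‑versus‑$1$ discrepancy. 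The cases $L\to0$ (in particular $K=1$) and $\beta\to0$ are the tight ones, where for this construction the bound degenerates to an equality; there I would either accept the asymptotic tightness or restore strictness with a minor modification of the instance (e.g.\ a slightly better reference profile that places $K$ creators on $\s^{(1)}$, or an additional negligible gadget group).
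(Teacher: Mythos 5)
Your construction is essentially the paper's: the same indicator relevance function over $n$ content types, one dominant user group plus $n-1$ niche groups, the all-creators-on-type-$1$ profile as a PNE, and the perfect-matching profile as the welfare benchmark. Your deviation analysis (the deviator is locked out of the dominant group's top-$K$ because $n-1\ge K$ relevance-$1$ items remain; the identity $\frac{b+1}{b+K}P=c(\beta,K)\Delta$; the resulting threshold on the niche-to-dominant size ratio) is correct and matches the paper's, as is your appeal to the fact that the conditional engagement equals the unconditional expected maximum. The only substantive difference is parameterization: you push the niche group size to the exact PNE threshold $\mu/\nu_1=(1+L)/(n\,c(\beta,K)\Delta-L)$, whereas the paper fixes the ratio at $a/n=(1+L)/n$ (checking it is below the threshold via $c(\beta,K)\Delta<1$); your choice gives a weakly stronger ratio but a messier closing expression.

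The gap is the last step. The theorem asserts a strict inequality, yet you leave the reduction to ``a short elementary inequality in $L$ and $n$'' unverified, and you concede that at $K=1$ (and as $\beta\to0$) your bound degenerates to an equality, offering only to ``accept asymptotic tightness'' or modify the instance --- neither of which is carried out. This is precisely the part the paper does explicitly: with $a=1+L\in[1,1.2]$ and $t=\frac{n-1}{n}$ it rewrites the ratio as $\frac{n-1}{n}+\frac{1-t^2a(a-1)}{a+ta(a-1)}$ and verifies $\frac{1-t^2a(a-1)}{a+ta(a-1)}>\frac{1}{5a-4}$ by checking $t^2a(5a-4)+ta<a(5a-4)+a<4$. (To be fair, that verification also collapses to an equality at $a=1$, i.e. $K=1$, so the boundary strictness issue you flag is present in the paper's own argument as well; for $K\ge 2$ your route would go through once the elementary inequality is actually checked.) As written, your proposal delivers the stated bound only with ``$\ge$'' in the boundary cases and leaves the generic case's arithmetic undone, so it is an essentially correct blueprint rather than a complete proof.
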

This theorem also implies that the argument we employed for Theorem \ref{thm:PoA}, which is based on the \emph{smoothness} proof developed by \citet{roughgarden2015intrinsic}, yields a tight PoA bound for our proposed game. The tightness of the smoothness argument is itself an intriguing research question. Only three classes of games are known to enjoy a tight PoA bound derived from the smoothness argument: congestion games with affine cost \citep{christodoulou2005price}, second price auctions \citep{christodoulou2008bayesian}, and the valid utility game \citep{vetta2002nash},  which are all fundamental classes of games. Theorem \ref{thm:PoA_lowerbound} suggests that our \game{} subscribes to this list. The proof of Theorem \ref{thm:PoA_lowerbound} is to explicitly construct a game instance which provably yields the stated PoA lower bound (see Appendix \ref{app:poa_lower_proof}).



\subsection{Implications of the PoA Bounds} \label{sec:imp_poa}

We have discussed some direct implications of  Theorem \ref{thm:PoA}. Now we develop new results which are either derived from or can be compared to Theorem \ref{thm:PoA} and \ref{thm:PoA_lowerbound}. They will reveal additional insights from our main theoretical results.

\vspace{2mm}
\noindent 
\textbf{Welfare implications to \emph{learning} content creators. }  
The PoA bounds presented in Theorem \ref{thm:PoA} and \ref{thm:PoA_lowerbound} are based on the assumption that creators are aware of the game parameters and play some CCEs of the game. While CCE is a reasonable equilibrium concept, one potential critique is that to find the CCE, it is assumed that each creator has knowledge about the system parameters (e.g., all other creators' strategies and the $\sigma$ function), which can be unrealistic. Fortunately, in real-world scenarios where creators utilize no-regret algorithms to play a repeated \game{} with bandit feedback, we can still establish a slightly worse PoA upper bound leveraging the fact that the average strategy history of no-regret players converges to a CCE, as shown in the following Corollary \ref{coro:noregret_CCE}.

\begin{corollary}\label{coro:noregret_CCE}[\textbf{Dynamic Version of Theorem \ref{thm:PoA}}]
Suppose each player in a repeated \game{}  $\G(\{\S_i\}_{i=1}^n, \X, \sigma, \beta, K)$ independently executes some no-regret learning algorithm, with worst   regret $R(T) = \max_i R_i(T)$ as defined in \eqref{def:regret}. Then we have
\begin{equation}\label{eq:PoA_noregret}
\frac{\max_{\s\in\S}W(\s)}{\frac{1}{T}\sum_{t=1}^T \EE_{\s \sim \al^t}[W(\s)]}
< 1+\Big(1+\frac{n}{\beta\log K}\cdot \frac{R(T)}{T}\Big)\cdot \frac{1}{c(\beta,K)}, 
\end{equation}
where $\al^t$ denotes the joint-strategy distribution at step $t$ and $c(\beta,K)$ is the constant defined in  \eqref{eq:cbetaK_def}.
\end{corollary}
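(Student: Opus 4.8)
The plan is to run the standard no‑regret (\emph{price of total anarchy}) extension of the smoothness-based proof of Theorem~\ref{thm:PoA}, in the spirit of \citet{blum2008regret} and \citet{roughgarden2015intrinsic}. The static bound $PoA(\G)<1+1/c(\beta,K)$ comes from a Roughgarden-type smoothness inequality, so I would start from the pointwise inequality that the proof of Theorem~\ref{thm:PoA} establishes: there are constants $\lambda,\mu\ge 0$ with $\lambda=c(\beta,K)$ and $\frac{1+\mu}{\lambda}=1+\frac{1}{c(\beta,K)}$ such that, writing $\s^{*}\in\argmax_{\s\in\S}W(\s)$ for a welfare-optimal profile,
\begin{equation}\label{eq:smoothsketch}
\sum_{i=1}^{n} u_i(\s^{*}_i,\s_{-i})\;\ge\;\lambda\,W(\s^{*})-\mu\,W(\s)\qquad\text{for every }\s\in\S .
\end{equation}
Since \eqref{eq:smoothsketch} is a statement about a single profile, it continues to hold at every round of the repeated game; the only change from the static argument is that the CCE stationarity condition (which there upper bounds $\sum_i\EE_{\al}[u_i(\s^{*}_i,\s_{-i})]$ by $\EE_{\al}[W(\s)]$) must be replaced by the no-regret guarantee, at the price of an additive regret term.

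Carrying this out: (i) take expectations in \eqref{eq:smoothsketch} over $\s\sim\al^{t}$ at each round, using that $u_i(\s^{*}_i,\s_{-i})$ is independent of the $i$-th coordinate so that $\EE_{\s\sim\al^{t}}[u_i(\s^{*}_i,\s_{-i})]=\EE_{\s_{-i}\sim\al^{t}_{-i}}[u_i(\s^{*}_i,\s_{-i})]$; (ii) sum over $t=1,\dots,T$; (iii) apply the regret definition \eqref{def:regret} with the fixed deviation $\s'_i=\s^{*}_i$, giving $\sum_t\EE_{\s_{-i}\sim\al^{t}_{-i}}[u_i(\s^{*}_i,\s_{-i})]\le R_i(T)+\sum_t\EE_{\s\sim\al^{t}}[u_i(\s)]\le R(T)+\sum_t\EE_{\s\sim\al^{t}}[u_i(\s)]$ for each $i$; (iv) sum over $i$ and use $\sum_i u_i\equiv W$. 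These combine to
\[
\lambda\,T\,W(\s^{*})\;\le\;(1+\mu)\sum_{t=1}^{T}\EE_{\s\sim\al^{t}}[W(\s)]+n\,R(T),
\]
so, writing $\bar W:=\frac1T\sum_{t=1}^{T}\EE_{\s\sim\al^{t}}[W(\s)]$ and dividing by $\lambda T\bar W$,
\[
\frac{W(\s^{*})}{\bar W}\;\le\;\frac{1+\mu}{\lambda}+\frac{n\,R(T)}{\lambda\,T\,\bar W}\;=\;1+\frac{1}{c(\beta,K)}+\frac{n\,R(T)}{c(\beta,K)\,T\,\bar W}.
\]

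To finish I would lower bound $\bar W$ by $\beta\log K$, using the closed form of the per-user welfare under Gumbel noise, $\pi_j(\s)=\beta\log\!\big(\sum_{i\in\T_j(\s;K)}e^{\sigma(\s_i,\x_j)/\beta}\big)$: since $\sigma\ge 0$ each summand is at least $1$ while $|\T_j(\s;K)|=K$ (the model requires $n\ge K$), so $\pi_j(\s)\ge\beta\log K$ and hence $W(\s)\ge m\,\beta\log K\ge\beta\log K$ for every $\s$, giving $\bar W\ge\beta\log K$. Plugging this into the last display bounds the regret term by $\frac{1}{c(\beta,K)}\cdot\frac{n}{\beta\log K}\cdot\frac{R(T)}{T}$, which is exactly the claimed inequality (with the strict sign inherited from Theorem~\ref{thm:PoA}).

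I expect the one genuinely delicate point to be bookkeeping the smoothness constants out of the (intricate) proof of Theorem~\ref{thm:PoA}: the factor $1/c(\beta,K)$ multiplying $R(T)/T$ in the corollary is precisely $1/\lambda$, so the argument needs \eqref{eq:smoothsketch} in the specific normalization $\lambda=c(\beta,K)$ rather than a rescaled $(\lambda,\mu)=(1,1/c(\beta,K))$ version; black-box knowledge of ``$PoA<1+1/c(\beta,K)$'' alone would not pin down the stated regret coefficient. A minor secondary issue is the degenerate regime $\beta=0$ or $K=1$, where $\beta\log K=0$: there the added term should be read as absent and the bound collapses to the static one of Theorem~\ref{thm:PoA}, which already subsumes $R(T)=o(T)$.
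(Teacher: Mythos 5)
Your proposal is correct and follows essentially the same route as the paper's proof in Appendix~\ref{app:poa_proof_regret}: both invoke the pointwise smoothness inequality $\sum_i u_i(\s^*_i;\s_{-i}) > c(\beta,K)\,[W(\s^*)-W(\s)]$ established en route to Theorem~\ref{thm:PoA}, replace the CCE condition by the no-regret guarantee with deviation $\s^*_i$ (incurring the additive $nR(T)/T$ term), and then absorb that additive term multiplicatively via the uniform lower bound $W(\s)\geq\beta\log K$. Your closing remarks on the normalization $\lambda=\mu=c(\beta,K)$ and the degenerate $\beta\log K=0$ regime are accurate but not needed beyond what the paper already does.
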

In other words, the average welfare across all rounds $\frac{1}{T}\sum_{t=1}^T \EE_{\s \sim \al^t}[W(\s)]$ is close to the maximum possible welfare $\max_{\s\in\S}W(\s)$, up to a constant factor. The quantity in the LHS of \eqref{eq:PoA_noregret} is also known as the ``price of total anarchy"  \cite{blum2008regret}. It is a substitute for PoA when we want to characterize the welfare of an outcome from repeated play which does not necessarily fall into any equilibrium concept. The proof of Corollary \ref{coro:noregret_CCE} is presented in Appendix \ref{app:poa_proof_regret}. Because $R(T)/T \rightarrow 0$ as $T \rightarrow \infty$ for any no-regret algorithm (most no-regret algorithms have $R(T) = O(\sqrt{T})$), the RHS of \eqref{eq:PoA_noregret} is still strictly less than 2 for any fixed constants $(n,\beta,K)$. 

Theorem \ref{thm:PoA} relies on two crucial platform features: 1. the player's utility in \eqref{eq:player_utility_original} is defined as the total user engagement that accounts for the user utility $\sigma(\s_i, \x_j) + \varepsilon_j$, as opposed to just ``user exposure'' (i.e., the expected total number of matches); 2. the platform uses the top-$K$ recommendation policy. Next, we illustrate the insights revealed from Theorem \ref{thm:PoA} with respect to these two key features.

\vspace{2mm}
\noindent
\textbf{The importance of rewarding user engagement rather than solely exposure. } 
 A key reason for the nice PoA guarantee in our \game{} is each player $i$'s  utility is chosen as the user engagement $\sum_j \EE[\sigma(\s_i, \x_j)+\varepsilon_i|\x_j \shortrightarrow \s_i]   \Pr(\x_j \shortrightarrow \s_i)$  in \eqref{eq:player_utility_original}, while not the following user exposure metric: 
 \begin{equation}\label{eq:player_utility_new}
     \text{User exposure for player }i: \quad  \sum_{j=1}^m  \Pr(\x_j \shortrightarrow \s_i).  
 \end{equation}   

Our next result shows that incentivizing creators to maximize user exposure can lead to significantly worse welfare.

\begin{proposition}\label{prop:unbound_user_welfare}
  Let $\tilde{\G}$ denote the variant of the \game{} $\G =(\{\S_i\}_{i=1}^n, \{\x_j\}_{j=1}^m, \sigma, \beta, K)$   by substituting player utility function in  \eqref{eq:player_utility_original} by the above \emph{user exposure} in \eqref{eq:player_utility_new}.  Then for any $K\geq 1, 0\leq\beta\leq \min\{0.14, \frac{1}{5\log K} \}$, there exist $\G$ and $\tilde{\G}$ such that 
    \begin{equation}\label{eq:poa_geq2}
        PoA(\tilde{\G})>2> PoA(\G).
    \end{equation}
     Moreover, when $K=1$ or $\beta$ approaches $0$, PoA$(\tilde{\G})$ can be arbitrarily large.
\end{proposition}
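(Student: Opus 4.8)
The plan is to build a single instance on which we compare the engagement game $\G$ and its exposure variant $\tilde\G$: the inequality $2>PoA(\G)$ is immediate from Theorem~\ref{thm:PoA} (since $c(\beta,K)\ge 1$ in the stated range, indeed $c(0,K)=1$), so the real work is to construct an instance of $\tilde\G$ admitting a coarse correlated equilibrium --- in fact a pure Nash equilibrium --- whose welfare is below half of the optimum. The phenomenon to exploit is that the exposure objective rewards a creator for being \emph{picked}, not for \emph{matching well}: this makes a ``clumping'' profile, in which all creators produce the same content that is only mildly relevant to everyone, individually stable while being socially wasteful.

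Concretely, take $n=K+1$ creators, $n$ users at distinct points $p_1,\dots,p_n\in\RR^d$ (one user at each $p_i$), and one extra ``central'' point $\s_c$ at which no user sits; let every action set be the finite set $\S_i=\{p_1,\dots,p_n,\s_c\}$ (finite action sets are permitted by the model), and pick any relevance function $\sigma$ with $\sigma(p_i,p_i)=1$, $\sigma(p_i,p_j)=0$ for $i\ne j$, and $\sigma(\s_c,p_i)=\delta$ for all $i$, where $\delta\in(0,1)$ is a small parameter fixed later. The candidate equilibrium of $\tilde\G$ is the profile $\s^{\sharp}$ in which every creator plays $\s_c$. In $\s^{\sharp}$ each user sees a uniformly random $K$-subset of the $n=K+1$ identical items (all of relevance $\delta$), so by symmetry each creator is chosen by each user with probability $1/n$, hence has exposure exactly $1$. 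If a creator deviates to some $p_i$, it becomes the unique relevance-$1$ item for that single user and captures at most all of it, but for every other user its relevance falls to $0$ while the remaining $n-1=K$ creators still at $\s_c$ have relevance $\delta>0$ and already fill the entire top-$K$ list, so the deviator is absent from those lists and earns nothing there. Hence any deviation gives exposure $\le 1$, so $\s^{\sharp}$ is a (possibly weak, when $K=1$) pure Nash equilibrium of $\tilde\G$, and therefore a CCE.

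It remains to compare welfares. In $\s^{\sharp}$ every user consumes content of relevance $\delta$ and, by the Gumbel identity that the expected maximum of $K$ i.i.d.\ mean-zero Gumbel noises of scale $\beta$ equals $\beta\log K$, derives utility $\delta+\beta\log K$; thus $W(\s^{\sharp})=n(\delta+\beta\log K)$, so the worst-CCE welfare of $\tilde\G$ is at most this. Placing creator $i$ at $p_i$ puts every user's relevance-$1$ item in its top-$K$ list and gives each user utility at least $1$, so $\max_{\s}W(\s)\ge n$. Therefore $PoA(\tilde\G)\ge \frac{1}{\delta+\beta\log K}$. Under the hypothesis $\beta\log K\le \tfrac15$ (and $\beta\log K=0$ when $K=1$), so choosing e.g.\ $\delta=\tfrac1{10}$ yields $\delta+\beta\log K\le\tfrac3{10}<\tfrac12$, hence $PoA(\tilde\G)>2>PoA(\G)$; and letting $\delta\to0$ (together with $\beta\to0$ when $K>1$) drives $\delta+\beta\log K\to0$, so $PoA(\tilde\G)$ is unbounded whenever $K=1$ or $\beta\to0$, which is the ``moreover'' clause.

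The step I expect to be the main obstacle is the middle one: rigorously verifying that $\s^{\sharp}$ is a Nash equilibrium of $\tilde\G$. This needs care with the top-$K$ truncation under ties (uniform random $K$-subsets when relevances coincide), with the strictly positive multinomial-logit choice probabilities among the items actually shown, and with the boundary case $K=1$, where deviating to a $p_i$ exactly matches the equilibrium payoff so $\s^{\sharp}$ is only a weak Nash equilibrium --- still a valid CCE. One also checks that no deviation to a point other than some $p_i$ can help; with the finite action sets above this is automatic, since every such point has relevance $0$ to all users and is dominated by the relevance-$\delta$ clump.
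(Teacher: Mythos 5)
Your proof is correct, and it takes a genuinely different route from the paper's. The paper's construction is asymmetric: it uses only two users (a ``focused'' user who values a high-quality action $\s_1$ that only player $1$ can produce, and a ``dispersed'' user who is never very happy), pins the remaining $n-1$ players to a dummy action, and tunes $\delta=\delta_0$ to solve $e^{\delta_0/\beta}+K-1=2\big/\big(\tfrac{1}{K}+\tfrac{1}{b+K}\big)$ (with $b=e^{1/\beta}-1$) so that player $1$'s exposure from the mediocre action weakly dominates that from $\s_1$; showing the resulting welfare ratio exceeds $2$ then reduces to verifying the inequality $(b+2K)^4>16K^3(b+K)^3$ over the stated parameter range. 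You instead use a fully symmetric ``clumping'' instance with $K+1$ creators and $K+1$ users: the equilibrium check is essentially trivial (a deviator is crowded out of every other user's top-$K$ list by the $K$ creators remaining in the clump, so its exposure is at most $1$, which is exactly the equilibrium payoff), and the welfare gap is the transparent quantity $1/(\delta+\beta\log K)$ with a fixed $\delta=1/10$. Your approach buys robustness --- the clumping profile is a (possibly weak) Nash equilibrium for every $\delta\in(0,1)$ and every $\beta\ge 0$, which makes the ``moreover'' limits immediate --- and it avoids both the parameter tuning and the polynomial inequality; the paper's version, in exchange, needs only two users and a single strategic player, which makes the qualitative story of a specialized creator abandoning its niche more pointed. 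In a final write-up you should make explicit the tie-breaking conventions (uniform random top-$K$ truncation among equally relevant items, and the degenerate choice rule as $\beta\to 0$) that give exposure exactly $1/(K+1)$ per user in the clump, and note, as you already do, that the weak equilibrium at $K=1$ still yields a valid CCE; the remaining inequality $2>PoA(\G)$ is indeed immediate from Theorem~\ref{thm:PoA}.
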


In stark contrast to Theorem \ref{thm:PoA} guaranteeing PoA$(\G)< 2$, Proposition \ref{prop:unbound_user_welfare} implies the deterioration of user welfare when content creators are incentivized to compete for the expected exposure of their content.  However, we find in practice both metrics are used: for example, user engagement has been used more often as a reward metric for established creators, whereas user exposure is used more for new creators \cite{meta_experiment,savy2019}. Our result serves as a theoretical defense for rewarding creators by user engagement if the system aims to improve overall welfare of the recommendations. 

To prove Proposition \ref{prop:unbound_user_welfare}, we construct a game instance in which the user welfare at NE is arbitrarily close to zero. 
Our construction also reveals interesting insights about situations where user welfare can be very bad. Hence, we briefly explain our construction here and leave our formal arguments in Appendix \ref{app:unbound_user_welfare}. 
Our constructed game has two groups of users: one \emph{dispersed} group that is fine with any content but is never very happy with it (i.e., a low relevant score for all content) and one \emph{focused} group who looks for a specific type of high-quality content (a high relevance score on such content); but only a small group of specialized creators can produce such high-quality content. However, if players are incentivized to compete for exposure, 
even creators from the small group tend to produce low-quality content that appeals to the dispersed group rather than high-quality content that benefits the focused group. This, in the worse case, can lead to arbitrarily worse welfare for the platform. 


\vspace{2mm}
\noindent
\textbf{The welfare efficiency of top-$K$ recommendation policy.  } 
One may wonder whether the top-$K$ recommendation is indeed a good policy for securing the platform's welfare, i.e., is it possible that other recommendation policies (e.g., a probabilistic policy based on Plackett-Luce model \citep{plackett1975analysis,luce1959individual}) may even lead to better equilibrium outcomes? 
Our following analysis, as a corollary of Theorem \ref{thm:PoA}, shows that the answer is to some extent \emph{no} since \emph{any} recommendation policy cannot be better than the top-$K$ rule by more than a tiny fraction of the theoretical optimality. We believe this finding also serves as a theoretical justification for the wide adoption of the top-$K$ principle in practice. 


\begin{corollary}\label{prop:welfare_loss}
Consider an arbitrary recommendation policy providing at most $K$ recommendations, which induces a different \game{} $\G'$. Let CCE$(\G')$ denote the corresponding CCE set of $\G'$ and $W(\G')=\min_{\al\in CCE(\G')}\EE_{\s\sim \al}[W(\s)]$ be its worst-case CCE welfare. Then we have  
    \begin{equation}\label{eq:welfare_loss_fraction}
        W(\G') \leq W(\G) + W^*_K / \Big(1+\frac{K\log (K+b)}{K+b}\Big),
    \end{equation}
    where $W^*_K$ is the best possible social welfare achieved via any centralized recommendation policy with $K$ slots.   

\end{corollary}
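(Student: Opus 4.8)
The plan is to derive Corollary~\ref{prop:welfare_loss} directly from Theorem~\ref{thm:PoA}, using only two extra facts: that the optimal welfare $W^*_K$ achievable with $K$ slots equals $\max_{\s\in\S}W(\s)$, the optimal welfare of the top-$K$ game; and that $c(\beta,K)\ge \frac{K\log(K+b)}{K+b}$. Throughout I write $W(\G)\defeq\min_{\al\in CCE(\G)}\EE_{\s\sim\al}[W(\s)]$ for the worst-case CCE welfare of the top-$K$ game $\G$, in parallel with the given definition of $W(\G')$.

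The first and main step is to identify $W^*_K$ with $\max_{\s\in\S}W(\s)$. One direction is trivial: the top-$K$ rule is itself an admissible centralized policy with (at most) $K$ slots, so $W^*_K\ge\max_{\s}W(\s)$. For the reverse, fix a content profile $\s$ and a user $\x_j$; among all subsets $S$ of at most $K$ items drawn from $\s$, the expected utility $\EE_{\{\varepsilon_i\}}[\max_{\s_i\in S}\{\sigma(\s_i,\x_j)+\varepsilon_i\}]$ in \eqref{eq:user_utility_original} is maximized by $S=\T_j(\s;K)$. This is a coupling/monotonicity argument: since the $\varepsilon_i$ are i.i.d., replacing an in-list item by an out-of-list item of weakly higher relevance — keeping the noise draw attached to that slot — cannot decrease the expected maximum, and adding items never hurts; iterating such swaps turns any feasible $S$ into the top-$K$-by-relevance list. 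Summing over users, for every recommendation policy giving at most $K$ recommendations (inducing $\G'$) and every $\s$ we get $W_{\G'}(\s)\le W(\s)$; taking the max over $\s$ gives $W^*_K=\max_{\s}W(\s)$, and since any CCE-average is at most the maximum, also $W(\G')\le\max_{\s}W_{\G'}(\s)\le W^*_K$. I expect this to be the crux; the underlying monotonicity of the expected maximum of Gumbel-perturbed scores is also among the structural properties of $W$ we develop en route to Theorem~\ref{thm:PoA}, so it can be cited rather than re-derived.

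It then remains to combine with the PoA bound and discharge a one-line inequality. By Theorem~\ref{thm:PoA}, $\max_{\s}W(\s)/W(\G)=PoA(\G)<1+1/c(\beta,K)$, so the previous step gives $W(\G)>W^*_K/(1+1/c(\beta,K))$, and subtracting from $W(\G')\le W^*_K$,
\begin{equation*}
W(\G')-W(\G)\;<\;W^*_K-\frac{W^*_K}{1+1/c(\beta,K)}\;=\;\frac{W^*_K}{c(\beta,K)+1}.
\end{equation*}
Hence it suffices to show $c(\beta,K)\ge \frac{K\log(K+b)}{K+b}$, which makes $\frac{W^*_K}{c(\beta,K)+1}\le W^*_K/\big(1+\frac{K\log(K+b)}{K+b}\big)$. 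Substituting the definition \eqref{eq:cbetaK_def} of $c(\beta,K)$ and cancelling the positive common factor $\frac{\log(K+b)}{K+b}$, this reduces to $\frac{b+1}{\log(K+b)-\log K}\ge K$, i.e.\ $b+1\ge K\log(1+b/K)$, which is immediate from $\log(1+x)\le x$. Chaining these two facts yields Corollary~\ref{prop:welfare_loss}; the only nontrivial ingredient is the per-profile optimality of the top-$K$ rule from the second paragraph, with everything else being bookkeeping plus $\log(1+x)\le x$.
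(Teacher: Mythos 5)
Your proposal is correct and follows essentially the same route as the paper's proof: identify $W^*_K$ with $\max_{\s}W(\s)$ because the top-$K$ rule is per-profile optimal, apply the PoA bound of Theorem~\ref{thm:PoA}, and discharge the remaining inequality via $\log(1+x)\le x$ (the paper applies this bound to $\frac{1}{c(\beta,K)}$ directly, while you phrase it as $c(\beta,K)\ge \frac{K\log(K+b)}{K+b}$ — the same computation). Your extra detail on why the top-$K$ subset maximizes the expected Gumbel-perturbed maximum is a fact the paper simply asserts in one line, so no substantive difference remains.
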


As indicated by \eqref{eq:welfare_loss_fraction}, the fraction of the loss of welfare is approximately $O(\frac{1}{\log K})$ as $\frac{K}{K+b}\sim O(1)$ when $K$ is large. The proof is straightforward based on Theorem \ref{thm:PoA} and can be found in Appendix \ref{app:welfare_loss}.

\subsection{Proof Highlights of Theorem \ref{thm:PoA}}\label{subsec:proof-sketch}

Our first step is to derive clean characterizations for the game primitives by utilizing properties of Gumbel distribution. The form of the user utility $\pi_j$ and welfare $W$ are corollaries of RU models \citep{baltas2001random}, however, the closed-form of the creator utility $u_i$ is a new property we derive. The results are summarized in the following Lemma \ref{lm:closed_form_utility}, and the detailed proof is deferred to Appendix \ref{app:close_utility}. 

\begin{lemma}\label{lm:closed_form_utility}
Given $\{ \varepsilon_i \}$ are drawn i.i.d.  from zero-mean Gumbel$(-\beta \gamma, \beta)$, the utility and welfare functions defined in \eqref{eq:user_utility_original}, \eqref{eq:player_utility_original} and \eqref{eq:welfare_original} have the following closed forms
\begin{equation}\label{eq:user_utility}
        \pi_j(\s)= \beta\log \Big[\sum_{\s_k\in\T_j(\s;K)} \exp{(\beta^{-1}\sigma(\s_k, \x_j))}\Big],
\end{equation}
\begin{equation}\label{eq:player_utility}
  u_i(\s)= \sum_{j=1}^m \pi_j(\s)\frac{\II[\s_i\in \T_j(\s;K)]\exp(\beta^{-1}\sigma(\s_i, \x_j))}{\sum_{\s_k\in \T_j(\s;K)} \exp(\beta^{-1}\sigma(\s_k, \x_j))},
\end{equation}
\begin{equation}\label{eq:welfare}
\!\!\!\!        W(\s)= \beta\sum_{j=1}^m \log \Big[\sum_{\s_k\in\T_j(\s;K)} \exp{(\beta^{-1}\sigma(\s_k, \x_j))}\Big].
\end{equation}
\end{lemma}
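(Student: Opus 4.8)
The plan is to reduce all three identities to three classical facts about independent Gumbel variables sharing a common scale $\beta$: (i) \emph{max-stability}: if $Y_k\sim\mathrm{Gumbel}(\mu_k,\beta)$ are independent, then $\max_k Y_k\sim\mathrm{Gumbel}\big(\beta\log\sum_k e^{\mu_k/\beta},\beta\big)$; (ii) the \emph{logit formula}: $\Pr[Y_\ell=\max_k Y_k]=e^{\mu_\ell/\beta}/\sum_k e^{\mu_k/\beta}$; and (iii) \emph{independence of the max and the argmax}: the value $\max_k Y_k$ is independent of the (a.s.\ unique) maximizing index. Combined with the elementary fact that $\mathrm{Gumbel}(\mu,\beta)$ has mean $\mu+\beta\gamma$, these yield everything. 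Facts (i)--(iii) are textbook, but for self-containedness I would record a short derivation from the CDF $e^{-e^{-(x-\mu)/\beta}}$ in the appendix.

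First I would fix a user $\x_j$ and a joint strategy $\s$, so that $\T_j(\s;K)$ is a fixed finite set; a tie among the $\sigma$-scores at the top-$K$ boundary is harmless, since exchanging two recommended-or-not items of equal relevance does not change $\sum_{\s_k\in\T_j(\s;K)}\exp(\beta^{-1}\sigma(\s_k,\x_j))$. Setting $X_i:=\sigma(\s_i,\x_j)+\varepsilon_i$, we have $X_i\sim\mathrm{Gumbel}(\sigma(\s_i,\x_j)-\beta\gamma,\beta)$ because $\varepsilon_i\sim\mathrm{Gumbel}(-\beta\gamma,\beta)$. Applying (i) over $\{\s_i\in\T_j(\s;K)\}$ shows that $M_j:=\max_{\s_i\in\T_j(\s;K)}X_i$ is $\mathrm{Gumbel}\big(\beta\log\sum_{\s_k\in\T_j(\s;K)}\exp(\beta^{-1}\sigma(\s_k,\x_j))-\beta\gamma,\ \beta\big)$; taking its mean gives exactly \eqref{eq:user_utility}, and summing over $j$ gives \eqref{eq:welfare}.

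For the creator utility, the key observation is that the event $\x_j\shortrightarrow\s_i$ coincides with $\{X_i=M_j\}$ when $\s_i\in\T_j(\s;K)$, and has probability $0$ otherwise (non-recommended content is never chosen). On this event, $\sigma(\s_i,\x_j)+\varepsilon_i=X_i=M_j$, so
\[
\EE\big[\sigma(\s_i,\x_j)+\varepsilon_i\mid\x_j\shortrightarrow\s_i\big]\cdot\Pr[\x_j\shortrightarrow\s_i]=\EE\big[M_j\cdot\II[\x_j\shortrightarrow\s_i]\big].
\]
Now fact (iii) lets me factor the right side as $\EE[M_j]\cdot\Pr[\x_j\shortrightarrow\s_i]=\pi_j(\s)\cdot\Pr[\x_j\shortrightarrow\s_i]$, and fact (ii) with $\mu_k=\sigma(\s_k,\x_j)-\beta\gamma$ evaluates $\Pr[\x_j\shortrightarrow\s_i]=\exp(\beta^{-1}\sigma(\s_i,\x_j))/\sum_{\s_k\in\T_j(\s;K)}\exp(\beta^{-1}\sigma(\s_k,\x_j))$. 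Summing over $j$, with an indicator $\II[\s_i\in\T_j(\s;K)]$ absorbing the non-recommended case, produces \eqref{eq:player_utility}.

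The main obstacle is isolating and justifying fact (iii): it is precisely the independence of $M_j$ from the identity of the chosen content that makes the conditional expectation $\EE[\,\cdot\mid\x_j\shortrightarrow\s_i]$ collapse to the unconditional $\pi_j(\s)$, which is what makes the closed form \eqref{eq:player_utility} for $u_i$ a genuinely new identity rather than a restatement of standard RU-model formulas. The remaining pieces --- the Gumbel mean, linearity of expectation, handling $\s_i\notin\T_j(\s;K)$ and $\sigma$-ties --- are routine bookkeeping.
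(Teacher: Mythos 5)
Your proposal is correct, and for the two ``easy'' identities \eqref{eq:user_utility} and \eqref{eq:welfare} it coincides with the paper's proof (both apply Gumbel max-stability and take the mean). For the genuinely new identity \eqref{eq:player_utility}, however, you take a different route. The paper evaluates $\EE[\sigma(\s_i,\x_j)+\varepsilon_i \mid \x_j\shortrightarrow\s_i]$ by brute force: it writes the conditional expectation as a ratio of integrals $\int x f_i(x)F_i(x)\,dx \big/ \int f_i(x)F_i(x)\,dx$, performs the changes of variables $t=e^{-x}$ and $s=t\sum_k e^{v_k}$, and identifies the Euler--Mascheroni constant via $-\frac{d}{d\alpha}\Gamma(\alpha+1)\big|_{\alpha=0}$, arriving at $\log(\sum_k e^{v_k})+\gamma$. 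You instead isolate the structural reason this computation must return the unconditional mean of the maximum: for independent Gumbel variables with a common scale, the maximum is independent of the argmax, so $\EE[M_j\cdot\II[\x_j\shortrightarrow\s_i]]=\EE[M_j]\Pr[\x_j\shortrightarrow\s_i]=\pi_j(\s)\Pr[\x_j\shortrightarrow\s_i]$ immediately. Your version is more conceptual and makes transparent \emph{why} every recommended item's conditional engagement equals $\pi_j(\s)$; the price is that fact (iii) is not proved in the paper's cited Lemma \ref{lm:gumbel_property} (which only gives the marginals of the max and the argmax, not their joint independence), so you do owe the reader the short factorization of the joint density $f_\ell(x)\prod_{k\neq\ell}F_k(x)$ into (logit probability)\,$\times$\,(Gumbel density of the max) --- a computation of essentially the same length as the paper's integral, just organized as a reusable lemma rather than a one-off evaluation. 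Your handling of ties and of the non-recommended case is fine and matches the paper's implicit treatment.
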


The main proof of Theorem \ref{thm:PoA} is based on a smoothness argument framework developed in the seminal work by \citet{roughgarden2015intrinsic}. For any strategy profile $\s$,  $W(\s) = \sum_i u_i(\s_i, \s_{-i}) $ is its  total  welfare function. A game is   $(\lambda,\mu)$\emph{-smooth} if  $ \lambda W(\s') - \mu W(\s) \leq \sum_i u_i(\s'_i, \s_{-i}) $ for any $(\s, \s') \in \S$. \citet{roughgarden2015intrinsic} observes that the PoA of any $(\lambda,\mu)$-smooth game  can be upper bounded by $\frac{1 + \mu}{\lambda}$. After plugging in the expression of $W(\s)$, the $(\lambda,\mu)$-smoothness condition can be re-written as
$$ 
    \sum_i [ \lambda u_i(\s'_i, \s'_{-i}) - u_i(\s'_i, \s_{-i})  ] \leq \sum_i  \mu u_i(\s) . 
$$
Intuitively, the smoothness parameters bound how much \emph{externality} other players' actions  (i.e., $\s'_{-i}$ or $\s_{-i}$) impose on any player $i$'s utility. Moreover, the tighter this bound is, the smoother the game is and the smaller the PoA is. To gain some intuition and also as a sanity check, consider the extreme situation in which each player's utility is not affected by other players' actions at all (i.e., the \emph{no externality} situation), we have $\lambda = 1$ and $\mu = 0$ implying PoA=1. That is, if any player's utility is not affected by others, then self-interested utility-maximizing behaviors also maximize social welfare, which is a straightforward observation. Certainly, we cannot hope for such a nice property to hold in general, but fortunately, many well-known games have been shown to be smooth. For example, second-price auctions are   $(1,1)$-smooth as shown by \citet{christodoulou2008bayesian},   congestion games are $(\frac{5}{3},\frac{1}{3})$-smooth as shown by \citet{roughgarden2015intrinsic}, and all-pay auctions are $(1/2,1)$-smooth as shown by \citet{roughgarden2017price}. 

Hence, the key challenge in proving Theorem \ref{thm:PoA} is to pin down the tightest possible $(\lambda, \mu)$ parameters for our \game{}. This boils down to a fundamental question in top-$K$ RS -- i.e., \emph{to what extent does the existence of other competing content creators affect a creator's utility}? To answer this question, we discover multiple interesting properties of the welfare and creator utility functions formulated as follows. Besides proving our main result in Theorem \ref{thm:PoA}, we believe these properties are also of interest for us to understand recommender systems. 

Our second Lemma \ref{lm:submodular_V_} demonstrates the \emph{submodularity} of $W(\s)$. That is, the marginal gain of welfare from adding a new player decreases as the total number of creators increases. Lemma \ref{lm:utility_property_} further relates this marginal welfare increase with the added player's own utility. It shows that the increased welfare after introducing a new player $i$ with strategy $\s_i$ is at most   $i$'s utility  under  $\s_i$, multiplied by a shrinkage factor $c^{-1}(\beta, K) \in (0, 1]$. These two lemmas together allow us to prove that the \game{} is $(c^{-1}(\beta, K), c^{-1}(\beta, K))$-smooth, yielding Theorem \ref{thm:PoA}. Detailed proofs are presented in Appendix \ref{app:lemma23_proof}. 

\begin{lemma}\label{lm:submodular_V_}[Submodularity of Welfare]
For any $\s=(\s_1,\cdots,\s_n)\in\S$, let $S=\{\s_1,\cdots,\s_n\}$. Then the social welfare function defined in Eq \eqref{eq:welfare_original} is submodular as a set function, i.e., for any $S, \s_x, \s_y$ it holds that
\begin{equation*}
    W(S\cup \{\s_x\})-W(S) \geq W(S\cup \{\s_x,\s_y\})-W(S\cup \{\s_y\}).
\end{equation*}
\end{lemma}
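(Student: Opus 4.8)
The plan is to reduce the claim to a per-user statement using the closed form from Lemma~\ref{lm:closed_form_utility}, and then combine two classical facts about submodular set functions. By \eqref{eq:welfare} we can write $W(S)=\sum_{j=1}^m W_j(S)$ where $W_j(S)=\beta\log\big(\sum_{\s_k\in\T_j(S;K)}\exp(\beta^{-1}\sigma(\s_k,\x_j))\big)$; since a nonnegative sum of submodular functions is submodular, it suffices to show each $W_j$ is submodular. Fixing a user $\x_j$ and writing $w_k=\exp(\beta^{-1}\sigma(\s_k,\x_j))\ge 1$, the quantity $g(S):=\sum_{\s_k\in\T_j(S;K)}w_k$ is exactly the sum of the $K$ largest weights among $\{w_k:\s_k\in S\}$ (all of them when $|S|<K$) — a well-defined number regardless of how ties in $\T_j$ are broken, since the $w_k$-order coincides with the $\sigma(\cdot,\x_j)$-order — and $W_j=\phi\circ g$ with $\phi(t)=\beta\log t$. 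So I want: (i) $g$ is monotone nondecreasing and submodular, and (ii) a concave nondecreasing transform of a monotone submodular function is submodular.

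For (i), monotonicity is immediate because the weights are positive. For submodularity I would use the diminishing-returns criterion: for $S\subseteq T$ and $\s_x\notin T$, the marginal gain $g(S\cup\{\s_x\})-g(S)$ equals $w_x$ when $|S|<K$, and equals $\max\{w_x-\theta_K(S),\,0\}$ when $|S|\ge K$, where $\theta_K(S)$ is the $K$-th largest weight in $S$. Adding elements to a set can only weakly increase its $K$-th largest weight, so $\theta_K(S)\le\theta_K(T)$ whenever both are defined; a short case check (with the boundary case $|S|<K\le|T|$ handled by $w_x\ge\max\{w_x-\theta_K(T),\,0\}$) then gives $g(S\cup\{\s_x\})-g(S)\ge g(T\cup\{\s_x\})-g(T)$.

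For (ii), fix $S\subseteq T$, $\s_x\notin T$, and set $a=g(S)\le c=g(T)$ — both positive since these sets are nonempty and $w_k\ge 1$ — and $\delta_S=g(S\cup\{\s_x\})-a\ge\delta_T=g(T\cup\{\s_x\})-c\ge 0$ by step (i). Since $\phi'$ is positive and nonincreasing on $(0,\infty)$, $\phi(a+\delta_S)-\phi(a)=\int_0^{\delta_S}\phi'(a+s)\,ds\ge\int_0^{\delta_T}\phi'(a+s)\,ds\ge\int_0^{\delta_T}\phi'(c+s)\,ds=\phi(c+\delta_T)-\phi(c)$, i.e.\ $W_j$ is submodular. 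Summing over users gives submodularity of $W$ for $\beta>0$; for $\beta=0$, \eqref{eq:welfare_original} reduces to $W(S)=\sum_j\max_{\s_k\in S}\sigma(\s_k,\x_j)$, a sum of submodular max-functions (or one may take $\beta\to 0^+$, as submodularity survives pointwise limits). Specializing the diminishing-returns inequality to $T=S\cup\{\s_y\}$ then recovers the exact form in the statement.

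The step I expect to be the real obstacle is (ii): composition with $\log$ does \emph{not} preserve submodularity for a general submodular function, and the argument above genuinely uses the \emph{monotonicity} of $g$ to order the two integration intervals correctly. Getting the top-$K$ sum to be simultaneously monotone and submodular — and carefully tracking the $|S|<K$ boundary cases in (i) — is therefore where the care is needed; everything else is bookkeeping once Lemma~\ref{lm:closed_form_utility} is in hand.
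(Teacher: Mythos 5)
Your proof is correct, but it takes a genuinely different route from the paper's. The paper also reduces to a per-user statement, but then proceeds by a direct case analysis on the exponentiated relevance scores: writing the sorted top-$K$ weights as $v_1\le\cdots\le v_K$, it splits on whether $\s_x$ enters the top-$K$ and on $K=1$ versus $K\ge 2$, and in the main case reduces the submodularity inequality to the explicit algebraic identity $(v_x+v_2+a)(v_y+v_2+a)-(v_1+v_2+a)(v_x+v_y+a)=(v_2-v_1)(a+v_1+v_2)+(v_x-v_1)(v_y-v_1)\ge 0$. You instead factor each per-user welfare as $\phi\circ g$ with $\phi=\beta\log$ concave nondecreasing and $g$ the sum of the top-$K$ weights, prove $g$ is monotone submodular via the diminishing-returns characterization of the $K$-th largest element, and invoke the general fact that a concave nondecreasing transform of a \emph{monotone} submodular function is submodular (with the integral argument supplying a self-contained proof of that fact). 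Your structural route is more modular and explains \emph{why} the result holds — it would survive replacing $\beta\log$ by any concave nondecreasing link and makes the role of monotonicity explicit — and it also handles uniformly the boundary sub-cases of where the new elements land in the top-$K$ ordering, which the paper's case~3 formulas only treat for the configuration in which both $\s_x$ and $\s_y$ displace the two smallest incumbents. The paper's computation, in exchange, is elementary and self-contained, requiring no appeal to (or proof of) general composition lemmas. One small bookkeeping point: the paper pads $\T_j(S;K)$ with a default item of relevance $0$ when $|S|=K-1$, whereas you take all weights when $|S|<K$; both conventions leave $g$ monotone submodular (the padded marginal becomes $w_x-1\ge 0$ since $w_x\ge 1$), so nothing breaks, but it is worth a sentence if you want your argument to match the paper's extended definition exactly.
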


\begin{lemma}\label{lm:utility_property_} [Smoothness of Welfare]
For any $\s=(\s_1,\cdots,\s_n)\in\S$, $i\in[n]$ and $c(\beta, K)$ defined in Eq \eqref{eq:cbetaK_def}, player-$i$'s utility function $u_i(\s)$ defined in Eq \eqref{eq:player_utility_original} satisfies
\begin{equation*}
    W(\s)-W(\s_{-i}) \leq c^{-1}(\beta, K)\cdot u_i(\s_i;\s_{-i}).
\end{equation*}
\end{lemma}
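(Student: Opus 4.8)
The plan is to use the closed forms in Lemma~\ref{lm:closed_form_utility} to localize the claimed inequality to a single user, and then recast the resulting single-user statement as a three-variable optimization whose optimum is exactly $c^{-1}(\beta,K)$.

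\textbf{Localization.} Write $W(\s)-W(\s_{-i})=\sum_{j=1}^m\big(\pi_j(\s)-\pi_j(\s_{-i})\big)$ via \eqref{eq:welfare_original}. If $\s_i\notin\T_j(\s;K)$, removing player $i$ does not change the top-$K$ set of user $j$, so the $j$-th term vanishes. If $\s_i\in\T_j(\s;K)$, removing $i$ evicts $\s_i$ from the list and admits the $(K{+}1)$-st most relevant content $\s_j^{(K+1)}$, whose relevance is at most that of every content currently in $\T_j(\s;K)$ (if fewer than $K{+}1$ contents are present the list simply shrinks). Abbreviate $a_j=e^{\beta^{-1}\sigma(\s_i,\x_j)}$, $r_j=e^{\beta^{-1}\sigma(\s_j^{(K+1)},\x_j)}$ (with $r_j=0$ when no such content exists), and $A_j=\sum_{\s_k\in\T_j(\s;K)\setminus\{\s_i\}}e^{\beta^{-1}\sigma(\s_k,\x_j)}$. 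Then \eqref{eq:user_utility} gives $\pi_j(\s)-\pi_j(\s_{-i})=\beta\log\frac{A_j+a_j}{A_j+r_j}\ge 0$, while by \eqref{eq:player_utility} the contribution of user $j$ to $u_i(\s)$ equals $\beta\log(A_j+a_j)\cdot\frac{a_j}{A_j+a_j}$. Since both $W(\s)-W(\s_{-i})$ and $u_i(\s)$ are sums over the same index set $\{j:\s_i\in\T_j(\s;K)\}$, it suffices to prove the bound one user at a time.

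\textbf{The box optimization.} Fixing a user, the desired inequality reads $\log\frac{A+a}{A+r}\le c^{-1}(\beta,K)\cdot\frac{a\log(A+a)}{A+a}$, i.e.\ $h(A,a,r):=\frac{(A+a)\log\frac{A+a}{A+r}}{a\log(A+a)}\le c^{-1}(\beta,K)$. Since $\sigma\in[0,1]$ and $\s_j^{(K+1)}$ ranks below every content in $\T_j(\s;K)$, the variables obey $1\le r\le a\le b+1$ with $b=e^{1/\beta}-1$, together with $(K-1)r\le A\le(K-1)(b+1)$. I would then establish three monotonicities: $h$ is decreasing in $r$ (which enters $h$ only through $-\log(A+r)$ and enters the constraint only through the lower bound $(K-1)r$, so the smallest $r$ both increases $h$ and enlarges the feasible $A$-interval), decreasing in $A$ on $A\ge K-1$, and increasing in $a$ on $1\le a\le b+1$. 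Chaining these pushes the maximizer to the corner $(A,a,r)=(K-1,\,b+1,\,1)$, i.e.\ to the user for whom $\s_i$ is perfectly relevant while the other top-$K$ contents and the replacement are all completely irrelevant; a direct substitution gives $h(K-1,b+1,1)=\frac{(b+K)\big(\log(b+K)-\log K\big)}{(b+1)\log(b+K)}=c^{-1}(\beta,K)$, the reciprocal of \eqref{eq:cbetaK_def}. Summing the per-user inequalities over $\{j:\s_i\in\T_j(\s;K)\}$ then yields $W(\s)-W(\s_{-i})\le c^{-1}(\beta,K)\,u_i(\s)$.

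\textbf{Main obstacle and loose ends.} The heart of the argument is the monotonicity of $h$, especially in $A$ and in $a$: differentiating produces sign conditions that mix $\log(A+a)$, $\log(A+r)$ and linear terms, and checking that the sign stays constant over the whole feasible box is elementary but delicate --- in particular a crude estimate such as $\log(1+t)\le t$ must be used sparingly, or it washes out the $\log K$ in the denominator of $c(\beta,K)$ and leaves only the trivial constant $1$. Minor extra care is needed for ties among relevance scores (where $\T_j$ is randomized and one argues realization by realization) and for the boundary regime with fewer than $K{+}1$ available contents; the binding configuration that pins the tight constant arises when $n\ge K+1$. Note that, in contrast to the overall smoothness argument, the proof of this lemma does not invoke the submodularity statement of Lemma~\ref{lm:submodular_V_}.
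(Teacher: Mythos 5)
Your proposal follows essentially the same route as the paper's proof: localize to the users $j$ with $\s_i\in\T_j(\s;K)$, rewrite the per-user inequality as the ratio bound $\frac{(A+a)\log\frac{A+a}{A+r}}{a\log(A+a)}\le c^{-1}(\beta,K)$, and push the maximizer to the corner $(A,a,r)=(K-1,\,b+1,\,1)$ by monotonicity. The paper sets $r=1$ immediately (since $\sigma\ge 0$ gives $v_1'\ge 1$), substitutes $x=a-1$, $y=A+1$, and proves exactly your two remaining monotonicities as its Lemma on $f(x,y)=\frac{(x+1)\log(x+y)}{(x+y)(\log(x+y)-\log y)}$, so the arguments coincide up to parameterization. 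Two corrections, though. First, your convention $r_j=0$ when no $(K{+}1)$-st content exists makes the per-user bound literally false: at $A=K-1$, $a=b+1$ the left side becomes $\log\frac{b+K}{K-1}$, which strictly exceeds the right side $\log\frac{b+K}{K}$. The paper avoids this by extending $\T_j$ to pad a shrunken list with a default item of relevance $0$ (hence exponentiated score $1$), so that effectively $r\ge 1$ always; you need that convention, not $r=0$. Second, the chaining order is not free: the paper verifies monotonicity in $a$ (its variable $x$) only at the integer value $y=K$ (its proof does a case analysis on $K\in\{1\},\{2,3,4\},\{\ge 5\}$), so one must first push $A$ down to $K-1$ and only then push $a$ up to $b+1$; claiming both monotonicities over the whole box and chaining in arbitrary order would require more than what is actually established.
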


\section{Experiments}

To confirm our theoretical findings and also to empirically measure the social welfare induced by creators' competition, we conduct simulations on game instances \\ $\G(\{\S_i\}_{i=1}^n, \X, \sigma, \beta, K)$ constructed from two synthetic datasets and the MovieLens-1m dataset \citep{harper2015movielens}. Before presenting our results, we provide a detailed overview of the simulation environment, including the characteristics of the datasets utilized and the metrics employed for evaluation.

\subsection{Specification of Datasets}
\noindent
{\bf $\bullet$ Synthetic Dataset-1}
Dataset-1 simulates the situation where content creators compete over an unbalanced user interest distribution. We construct $n$ user clusters with the largest cluster containing half of the population, and let each strategy from a creator's action set generate content that only appeals to a specific user group. 

Specifically, the user population is given by disjoint clusters $\X=\cup_{i=1}^n \X_i$ such that $|\X_1|=\frac{m}{2}$, and the sizes of smaller clusters $|\X_l|$ are sampled uniformly at random such that $\sum_{l=2}^n |\X_l|=\frac{m}{2}$. Players share the same action set $\S_i=\{\s_1,\cdots,\s_n\}$, and the $\sigma$ function satisfies that for any $i\in [n]$,

\begin{equation}  
\sigma(\s_i, \x)=\left\{  
             \begin{array}{lr}  
             1, \text{~~if~~} \x\in \X_i,&   \\  
             0, \text{~~otherwise.}&    
             \end{array}  
\right.  
\end{equation} 

Dataset-1 depends on the randomness of the partition $\cup_{i=1}^n \X_i$.

\vspace{2mm}
\noindent
{\bf $\bullet$ Synthetic Dataset-2}
Dataset-2 simulates the situation where content creators can either ``chase the trend" by generating mediocre content or cater to a specific user interest group with high-quality content. Similar to the construction of dataset-1, we let the user population comprise of $n$ clusters and allow each player to take actions targeting at any specific user group. But, in addition, we also allow each player to take a ``safe" action $\s_0$ by producing some popular content that can satisfy all users to a certain extent $\delta$.

Specifically, the user population is also given by disjoint clusters $\X=\cup_{i=1}^n \X_i$, the sizes of all clusters $|\X_l|$ are sampled uniformly at random such that $\sum_{l=i}^n |\X_l|=m$. Players share the same action set $\S_i=\{\s_0, \s_1,\cdots,\s_n\}$, and the $\sigma$ function satisfies that for any $i\in [n]$,

\begin{equation}  
\sigma(\s_i, \x)=\left\{  
             \begin{array}{lr}  
             1, \text{~~if~~} \x\in \X_i, i\geq 1&   \\  
             \delta, \text{~~if~~} i=0&   \\  
             0, \text{~~otherwise.}&    
             \end{array}  
\right.  
\end{equation} 

Dataset-2 depends on the randomness of the partition $\cup_{i=1}^n \X_i$ and a parameter $\delta\in[0,1]$.

\vspace{2mm}
\noindent
{\bf $\bullet$  The Dataset Generated from MovieLens-1m}
We use deep matrix factorization \citep{fan2018matrix} to train
user and movie embeddings targeted at movie ratings from 1 to 5. The total number of users $m=6040$, the number of movies $k=3883$, and the embedding dimension is set to $d=32$. To validate the quality of the trained representation, we first performed a 5-fold cross-validation and obtain an averaged RMSE $=0.883$ on the test sets, then train the user/item embeddings with the complete dataset. The resulting user embeddings $\X=\{\x_j\}_{j\in [m]}$ are used as the user population. To construct each player-$i$'s action set $\S_i$, we randomly sample $500$ vectors from the trained movie embedding set $\M$ ($|\M|=3883$) independently. To normalize the relevance score to $[0, 1]$, we let $\sigma(\s, \x)=1$ when the predicted rating of movie $\s$ to user $\x$ is at least 4, i.e., $\sigma(\s, \x)=\II[\langle \s, \x\rangle\geq 4]$. 

\subsection{Evaluation Metrics}

We use both PoA and PotA in our experiments. The evaluation of PoA requires solving two optimization problems, which are both intractable in general due to the non-concavity of $W(\cdot)$ and the undetermined structure of CCE$(\G)$. As a result, we use simulated annealing to approach $\max_{\s\in\S}W(\s)$ when the exact computation is intractable. To compute $\min_{\al\in CCE(\G)}\EE_{\s\sim \al}[W(\s)]$, we compute its exact solution by solving a linear program with $k^n$ variables and $kn$ constraints \citep{papadimitriou2005computing} for small $n$ and a moderate size of action set $k$. To deal with larger problems, we let each player run Exp3 \cite{auer2002nonstochastic} over $T=5000$ rounds and compute the price of total anarchy PotA$(\G)=\frac{\max_{\s\in\S}W(\s)}{\frac{1}{T}\sum_{t=1}^T \EE_{\s \sim \al^t}[W(\s)]}$. More details are disclosed in Appendix \ref{app:setup}.

\subsection{Results}
\noindent 
{\bf Empirical PoA from simulations.} We first demonstrate the empirical welfare under different game parameter $(n, K, \beta)$ for dataset-1. We fix $\beta \in \{0.1,0.5\}$ and report PoA and PotA under varying $n$ and $K$. Results reported in Table \ref{tb:exp1_small_n}, \ref{tb:exp1_large_n}, \ref{tb:exp1_small_n_beta0.5} and \ref{tb:exp1_large_n_beta0.5}. We observe that for any fixed $n$, both PoA and PotA decrease w.r.t. $K$ and $\beta$, as revealed in Theorem \ref{thm:PoA}. Furthermore, under fixed $(\beta, K)$, PoA approaches its theoretical upper bound as $n$ increases. However, PotA follows this trend for values of $n$ less than 15, but begins to decrease as $n$ increases further. This discrepancy can be attributed to the fact that for larger values of $n$ (i.e., in Table \ref{tb:exp1_large_n}), the approximated optimal welfare becomes less accurate and as such, the PotA tends to be underestimated. 
\begin{table}[t]
\centering
 \caption{PoA under $\beta=0.1$. Results reflect the worst cases obtained from 10 independently sampled game instances. }
 \label{tb:exp1_small_n}
\begin{tabular}{ |c|c|c|c|c|c|  }
 \hline
  \diagbox{$K$}{$n$} & * & 2 & 3 & 4 & 5 \\
 \hline
 $1$ &2.00 & 1.33 & 1.54 & 1.66 & 1.72   \\
 $2$ &1.93 & 1.28 & 1.46  & 1.56  &  1.60  \\
 $3$ &1.89 & \/ & 1.42 &  1.47   & 1.51 \\
 $4$ &1.86 & \/& \/    & 1.43    &  1.42\\
  $5$ &1.84 & \/& \/    & \/    &  1.42\\
 \hline
\end{tabular}
\\\text{$*$ denotes the theoretical upper bound.}
 \end{table}

\begin{table}[t]
\centering
\caption{PotA under $\beta=0.1$. Results reflect the worst cases obtained from 10 independently sampled game instances.}
 \begin{tabular}{ |c|c|c|c|c|c|c|  }
\hline
  \diagbox{$K$}{$n$} & * & 5 & 10 & 15 & 20 & 40 \\
 \hline
 $1$ &2.00 & 1.59 & 1.59 & 1.60 & 1.50 & 1.38   \\
 $3$ &1.89 & 1.37 & 1.39 & 1.42 & 1.41  &  1.32  \\
 $5$ &1.84 & 1.35 & 1.34 & 1.33 & 1.36   & 1.31 \\
 $7$ &1.80 & \/& 1.30    & 1.31 & 1.30    &  1.29\\
 \hline
\end{tabular}
\\\text{$*$ denotes the theoretical upper bound.}
 \label{tb:exp1_large_n}
\end{table}

\begin{table}[t]
\centering
 \caption{PoA under $\beta=0.5$. Results reflect the worst cases obtained from 10 independently sampled game instances. }
 \label{tb:exp1_small_n_beta0.5}
\begin{tabular}{ |c|c|c|c|c|c|  }
 \hline
  \diagbox{$K$}{$n$} & * & 2 & 3 & 4 & 5 \\
 \hline
 $1$ &2.00 & 1.33 & 1.54 & 1.66 & 1.72   \\
 $2$ &1.77 & 1.11 & 1.24  & 1.32  &  1.34  \\
 $3$ &1.65 & \/ & 1.08 &  1.13  & 1.18 \\
 $4$ &1.57 & \/& \/    & 1.05    &  1.08\\
  $5$ &1.52 & \/& \/    & \/    &  1.02\\
 \hline
\end{tabular}
\\\text{$*$ denotes the theoretical upper bound.}
 \end{table}

\begin{table}[t]
\centering
\caption{PotA under $\beta=0.5$. Results reflect the worst cases obtained from 10 independently sampled game instances.}
 \begin{tabular}{ |c|c|c|c|c|c|c|  }
\hline
  \diagbox{$K$}{$n$} & * & 5 & 10 & 15 & 20 & 40 \\
 \hline
 $1$ &2.00 & 1.59 & 1.59 & 1.60 & 1.50  & 1.38   \\
 $3$ &1.65 & 1.13 & 1.20 & 1.21 & 1.22  & 1.20  \\
 $5$ &1.52 & 1.03 & 1.10 & 1.12 & 1.14  & 1.14 \\
 $7$ &1.45 & \/& 1.05    & 1.08 & 1.09  &  1.11\\
 \hline
\end{tabular}
\\\text{$*$ denotes the theoretical upper bound.}
 \label{tb:exp1_large_n_beta0.5}
\end{table}

\noindent
{\bf Comparison between user engagement/exposure metrics.}
Next we investigate the consequence of utilizing two different incentive metrics, namely \emph{user engagement} vs., \emph{user exposure}. However, Dataset-1 is no longer a perfect benchmark for this purpose, as the utility functions derived under a simple binary valued $\sigma(\cdot,\cdot)$ are almost indistinguishable under these two metrics. To this end, we use dataset-2, which has a more complex $\sigma(\cdot,\cdot)$ function that models the situation in which creators could focus on chasing the trends other than paying attention to the content quality. 

 We fix $(\beta, K)=(0.1, 2)$ and report PotA under different $n$ and $\delta$. The results, shown in Figure \ref{fig:two_metric_k=2_beta=0.1}, demonstrate the advantage of using the user engagement metric, which consistently leads to a smaller PotA across different values of $n$ and $\delta$. 
For $n$ larger than $10$, PotA with user-exposure can exceed $2$ as revealed by Proposition \ref{prop:unbound_user_welfare}\footnote{Again, due to the approximation error in computing optimal $W$, the PotA could be underestimated as $n$ gets larger.  }. The performance gap between the two metrics is more distinct when $\delta$ gets smaller, which can be understood as when creators can produce popular content with lower effort, simply using exposure to reward creators can be catastrophic to the total user welfare. Results in Figure \ref{fig:two_metric_k=2_2} illustrate the trend of PotA under different $(\beta, K)$, which further demonstrate the superiority of utilizing the user engagement metric. It is also interesting to point out Figure \ref{fig:two_metric_k=2_2} suggests that the advantage of the engagement metric over the exposure metric diminishes as either $\beta$ or $K$ increases. This implies that the PotA under the exposure metric also decreases with respect to $K$ and $\beta$, although this claim remains unproven. Nonetheless, it constitutes an interesting question for further investigation.

\begin{figure}[t]
\begin{center}
\centerline{\includegraphics[width=\columnwidth]{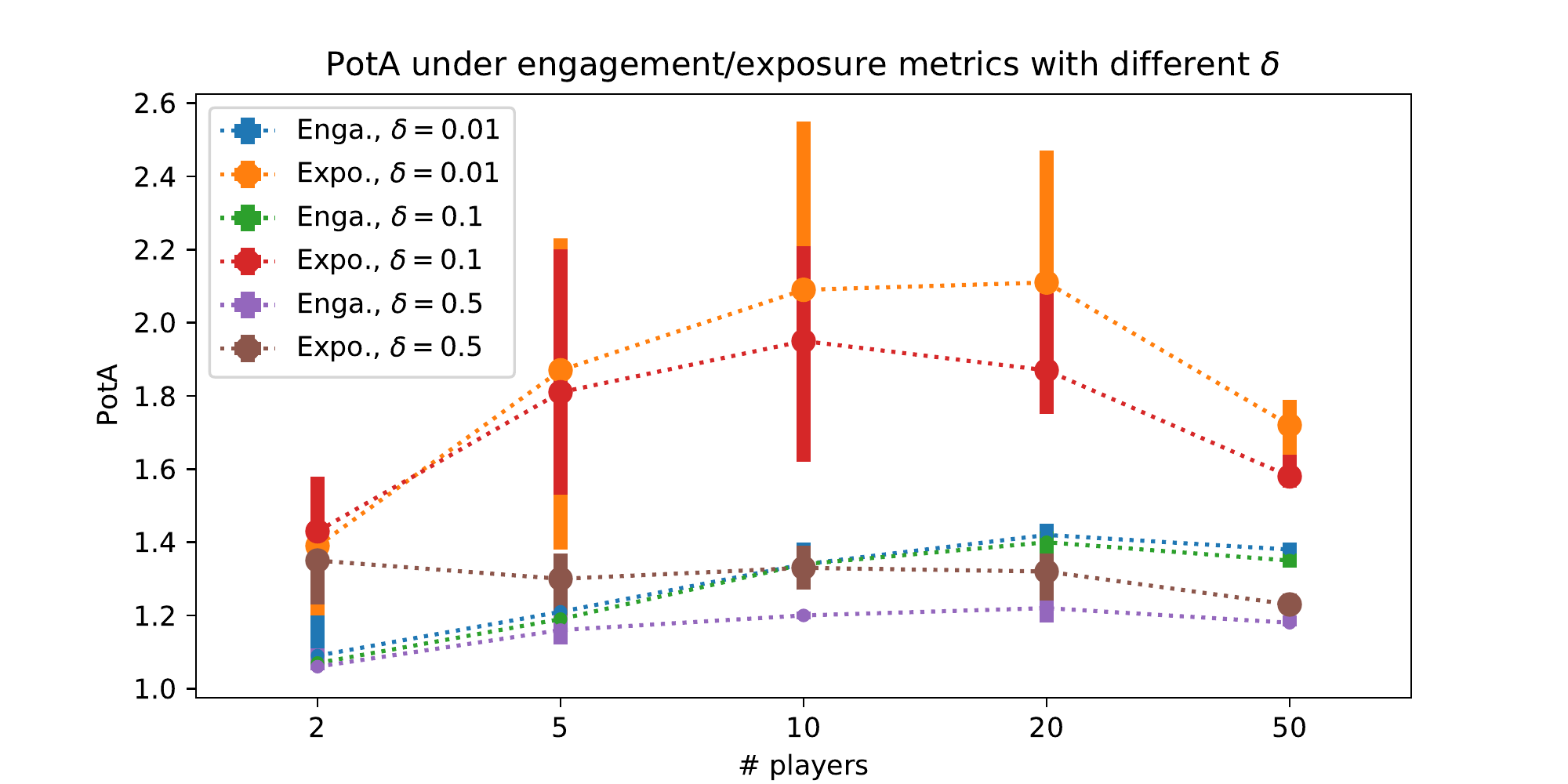}}
\vskip -0.1in
\caption{PotA under exposure/engagement metrics with $(\beta, K)=(0.1, 2)$. $\delta$ is the relevance score obtained from creators' ``safe" action. The error bars indicate the largest/smallest values from 10 independent trials and the dots correspond to the mean values.  }
\label{fig:two_metric_k=2_beta=0.1}
\end{center}
\vskip -0.4in
\end{figure}

\begin{figure}[t]
\includegraphics[width=0.5\columnwidth]{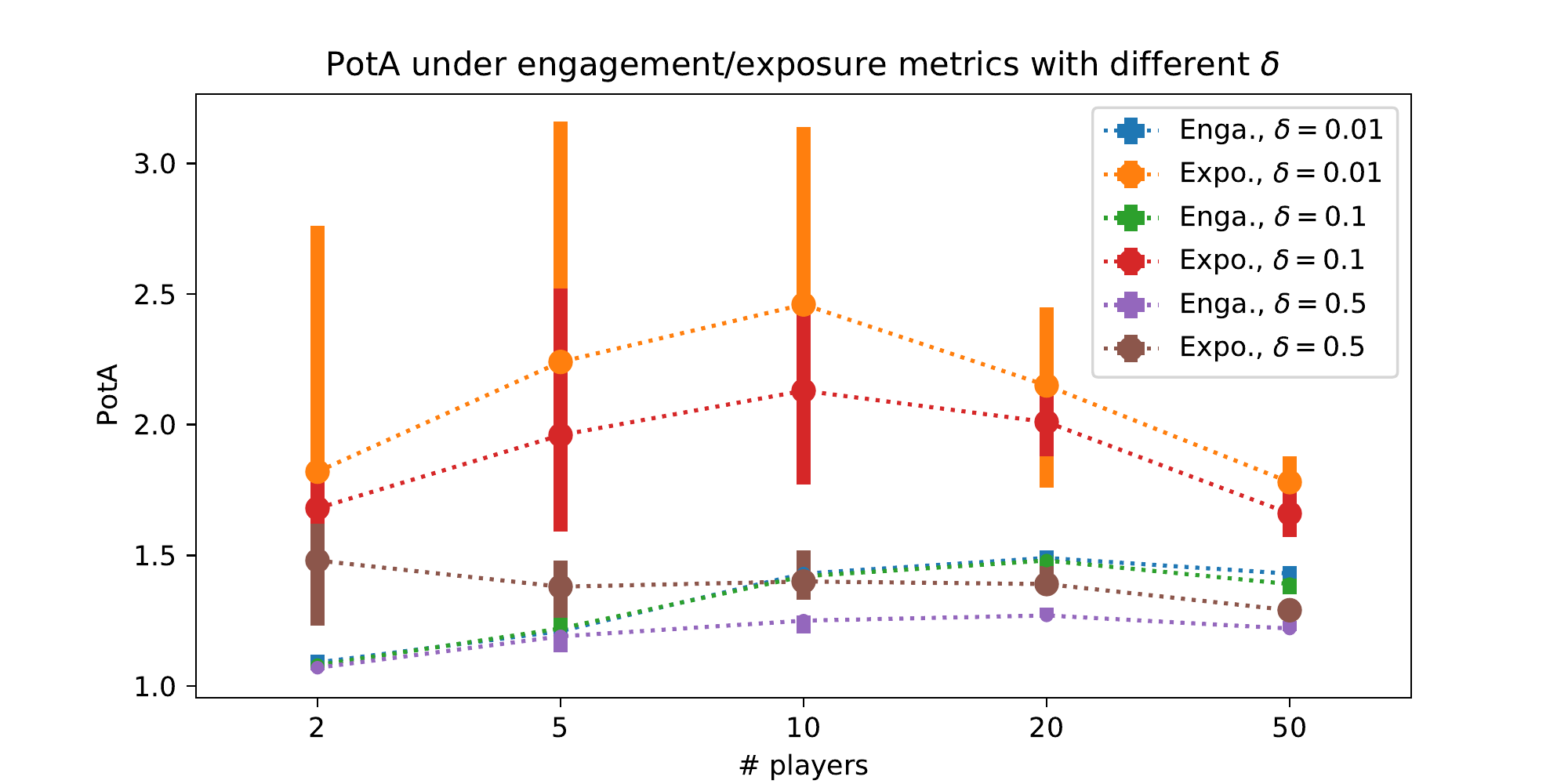}
\includegraphics[width=0.5\columnwidth]{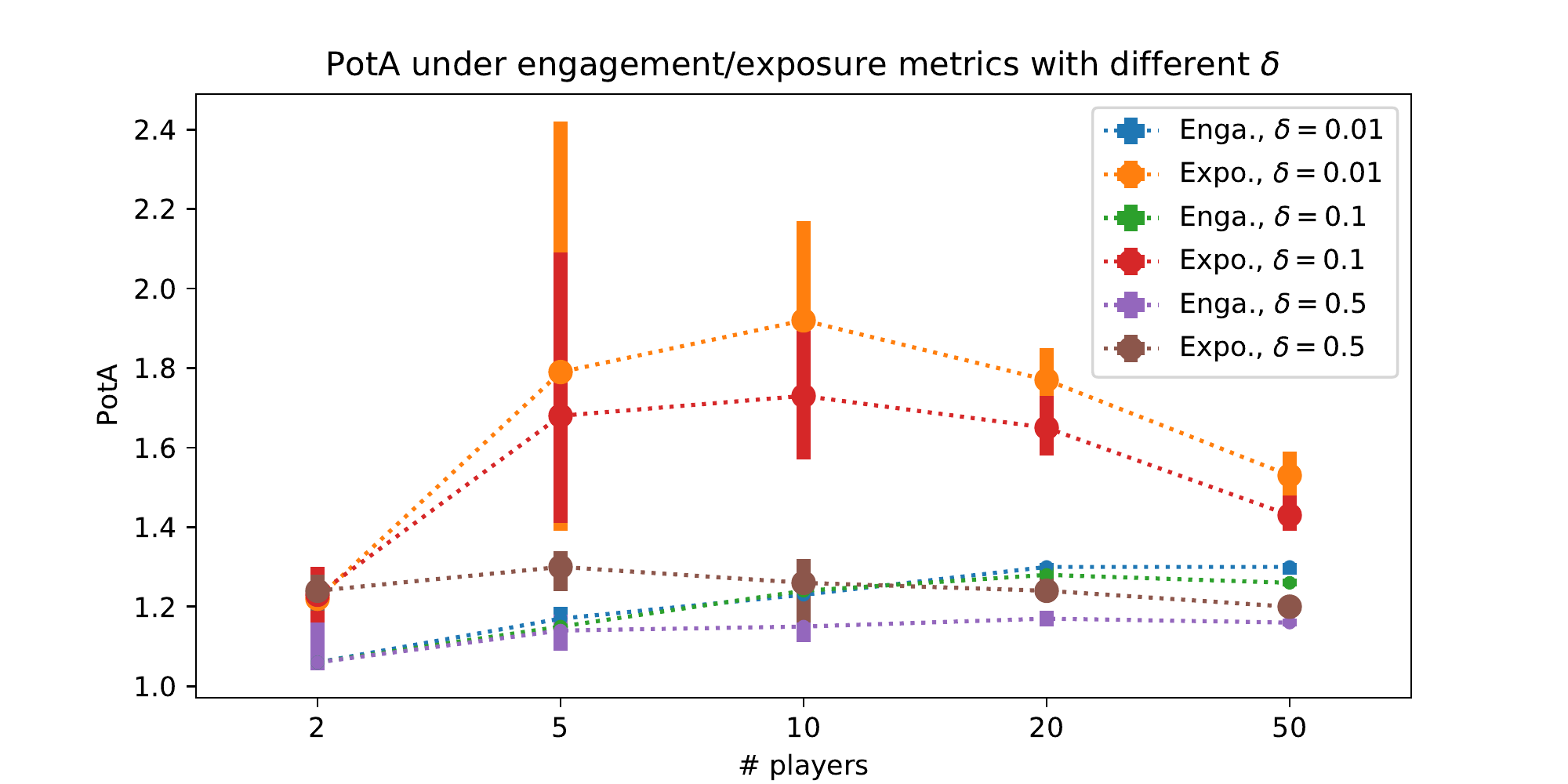}
\includegraphics[width=0.5\columnwidth]{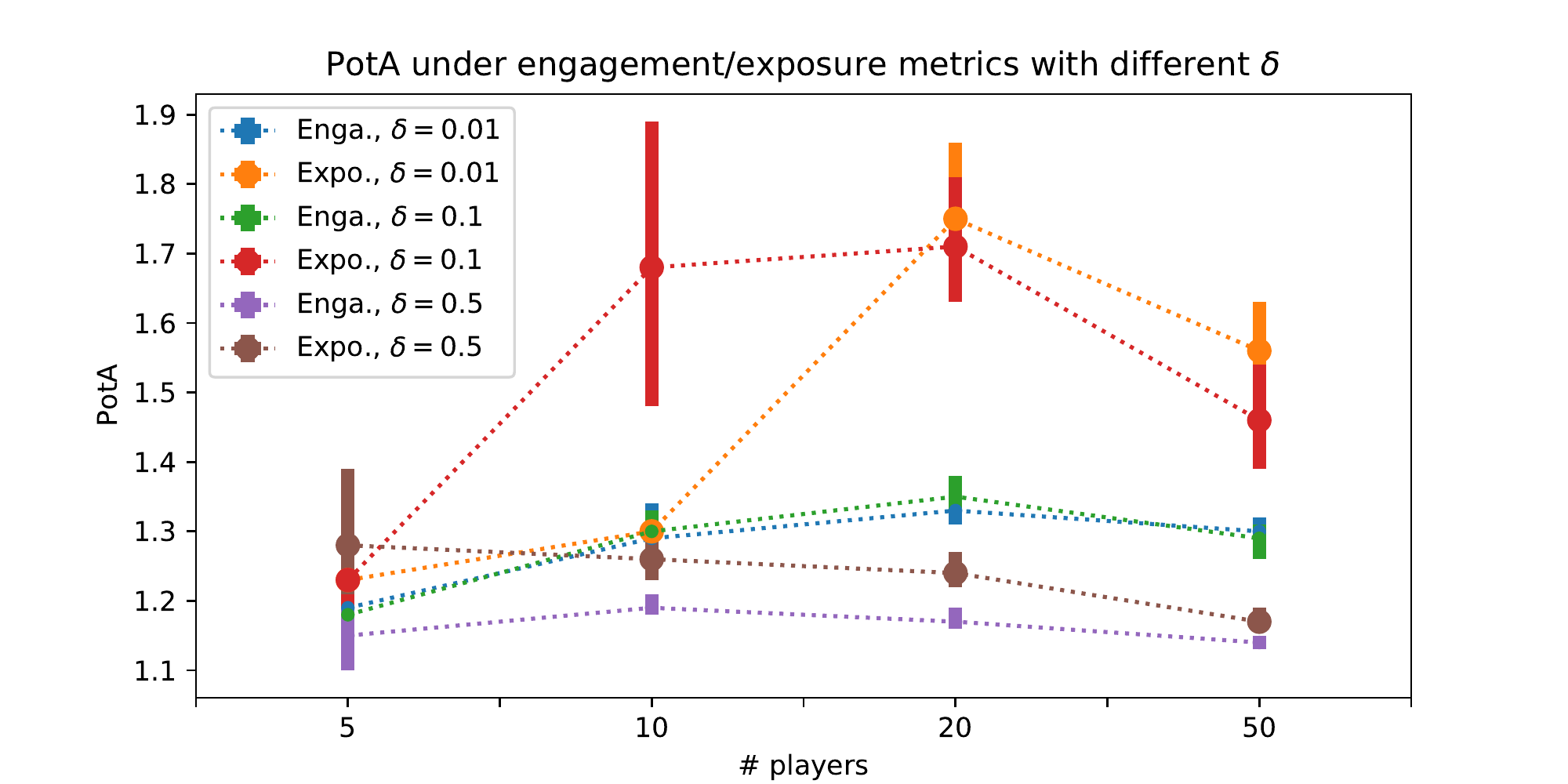}
\includegraphics[width=0.5\columnwidth]{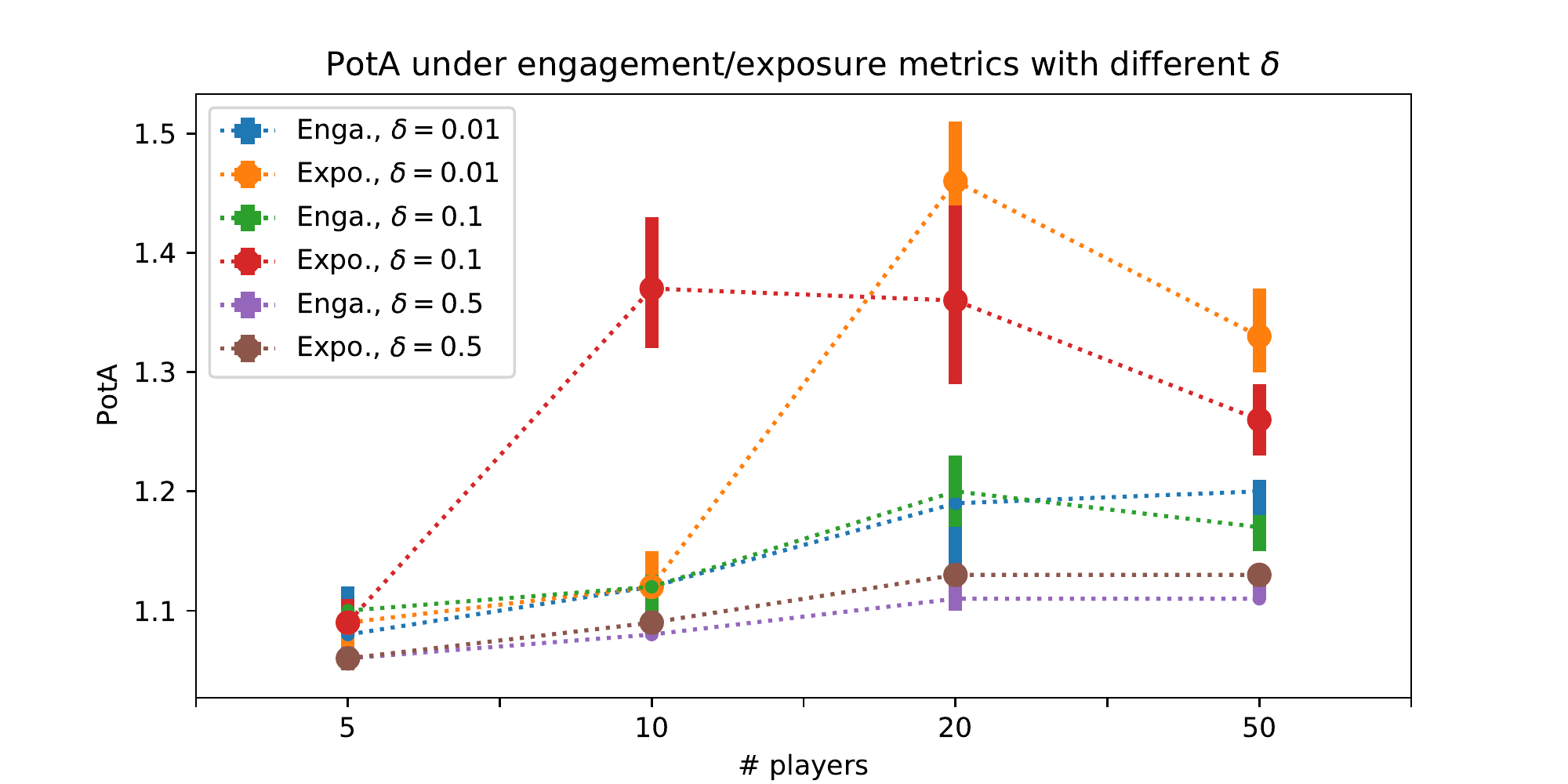}
\caption{PotA under user exposure and user engagement metrics under different $ (\beta, K)$. $\delta$ is the relevance score obtained from creators' ``safe" action. Upper-left: $(0.1, 1)$, Upper-right: $(0.3, 2)$, Lower-left: $(0.1, 5)$, Lower-right: $(0.3, 5)$. The error bars indicate the largest/smallest values from 10 independent trials and the dots correspond to the mean values.}
\label{fig:two_metric_k=2_2}
\end{figure}

\noindent
{\bf Social welfare under different levels of rationality.}
In this experiment, we aim to investigate the competition outcomes when players utilize online-learning algorithms with varying levels of rationality. To better simulate what happens in practice, we employed the dataset generated from MovieLens-1m \citep{harper2015movielens}. In our simulation, we model the scenario in which each player runs Exp3 under different exploration rates $\epsilon$ (i.e., with probability $\epsilon$, each player will take a random action in each round). We use this simulation setup to examine a practical situation, i.e., creators try to optimize their accumulated regret but with bounded rationality: since Exp3 is known to enjoy a sub-linear regret when $\epsilon \sim O(\sqrt{\frac{k\log k}{T}})$ \citep{auer2002nonstochastic}, it would be less rational for creators to set $\epsilon$ to be too large or too small as it would incur a larger regret $R(T)$. We fix $(\beta, K, T)=(0.1, 5, 1000)$ and report the averaged social welfare over $T$ rounds, i.e., $\bar{W}=\frac{1}{mT}\sum_{t=1}^T W(\s^{(t)})$ under different $n$ and $\epsilon$, as illustrated in Figure \ref{fig:movieLens_w_1}. Parallel results obtained with $(\beta, K)=\{(0.5,1),(0.5,5)\}$ are illustrated in \ref{fig:movieLens_w_2}. The plots in both Figure \ref{fig:movieLens_w_1} and \ref{fig:movieLens_w_2} indicate that the optimal exploration rate associated with the maximum welfare for $K=5$ is around $\epsilon=0.1$ across different values of $n$. When $\epsilon$ is set to be either too small or too large, the average welfare decreases, thereby confirming our claim in Corollary \ref{coro:noregret_CCE} that the welfare guarantee deteriorates as the accumulated regret of each player's learning algorithm increases. 
Additionally, we observed that the average welfare increases when more creators are involved, which is expected given that users will have a higher chance of receiving a satisfactory recommendation when there is a larger pool of content on the platform. Furthermore, when the number of players is sufficiently large ($n=100$), the welfare is fairly good even when players adopt nearly randomized strategies ($\epsilon=0.9$). 

Interestingly, the optimal exploration rate for $K=1$ is around $\epsilon=0.3$, which is higher than the optimal $\epsilon$ for $K=5$ as shown in the left panel of Figure \ref{fig:movieLens_w_2}. A potential explanation for this discrepancy is that when the RS provides a greater number of alternatives in the recommendation list, the randomness in users' decisions can better compensate for creators' exploration in finding more diversified strategies, leading to improved social welfare.

\begin{figure}[t]
\begin{center}
\centerline{\includegraphics[width=\columnwidth]{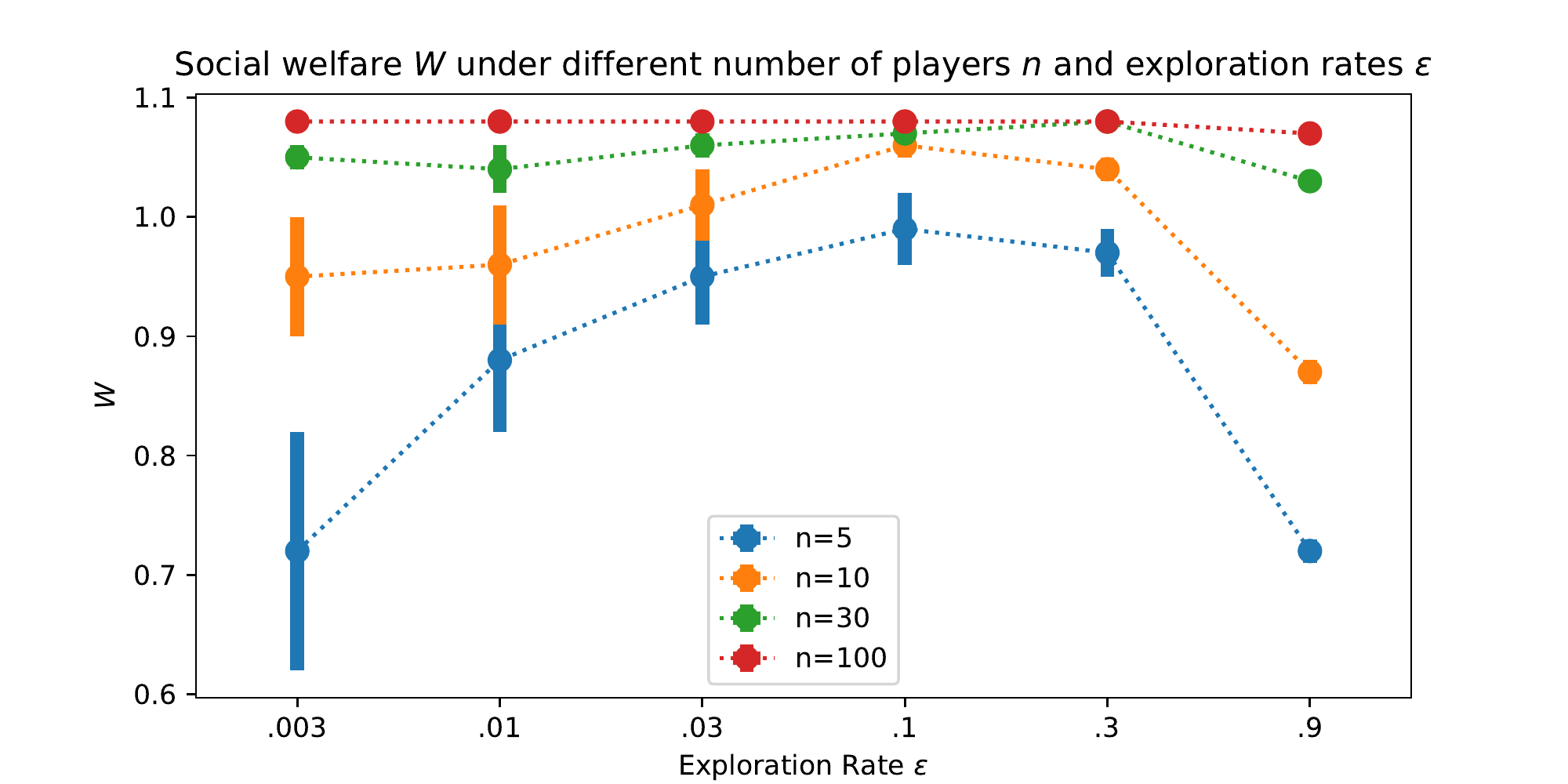}}
\vskip -0.1in
\caption{The averaged welfare $\bar{W}$ over $T=1000$ rounds under different exploration rate $\epsilon$ and number of players $n$. Results are averaged over 10 independent runs under $(\beta, K)=(0.1, 5)$.}
\label{fig:movieLens_w_1}
\end{center}
\end{figure}

\begin{figure}[t]
\includegraphics[width=0.5\columnwidth]{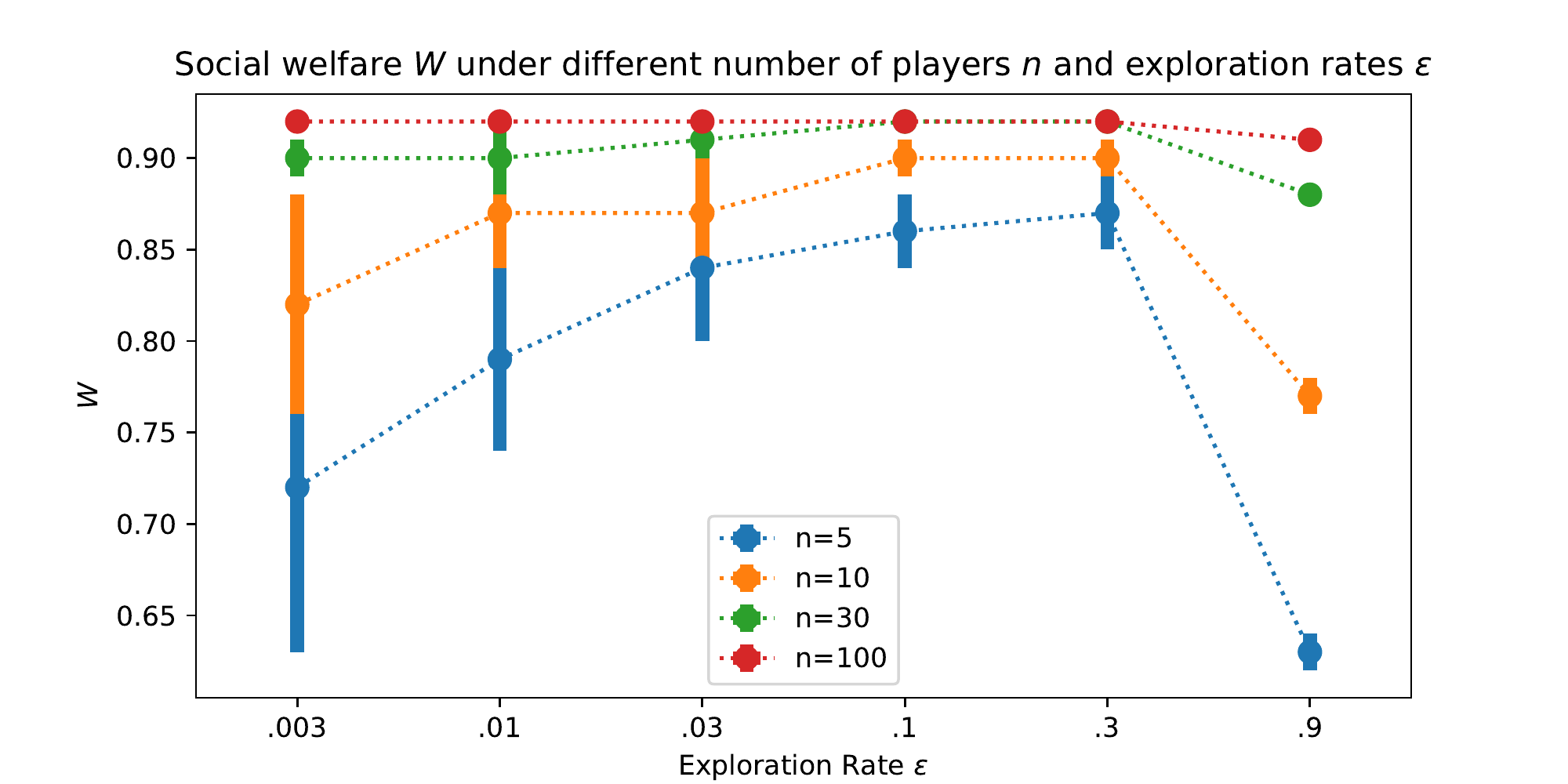}
\includegraphics[width=0.5\columnwidth]{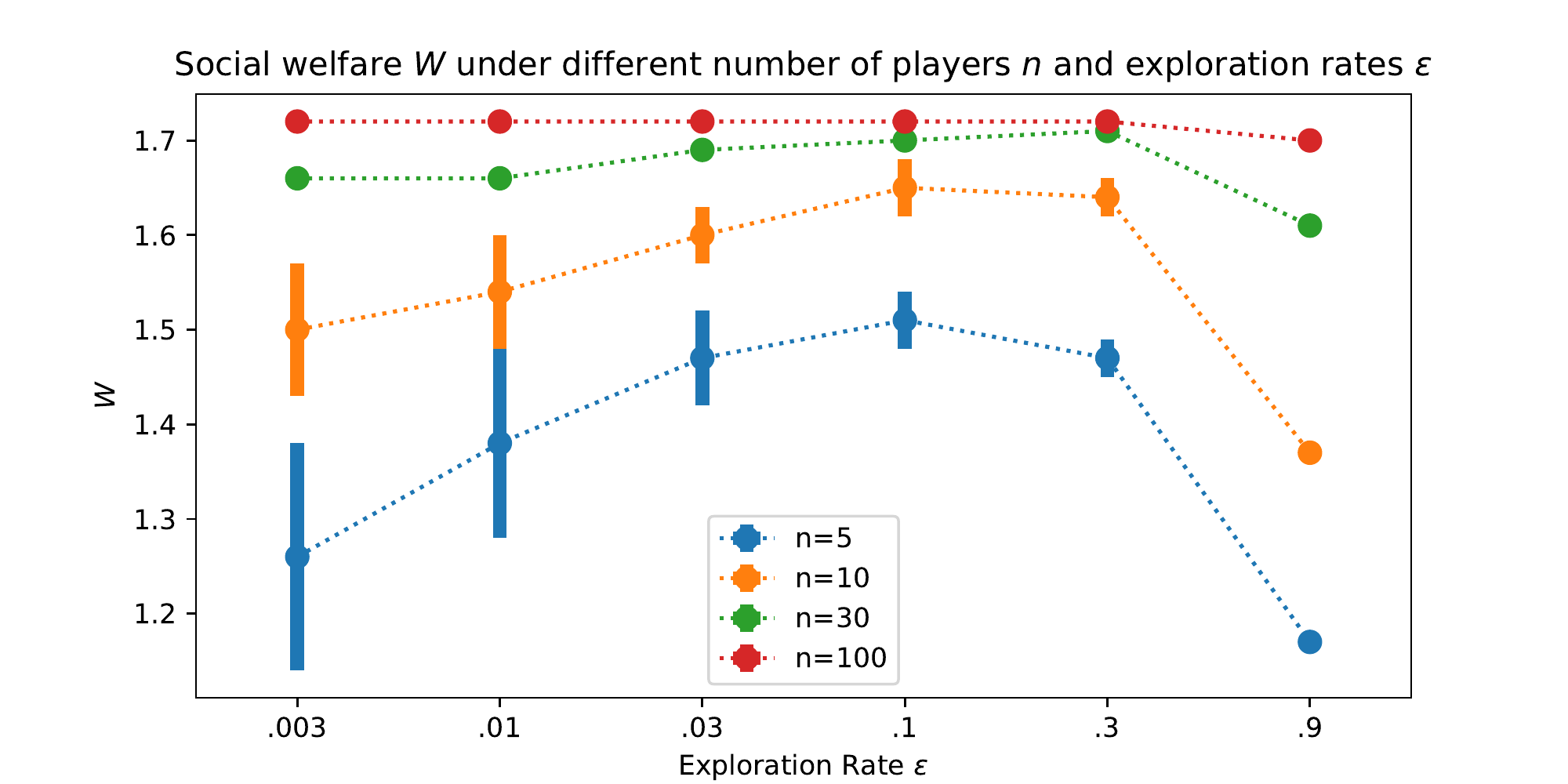}
\caption{The averaged welfare $\bar{W}$ over $T=1000$ rounds under different exploration rate $\epsilon$ and number of players $n$. Results are averaged over 10 independent simulations under $(\beta, K) \in \{(0.5, 1), (0.5,5)\}$.}
\label{fig:movieLens_w_2}
\end{figure}

\noindent{\bf The PotA for MovieLens dataset}
Under the same setting used in Figure \ref{fig:movieLens_w_1} and \ref{fig:movieLens_w_2}, where we computed the average welfare for different numbers of players with various exploration rates, we also present the corresponding PotA in Table \ref{tb:exp3_large_n_beta0.1} and \ref{tb:exp3_large_n_beta0.5}. As can be observed, when creators are able to utilize the optimal exploration rates that minimize regret, social welfare is unexpectedly high. Even with as few as five creators, the fraction of welfare loss is only around $10\%$ (PotA=1.10). As the number of creators increases, the PotA rapidly approaches the ideal value of 1.0. Although the PotA may be underestimated due to approximation accuracy in determining the optimal welfare, such results still convey a very optimistic message. 

\begin{table}[t]
\centering
\caption{PotA under $(\beta, K)=(0.1, 5)$ for the MovieLens dataset. Results are averaged over 10 independent simulations under different exploration factor $\epsilon$.}
 \begin{tabular}{ |c|c|c|c|c|c|c|  }
\hline
  \diagbox{$n$}{$\epsilon$} & .003 & .01 & .03 & .1 & .3 & .9 \\
 \hline
 $5$  & 1.51 & 1.32 & 1.18 & 1.10  & 1.12 & 1.51  \\
 $10$  & 1.15 & 1.14 & 1.08 & 1.03  & 1.05 & 1.25  \\
 $30$  & 1.06 & 1.05 & 1.04 & 1.02  & 1.03 & 1.07 \\
 $100$  & 1.02 & 1.02  & 1.02 & 1.02  &  1.02 & 1.03\\
 \hline
\end{tabular}
 \label{tb:exp3_large_n_beta0.1}
\end{table}

\begin{table}[t]
\centering
\caption{PotA under $(\beta, K)=(0.5, 5)$ for the MovieLens dataset. Results are averaged over 10 independent simulations under different exploration factor $\epsilon$.}
 \begin{tabular}{ |c|c|c|c|c|c|c|  }
\hline
  \diagbox{$n$}{$\epsilon$} & .003 & .01 & .03 & .1 & .3 & .9 \\
 \hline
 $5$  & 1.37 & 1.25 & 1.18 & 1.15  & 1.18 & 1.48  \\
 $10$  & 1.15 & 1.12 & 1.08 & 1.05  & 1.06 & 1.26  \\
 $30$  & 1.04 & 1.04 & 1.02 & 1.02  & 1.01 & 1.07 \\
 $100$  & 1.01 & 1.01  & 1.01 & 1.01  &  1.01 & 1.02\\
 \hline
\end{tabular}
 \label{tb:exp3_large_n_beta0.5}
\end{table}

\vspace{2mm}
\noindent{\bf The genre distribution resulting from the sequence of play}
In addition to evaluating the welfare metric, we also examined the content distribution induced by creators' sequence of play under different exploration rates, as shown in Figure \ref{fig:movieLens_genre_distribution}. The content distribution is generated by counting the frequency of genres (in MovieLens-1m dataset, each movie is associated with multiple genre tags) throughout a simulation trajectory. Specifically, for each simulation path with a fixed time horizon $T=1000$, we collect the complete set of content that all creators have produced and summarize their frequencies w.r.t. genres. We also compute the genre distributions for the optimal joint strategy that maximizes the social welfare under different $n$ as references.  

The leftmost panel in Figure \ref{fig:movieLens_genre_distribution} illustrates the genre distribution of the globally optimal content creation strategies under varying numbers of creators $n$. As $n$ increases, the optimal genre distribution becomes ``flatter", which is consistent with the expectation that new creators who want to maximize their utility should focus on niche interest groups to diversify the overall content distribution. The empirical genre distributions from simulations, shown in the three panels on the right side, are observed to be more polarized in comparison to the optimal distributions. For example, in all cases, creators generate content from the drama genre (the most popular genre) more than the ideal frequency. Such polarization effect is alleviated as the exploration factor on the creator side increases. 
Interestingly, if we look at the empirical frequency under different $n$ for a fixed $\epsilon$, we can find that the content distribution also becomes increasingly diversified as $n$ gets larger, which is beneficial for social welfare. This explains our finding in Table \ref{tb:exp3_large_n_beta0.1} and \ref{tb:exp3_large_n_beta0.5} as well: that PotA gets consistently smaller under various $\epsilon$ as the number of creators increases. These observations provide insight into the positive impact of creators adopting regret minimization strategies: the resulting social welfare is fairly good, even if they are subject to bounded rationalities.

\begin{figure}[t]
\centering
\includegraphics[width=1\columnwidth]{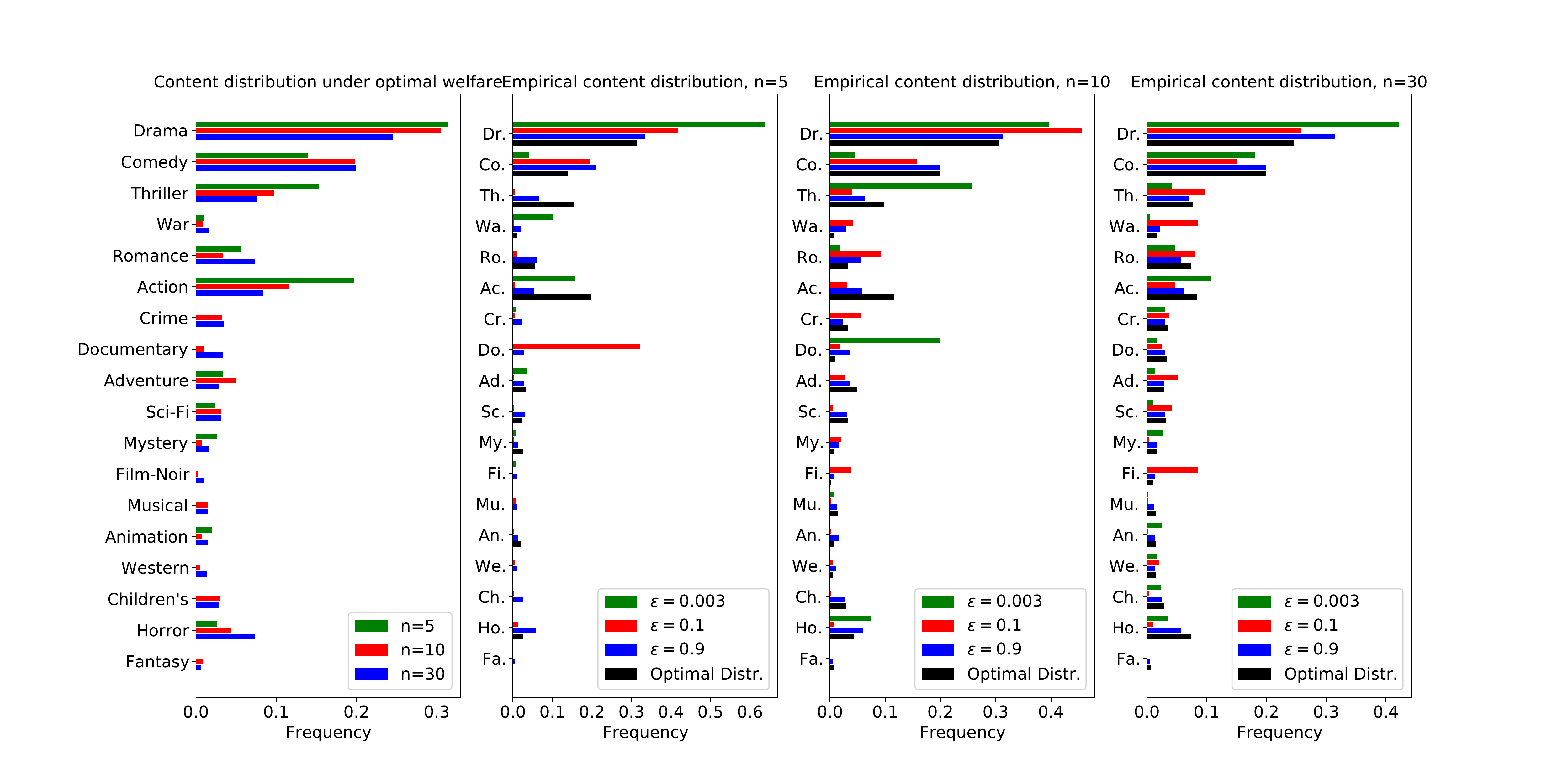}
\caption{The genre distribution induced by creators' adaptive content creation strategies under different $n$. The left most panel shows the optimal genre distribution of the joint content creation strategy that maximizes social welfare under $n=5,10,30$. The remaining 3 panels from left to right correspond to the genre distribution obtained from simulation under $n=5,10,30$. The green/red/blue colors in the 3 panels on the right represent simulation results with different exploration rate $\epsilon =0.003, 0.1, 0.9$ and the black bars denote the optimal distribution. All the frequency values are averaged over 5 independent runs. }
\label{fig:movieLens_genre_distribution}
\end{figure}
\section{Conclusion and Future Work}

We propose the \game{}, a game-theoretical framework for analyzing the strategic behaviors of creators on online content recommendation platforms. Our primary contribution is a comprehensive characterization of social welfare as the outcome of competition among creators, which suggests that the traditional top-$K$ recommendation principle is effective when the platform utilizes user engagement as an incentive metric and offers a sufficient number of choices to users, resonating with the well-known ``invisible hand" argument posited by Adam Smith.


Our positive result hinges on two crucial assumptions: 1. Users on the platform employ the RU model with Gumbel noise when making decisions; 2. the incentive provided by the platform in creator utilities must align with the user utilities (i.e., user engagement) utlized in the social welfare metric. A fascinating and significant future direction may require the relaxation of either of these assumptions. Consequently, we propose two open questions:

{\bf Open Question 1: } Does our main result still hold under different user choice models? For instance, what if users are permitted to make multiple choices, or users adhere to the RU model but with a different decision noise (e.g., Gaussian)? Although we have some preliminary empirical evidence suggesting a similar PoA upper bounded strictly less than $2$, rigorous theoretical analysis can be challenging, as players' utility functions do not necessarily have closed forms and we might need novel techniques to derive a new smoothness constant.

{\bf Open Question 2: } What can we say regarding the social welfare when user utility metrics employed in players' incentives and social welfare do not align? Although it is reasonable for the platform to design such aligned incentives, practical reasons might occasionally render this challenging to implement. Although our Proposition \ref{prop:unbound_user_welfare} demonstrates that the welfare can be arbitrarily bad if player incentives do not correspond with user engagement, it is still intriguing to examine the intermediate situations: what if player incentives and user engagement are correlated but not perfectly aligned? Any supplementary understanding of this extended setting can yield more profound impact on real-world systems.

\section{Acknowledgement}
This work is supported in part by the US National Science Foundation under grants IIS-2007492, IIS-2128019 and IIS-1838615. Haifeng Xu is supported in part by an ARO award W911NF-23-1-0030. 

\vskip 0.2in

\bibliography{main}

\appendix 
\newpage


\section{Proof of Theorem \ref{thm:PoA}}\label{app:poa_proof}
First we derive the closed-forms of the utility and welfare functions of \game{} given in Lemma \ref{lm:closed_form_utility}.

\subsection{Proof of Lemma \ref{lm:closed_form_utility}:  Closed Forms of   Utility and Welfare Functions}\label{app:close_utility}
We start with a few known and useful properties of Gumbel distributions. 
    \begin{lemma}\label{lm:gumbel_property}[e.g., \citep{balog2017lost}]
Let $(v_1,\cdots,v_n)\in\RR^n$ be any real-valued vector and $\varepsilon_1, \cdots,\varepsilon_n$ be independent samples from $\text{Gumbel}(\mu,\beta)$. Then
    \begin{equation}
        \arg\max_i(v_i+\varepsilon_i) \sim \text{Categorical}\Big(\frac{\exp(\beta^{-1}v_i)}{\sum_{j=1}^n \exp(\beta^{-1}v_j)}\Big),
    \end{equation}
    and     
    \begin{equation}
        \max_i(v_i+\varepsilon_i) \sim \text{Gumbel}\Big(\mu+\beta\log \Big(\sum_{j=1}^n \exp(\beta^{-1}v_j)\Big), \beta\Big).
    \end{equation}
\end{lemma}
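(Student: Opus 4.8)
\textbf{Proof proposal for Lemma \ref{lm:gumbel_property}.} This lemma is a classical fact about the Gumbel distribution (sometimes called the ``Gumbel-max trick''), so the plan is to give a direct computation from the definition of the CDF. The key observation is that affine-shifted Gumbels are again Gumbels in the following sense: if $\varepsilon \sim \text{Gumbel}(\mu,\beta)$ then $\varepsilon + v$ has CDF $e^{-e^{-(x-\mu-v)/\beta}}$, i.e. $\varepsilon + v \sim \text{Gumbel}(\mu+v,\beta)$. Hence each $v_i + \varepsilon_i \sim \text{Gumbel}(\mu+v_i,\beta)$, and by independence the CDF of $M := \max_i(v_i+\varepsilon_i)$ is the product $\prod_{i=1}^n e^{-e^{-(x-\mu-v_i)/\beta}} = \exp\!\big(-e^{-(x-\mu)/\beta}\sum_i e^{v_i/\beta}\big)$. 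Writing $\sum_i e^{v_i/\beta} = e^{\log(\sum_i e^{v_i/\beta})}$ and absorbing it into the exponent shows this equals $\exp\!\big(-e^{-(x-\mu-\beta\log\sum_i e^{v_i/\beta})/\beta}\big)$, which is exactly the CDF of $\text{Gumbel}(\mu+\beta\log(\sum_j e^{\beta^{-1}v_j}),\beta)$. This proves the second claim.

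For the first claim, I would compute $\Pr[\arg\max_i(v_i+\varepsilon_i) = k]$ directly by conditioning on the value of $\varepsilon_k$. Using the density $f_k(x) = \frac{1}{\beta}e^{-(x-\mu)/\beta}e^{-e^{-(x-\mu)/\beta}}$ of $v_k+\varepsilon_k$ at $x$ (writing the density of the shifted variable), the probability that $v_k+\varepsilon_k = x$ is the maximum is $x \mapsto f_k(x)\prod_{i\neq k}\Pr[v_i+\varepsilon_i < x]$. Integrating over $x$ and carrying out the substitution $t = e^{-(x-\mu)/\beta}$ collapses the integral: the product of the other CDFs contributes $\exp(-t\sum_{i\neq k}e^{v_i/\beta})$ and the density contributes $e^{v_k/\beta}\,dt$ worth of mass (up to the substitution Jacobian), so the integral becomes $e^{v_k/\beta}\int_0^\infty \exp(-t\sum_{i}e^{v_i/\beta})\,dt = e^{v_k/\beta}/\sum_i e^{v_i/\beta}$, which is the claimed categorical probability.

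I do not anticipate a genuine obstacle here, since both statements are standard; the only mild care needed is bookkeeping with the location shift (the lemma writes the noise as centered at $\mu$ and the scores $v_i$ as deterministic offsets) and making sure the change of variables $t = e^{-(x-\mu)/\beta}$ is applied consistently in the integral for the $\arg\max$ distribution. Since the paper cites \citep{balog2017lost} for this fact, it would also be acceptable to simply state it as known; but including the two-line CDF computation for the $\max$ and the one change-of-variables integral for the $\arg\max$ makes the subsequent derivation of Lemma \ref{lm:closed_form_utility} self-contained.
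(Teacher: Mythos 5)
Your proof is correct, but note that the paper itself does not prove this lemma at all --- it is stated as a known property of the Gumbel distribution and attributed to the literature (the ``Gumbel-max trick''), with the citation to \citep{balog2017lost} serving in place of a proof. Your two computations are the standard ones and both check out: the product-of-CDFs argument for the max (using that $v_i+\varepsilon_i\sim\text{Gumbel}(\mu+v_i,\beta)$ and factoring $\sum_i e^{v_i/\beta}$ back into the exponent) and the change of variables $t=e^{-(x-\mu)/\beta}$ for the $\arg\max$ probability, which collapses the integral to $e^{v_k/\beta}/\sum_i e^{v_i/\beta}$. One small notational slip: the density you write down, $f_k(x)=\frac{1}{\beta}e^{-(x-\mu)/\beta}e^{-e^{-(x-\mu)/\beta}}$, is the density of $\text{Gumbel}(\mu,\beta)$ rather than of the shifted variable $v_k+\varepsilon_k\sim\text{Gumbel}(\mu+v_k,\beta)$; the correct density is $\frac{1}{\beta}e^{-(x-\mu-v_k)/\beta}e^{-e^{-(x-\mu-v_k)/\beta}}$, and it is this shift that produces the $e^{v_k/\beta}$ factor you correctly carry through the substitution. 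With that fixed, your argument is a clean, self-contained replacement for the citation, and it is in the same spirit as the substitution $t=e^{-x}$ that the paper does spell out later when deriving the conditional expectation in the creator-utility closed form.
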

    
\vspace{2mm}
\noindent
\textbf{Derivation of user utility and welfare.}
These derivations follow easily from Lemma \ref{lm:gumbel_property}. Since we assumed that $\varepsilon_i \sim $Gumbel$(-\beta\gamma,\beta)$, leveraging properties in Lemma \ref{lm:gumbel_property} we conclude that $\x_j$'s choice distribution over $K$ alternatives $\{\s_1,\cdots,\s_K\}=\T_j(\s;K)$ follows the soft-max rule
    \begin{equation}
        \Pr[\x_j \shortrightarrow \s_i] = \frac{\exp(\beta^{-1}\sigma(\s_i, \x_j))}{\sum_{\s_k\in\T_j(\s;K)} \exp(\beta^{-1}\sigma(\s_k, \x_j))},
    \end{equation}

and the expected user utility after making choices has the following form
    \begin{align}\label{eq:746}
      \pi_j(\x_j) =   \EE\left[\max_{i\in [K]} \{\sigma(\s_i, \x_j)+\varepsilon_i\}\right] &= \beta\log\left[\sum_{\s_k\in\T_j(\s;K)} \exp(\beta^{-1}\sigma(\s_k, \x_j))\right].
    \end{align}
Taking expectation over all users, we obtain the following welfare function
\begin{align}\label{eq:G_welfare}
    W(\s) = \sum_{j=1}^m \EE\left[\max_{\s_k\in\T_j(\s;K)} \{\sigma(\s_k, \x_j)+\varepsilon_i\}\right] 
    =\beta\sum_{j=1}^m \log \left[\sum_{\s_k\in\T_j(\s;K)} \exp{(\beta^{-1}\sigma(\s_k, \x_j))}\right].
\end{align}

By setting $\tilde{W}(\s)=\beta W(\s), \tilde{\sigma}(\s,\x)=\beta^{-1}\sigma(\s,\x)$, we have $$\tilde{W}(\s)=\sum_{j=1}^m \log [\sum_{\s_k\in\T_j(\s;K)} \exp{(\tilde{\sigma}(\s_k, \x_j))}].$$ Therefore, under a rescaling of constant $\beta$ it is with out loss of generality to consider a scoring function $\sigma \in [0, \frac{1}{\beta}]$, the user utility function and the social welfare function in the following form

\begin{equation}\label{eq:user_utility_close}
        \pi_j(\s)= \log \Big[\sum_{\s_k\in\T_j(\s;K)} \exp{(\sigma(\s_k, \x_j))}\Big],
\end{equation}

\begin{equation}\label{eq:welfare_close}
        W(\s)= \sum_{j=1}^m \log \Big[\sum_{\s_k\in\T_j(\s;K)} \exp{(\sigma(\s_k, \x_j))}\Big].
\end{equation}

\vspace{2mm}
\noindent
\textbf{Derivation of creator utility.}
This turns out to be a new result which requires non-trivial arguments. The players' utility is   given by
\begin{align}\label{eq:756}
        u_i(\s)&=\sum_{j=1}^m \EE[\sigma(\s_i, \x_j)+\varepsilon_i|\x_j \shortrightarrow \s_i]\cdot \Pr[\x_j \shortrightarrow \s_i] \\ \label{eq:G_util}
        &= \sum_{j=1}^m \EE[\sigma(\s_i, \x_j)+\varepsilon_i|\x_j \shortrightarrow \s_i]\cdot \frac{\exp(\sigma(\s_i, \x_j))}{\sum_{\s_k\in\T_j(\s;K)} \exp(\sigma(\s_k, \x_j))},
\end{align} 
 
   According to the definition in \eqref{eq:G_util}, what we need to show is that for i.i.d. random variables $\{\varepsilon_i\}_{i=1}^K$ sampled from Gumbel$(-\beta\gamma,\beta)$, 
    \begin{equation}
        \EE[\sigma(\s_i, \x_j)+\varepsilon_i|\x_j \shortrightarrow \s_i] = \EE[\max_{k\in[K]}\{\sigma(\s_k, \x_j)+\varepsilon_i\}]=\log \Big[\sum_{\s_k\in\T_j(\s;K)} \exp{(\sigma(\s_k, \x_j))}\Big],
\end{equation}
i.e., for any $(v_1,\cdots,v_K)\in \RR^K$ and i.i.d. random variables $\{\varepsilon_i\}_{i=1}^K$ sampled from Gumbel$(0,1)$,

\begin{equation}\label{eq:790}
    \EE[v_i+\varepsilon_i|i=\arg\max_{k\in[K]}(v_k+\varepsilon_k)]=\gamma+\log\Big(\sum_{k=1}^K \exp(v_k)\Big).
\end{equation}

Let $Y_i=\max_{k\in [K],k\neq i}(v_k+\varepsilon_k)\sim \text{Gumbel}(\log(\sum_{k\neq i}\exp(v_k)),1)$ and $X_i=v_i+\varepsilon_i \sim \text{Gumbel}(v_i, 1)$. Then $X_i$ has the probability density function
\begin{equation}
    f_i(x)=\exp(-((x-v_i)+e^{-(x-v_i)})),
\end{equation}
and $Y$ has the cumulative distribution function
\begin{equation}
    F_i(y)=\exp(-e^{-(y-\log(\sum_{k\neq i}\exp(v_k)))})).
\end{equation}
Therefore we can explicitly compute the conditional expectation of $X_i$ as follows: 
\begin{align}\notag
   &\EE[v_i+\varepsilon_i|i=\arg\max_{k\in[K]}(v_k+\varepsilon_k)] \\\notag
   =& \EE[v_i+\varepsilon_i|v_i+\varepsilon_i \geq \max_{k\in [K],k\neq i}(v_k+\varepsilon_k)] \\ \label{eq:812}
   =& \EE[X|X \geq Y, X\sim \text{Gumbel}(v_i, 1), Y\sim \text{Gumbel}(\log(\sum_{k\neq i}\exp(v_k)),1)] \\\notag
   =& \frac{\int_{\RR} xf_i(x) F_i(x)dx}{\int_{\RR} f_i(x) F_i(x)dx} \\\notag
   =& \frac{\int_{\RR} x\exp(-((x-v_i)+e^{-(x-v_i)})) \exp(-e^{-(x-\log(\sum_{k\neq i}\exp(v_k)))}))dx}{\int_{\RR} \exp(-((x-v_i)+e^{-(x-v_i)})) \exp(-e^{-(x-\log(\sum_{k\neq i}\exp(v_k)))}))dx} \\ \label{eq:809}
   =& \frac{\int_{\RR_{\geq 0}} -\ln t\cdot \exp (-t\sum_{k=1}^K\exp(v_k))dt}{\int_{\RR_{\geq 0}} \exp (-t\sum_{k=1}^K \exp(v_k)) dt} \\ \label{eq:810}
   =& \ln\Big(\sum_{k=1}^K \exp(v_k))\Big) + \frac{\int_{\RR_{\geq 0}} -\ln s\cdot \exp (-s)ds}{\int_{\RR_{\geq 0}} \exp (-s) ds} \\ \notag 
   =& \ln\Big(\sum_{k=1}^K\exp(v_k))\Big) - \frac{d}{d\alpha}\int_{\RR_{\geq 0}} s^{\alpha}e^{-s}ds \\ \notag 
   =& \ln\Big(\sum_{k=1}^K \exp(v_k))\Big) - \frac{d}{d\alpha}\Gamma(\alpha+1)\Big|_{\alpha=0} \\ \label{eq:813}
   =& \ln\Big(\sum_{k=1}^K \exp(v_k))\Big) + \gamma.
\end{align}
where \eqref{eq:812} holds because of Lemma \ref{lm:gumbel_property}, \eqref{eq:809} and \eqref{eq:810} hold by change of variables $t=e^{-x}$ and $s=t\sum_{k=1}^K \exp(v_k))$, and \eqref{eq:813} is from the definition of Euler-Mascheroni constant. Therefore we show \eqref{eq:790} and the players' utility function has the following form

\begin{equation}\label{eq:player_utility_close}
        u_i(\s)= \sum_{j=1}^m \Big(\log \Big[\sum_{\s_k\in\T_j(\s;K)} \exp{(\sigma(\s_k, \x_j))}\Big]\Big) \frac{\II[\s_i\in \T_j(\s;K)]\exp(\sigma(\s_i, \x_j))}{\sum_{\s_k\in \T_j(\s;K)} \exp(\sigma(\s_k, \x_j))}.
\end{equation}

\subsection{Proof of Lemma \ref{lm:submodular_V_} and \ref{lm:utility_property_}: Properties of Utility and Welfare Functions }\label{app:lemma23_proof}

We consider the utility and welfare functions given in \eqref{eq:user_utility_close}, \eqref{eq:player_utility_close} and \eqref{eq:welfare_close} under the re-scaling of constant $\beta$ with the new assumption that $\sigma(\s,\x) \in [0, \frac{1}{\beta}], \forall \s\in \cup_{i=1}^n \S_i, \x\in \X$. To simplify the subsequent analysis, we first specify some useful notations and conventions. For any joint strategy profile $\s=(\s_1, \cdots, \s_n)$, we use capital letter $S$ to denote its set representation, i.e., $S=\{\s_1, \cdots, \s_n\}$. In this way we can view $\T_j(\s;K),\pi_j(\s), u_i(\s),W(\s)$ defined in \eqref{eq:Tj}, \eqref{eq:user_utility_original}, \eqref{eq:player_utility_original}, \eqref{eq:welfare_original} as set functions $\T_j(S;K), \pi_j(S), u_i(S), W(S)$. From now on, we will use the set notation $S$ and the vector notation $\s$ interchangeably, depending on the context. Similarly, we use $S_{-i}$ to denote the set $\{\s_1, \cdots, \s_n\}$ excluding element $\s_i$. Moreover, we extend the definition of $\T_j(S;K)$ by allowing $|S|=K-1$ in the following sense: when $|S|=K-1$, we let $\T_j(S;K)=S \cup \{\bar{\s}\}$, where $\bar{\s}$ is a default external choice such that $\sigma(\bar{\s}, \x)=0$ for all $\x\in \X$. This extension captures the situation when the system does not have enough active content creators to allocate to the users. When such a situation happens, the system will put a default choice $\bar{\s}$ in the top-$K$ list without any utility guarantee. We remark that this extended definition is introduced merely for the convenience of presentation and does not affect the implication of our main result. 

Prior to the proofs for Lemma \ref{lm:submodular_V_} and \ref{lm:utility_property_}, we present two intermediate results in Proposition \ref{prop:monotone_V} and Lemma \ref{lm:eq_monotone}. Proposition \ref{prop:monotone_V} reveals a rather basic property of social welfare $W$ which is useful in the proof of Theorem \ref{thm:PoA}, and Lemma \ref{lm:eq_monotone} is useful in the proof of Lemma \ref{lm:utility_property_}.

\begin{proposition}\label{prop:monotone_V}
Fix a joint strategy $S=\{\s_1, ..., \s_n\}$ in any $n-$player \game{} $\G$. If we add an additional player indexed by $n+1$ with pure strategy $\s_{n+1}$ to the game and let $S'=\{\s_1, ..., \s_n, \s_{n+1}\}$, the social welfare $W$ will strictly increase, i.e.,
    \begin{equation}\label{eq:772}
    W(S') > W(S).
    \end{equation}
\end{proposition}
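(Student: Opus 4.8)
\textbf{Proof proposal for Proposition \ref{prop:monotone_V}.}

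The plan is to show that adding a new creator $\s_{n+1}$ weakly increases every user's utility $\pi_j$, and strictly increases it for at least one user. Using the closed form \eqref{eq:welfare_close}, $W(S) = \sum_{j=1}^m \log\big[\sum_{\s_k \in \T_j(S;K)} \exp(\sigma(\s_k,\x_j))\big]$, so it suffices to argue term by term that $\pi_j(S') \geq \pi_j(S)$ for all $j$. The key observation is that $\T_j(S';K)$ is obtained from $\T_j(S;K)$ by a single \emph{swap}: either $\s_{n+1}$ does not crack user $j$'s top-$K$ (in which case $\T_j(S';K) = \T_j(S;K)$ and the term is unchanged), or $\s_{n+1}$ enters the top-$K$ and displaces exactly one previously-present element, say $\s_{\text{out}}$, which by the sorting property \eqref{eq:Tj} satisfies $\sigma(\s_{\text{out}},\x_j) \leq \sigma(\s_{n+1},\x_j)$. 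In the latter case the inner sum $\sum_{\s_k \in \T_j(\cdot;K)} \exp(\sigma(\s_k,\x_j))$ changes by $\exp(\sigma(\s_{n+1},\x_j)) - \exp(\sigma(\s_{\text{out}},\x_j)) \geq 0$ since $\exp$ is monotone; because $\log$ is monotone, $\pi_j(S') \geq \pi_j(S)$. One should be slightly careful about the boundary case $|S| = K-1$: here $\T_j(S;K) = S \cup \{\bar\s\}$ with $\sigma(\bar\s,\x_j)=0$, and $\s_{n+1}$ always enters $\T_j(S';K)$, displacing $\bar\s$ (or tying with it), again giving a weak increase.

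For the \emph{strict} inequality, I will exhibit at least one user for whom the term strictly increases. Since $\sigma$ is normalized to $[0,1]$ (equivalently $[0,1/\beta]$ after rescaling) and — implicitly in the model — a creator's strategy is a genuine content embedding, for any fixed $\s_{n+1}$ there is at least one user $\x_{j_0}$ with $\sigma(\s_{n+1}, \x_{j_0}) > 0$; moreover in the relevant regime $n+1 > K$ (or $|S|=K-1$), so that $\T_{j_0}(S';K)$ genuinely differs from $\T_{j_0}(S;K)$ only if $\s_{n+1}$ is competitive enough. The cleanest way to handle this is to note that in the boundary case $|S| = K-1$ the displaced element is $\bar\s$ with value $0 < \sigma(\s_{n+1},\x_{j_0})$, forcing a strict gain; and in the case $|S| \geq K$ one invokes that for the proposition to be non-vacuous we may assume $\s_{n+1}$ reaches someone's top-$K$ strictly (otherwise $W(S')=W(S)$ and the statement as written would need the hypothesis that $\s_{n+1}$ is a ``useful'' addition). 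I expect the authors actually assume $\sigma > 0$ somewhere or that the displaced entry is strictly dominated; I would state the precise minimal hypothesis (e.g., $\exists j$ with $\s_{n+1} \in \T_j(S';K)$ and $\sigma(\s_{n+1},\x_j)$ strictly exceeding the displaced value) and carry the argument through.

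The main obstacle is not analytical depth but rather pinning down the exact condition under which strictness holds: the term-wise monotonicity argument is immediate from monotonicity of $\exp$ and $\log$, but the strict version requires that $\s_{n+1}$ actually be consumed with positive probability by some user and that it displace a strictly-less-relevant item. I would resolve this by leaning on the extended-domain convention for $|S| = K-1$ (where the displaced default $\bar\s$ has relevance $0$), which makes strictness automatic in the regime where the number of creators is at most $K$; and for $n \geq K$, by observing that the sorted-relevance structure \eqref{eq:Tj} together with the assumption that action sets contain content with nonzero relevance to users guarantees a strict swap for at least one user. The remaining steps — substituting the closed form, summing the per-user inequalities, and concluding — are routine.
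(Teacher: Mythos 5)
Your argument is essentially the paper's: both proofs compare the inner sums $\sum_{\s \in \T_j(\cdot;K)} \exp(\sigma(\s,\x_j))$ user by user, observe that passing from $\T_j(S;K)$ to $\T_j(S';K)$ can only (weakly) increase this sum, and conclude via monotonicity of $\log$. Your hesitation about strictness is well-founded rather than a defect of your write-up: the paper's own proof likewise establishes only the weak inequality on the inner sums and then asserts the strict conclusion \eqref{eq:772}, and as you note the strict version genuinely fails without an extra hypothesis --- e.g.\ if $n\geq K$ and $\sigma(\s_{n+1},\x_j)=0$ for every user $j$, then no top-$K$ list changes in value and $W(S')=W(S)$. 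The discrepancy is harmless downstream: in the proof of Theorem \ref{thm:PoA} the proposition is only used to pass from $W([\s^*,\s])-W(\s)$ to $W(\s^*)-W(\s)$ in \eqref{eq:165}, and replacing $>$ by $\geq$ there merely weakens the final PoA bound from strict to non-strict, which does not affect any of the paper's conclusions.
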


\begin{proof}
By definition,
\begin{equation}
    W(S') = \sum_{j=1}^m \Big(\log \Big[\sum_{\s\in \T_j(S';K)} \exp{(\sigma(\s, \x_j))}\Big]\Big),
\end{equation}
\begin{equation}
    W(S) = \sum_{j=1}^m \Big(\log \Big[\sum_{\s\in \T_j(S;K)} \exp{(\sigma(\s, \x_j))}\Big]\Big),
\end{equation}
 It is obvious that for any fixed user $j$, the sum of exponential scores of top-$K$ choices from $S'$ is better than that from $\S$, i.e.,
$$\sum_{\s\in \T_j(S';K)} \exp{(\sigma(\s, \x_j))}\geq \sum_{\s\in \T_j(S';K)} \exp{(\sigma(\s, \x_j))}.$$
Therefore, \eqref{eq:772} holds immediately by the monotonicity of the logarithmic function.
\end{proof}

 Proposition \ref{prop:monotone_V} reveals an important yet natural property of real-world content provider competitions: when there are more competitors in the market, users are facing more alternatives and thus their welfare will always increase. 




\begin{lemma}\label{lm:eq_monotone}
The following function
\begin{equation}\label{eq:monotone_xy}
    f(x,y)=\frac{(x+1)\log(x+y)}{(x+y)(\log(x+y)-\log y)}, (x,y)\in \RR_{+}\times \NN_+,
\end{equation}
is monotonically increasing in $y$ for any $x \in \RR_{+}$, and is monotonically decreasing in $x$ for any integer $y \in \NN_+$.
\end{lemma}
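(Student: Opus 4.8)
The plan is to treat each monotonicity claim separately by fixing one variable and analyzing the resulting single-variable function via its derivative (or, for the integer variable $y$, via a discrete difference that we still estimate through calculus). Write $f(x,y) = \frac{(x+1)\log(x+y)}{(x+y)\big(\log(x+y)-\log y\big)}$ and set $u = x+y > y \geq 1$, so $\log u - \log y = \log(u/y) > 0$ and the denominator is positive; hence $f > 0$ throughout the domain, and it suffices to study the sign of the relevant partial derivative of $\log f$.

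For monotonicity in $y$ with $x$ fixed, I would first observe that, although $y$ is restricted to $\NN_+$, the natural move is to extend $f(x,\cdot)$ to a smooth function on $y \in (0,\infty)$ and show $\partial f/\partial y > 0$ there, which immediately gives the discrete monotonicity. Differentiating $\log f = \log(x+1) + \log\log(x+y) - \log(x+y) - \log\big(\log(x+y)-\log y\big)$ in $y$: the term $\log(x+1)$ drops, $\frac{\partial}{\partial y}\log\log(x+y) = \frac{1}{(x+y)\log(x+y)}$, $\frac{\partial}{\partial y}\log(x+y) = \frac{1}{x+y}$, and $\frac{\partial}{\partial y}\log\big(\log(x+y)-\log y\big) = \frac{1/(x+y) - 1/y}{\log(x+y)-\log y} = \frac{-x/\big(y(x+y)\big)}{\log(x+y)-\log y}$. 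So
\begin{equation*}
\frac{\partial \log f}{\partial y} = \frac{1}{(x+y)\log(x+y)} - \frac{1}{x+y} + \frac{x}{y(x+y)\big(\log(x+y)-\log y\big)}.
\end{equation*}
Multiplying by $(x+y) > 0$, I need $\frac{1}{\log u} - 1 + \frac{x}{y\log(u/y)} > 0$ where $u=x+y$. The middle $-1$ is the dangerous term; I would bound $\frac{x}{y\log(u/y)}$ from below using $\log(u/y) = \log(1 + x/y) \leq x/y$, which gives $\frac{x}{y\log(u/y)} \geq 1$. Hence the whole expression is at least $\frac{1}{\log u} > 0$ (strictly, since $u$ is finite), proving strict increase in $y$. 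This inequality $\log(1+t) \le t$ is exactly the leverage point, and I expect the $y$-direction to be the easy one.

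For monotonicity in $x$ with $y \in \NN_+$ fixed, I differentiate $\log f = \log(x+1) + \log\log(x+y) - \log(x+y) - \log\big(\log(x+y)-\log y\big)$ in $x$; now $\log(x+1)$ contributes $\frac{1}{x+1}$, and the remaining three terms behave as before with $\partial/\partial x$ of $x+y$ equal to $1$ (no $1/y$ appears in the last term since $\log y$ is constant):
\begin{equation*}
\frac{\partial \log f}{\partial x} = \frac{1}{x+1} + \frac{1}{(x+y)\log(x+y)} - \frac{1}{x+y} - \frac{1}{(x+y)\big(\log(x+y)-\log y\big)}.
\end{equation*}
I need to show this is $< 0$, i.e. $\frac{1}{x+1} + \frac{1}{u\log u} < \frac{1}{u} + \frac{1}{u\log(u/y)}$ with $u = x+y$. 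The main obstacle is here: one must exploit $u \geq x+1$ (since $y\geq 1$) together with $\log(u/y) = \log(1+x/y) \geq \log(1+x/u)$ — no, more carefully, one wants an upper bound on $\frac{1}{x+1}$ and a lower bound on $\frac{1}{u} - \frac{1}{u\log(u/y)} + \frac{1}{u\log u}$, i.e. on $\frac{1}{u}\big(1 + \frac{1}{\log u} - \frac{1}{\log(u/y)}\big)$. Since $\log(u/y) \le \log u$, the term $\frac1{\log u} - \frac1{\log(u/y)} \le 0$, so this requires care; I would split into the case $y=1$ (where $\log(u/y)=\log u$ and the two log-terms cancel, reducing the claim to $\frac{1}{x+1} < \frac{1}{x+1}$, wait — $u=x+1$ exactly, giving equality, so strictness fails at $y=1$: the function is actually constant in $x$ when $y=1$, consistent with $c(\beta,1)=1$) and the case $y \geq 2$. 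Thus I anticipate the honest statement is weak (non-strict) monotonic decrease in $x$, with genuine strict decrease only for $y\geq 2$; the proof for $y\ge 2$ will come down to verifying the single-variable inequality above, for which I would substitute $x = u - y$ and check that $g(u) := \frac{1}{u-y+1} + \frac{1}{u\log u} - \frac{1}{u} - \frac{1}{u(\log u - \log y)} \le 0$ for all $u \ge y+1 \ge 3$ — likely by showing $g(y+1) \le 0$ directly and $g'(u)$ has a controllable sign, or by clearing denominators to reduce to a polynomial-times-log inequality amenable to standard estimates. This last calculus estimate is the crux and the part I would expect to occupy most of the written proof.
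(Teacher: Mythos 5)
Your treatment of the $y$-direction is complete and correct: passing to $\partial_y \log f$, multiplying by $x+y$, and using $\log(1+x/y)\le x/y$ to get $\tfrac{x}{y\log(u/y)}\ge 1$ is in substance the same computation as the paper's (which writes the numerator of $\partial_y f$ as $\log(1+\tfrac{x}{y})+\log(x+y)\left[\tfrac{x}{y}-\log(1+\tfrac{x}{y})\right]$ and invokes the same inequality $t\ge\log(1+t)$). Your observation that $f(x,1)\equiv 1$, so the decrease in $x$ is only non-strict at $y=1$, is also right and consistent with the paper.

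However, the $x$-direction contains a genuine gap: you correctly reduce the claim to the inequality
\begin{equation*}
\frac{1}{x+1}+\frac{1}{u\log u}\;\le\;\frac{1}{u}+\frac{1}{u\bigl(\log u-\log y\bigr)},\qquad u=x+y,\ y\ge 2,
\end{equation*}
but you never prove it — you only list candidate strategies (``showing $g(y+1)\le 0$ and $g'$ has a controllable sign, or clearing denominators'') and explicitly defer the ``crux.'' This is precisely the hard half of the lemma, and the sign is \emph{not} uniformly controllable in the way you hope: the paper clears denominators to the equivalent statement $g(x,K):=(x+1)\log K-(K-1)\log(x+K)\log(1+\tfrac{x}{K})\ge 0$, proves it by showing $g$ is nondecreasing in $x$ from $g(0,K)=\log K\ge 0$, and establishing $\partial_x g\ge 0$ itself requires a three-way case split: $K=1$ (identically zero), $K\ge 5$ (where $\log K-\tfrac{2(K-1)}{K}>0$, which \emph{fails} for $K=4$), and $K\in\{2,3,4\}$ verified individually via $\tfrac{x+1}{2(K-1)}\log K>\log(1+\tfrac{x}{K})$. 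Note that the integrality of $y$, which your argument never exploits, is what makes the finite check over $\{2,3,4\}$ possible. Without supplying this (or an equivalent) estimate, the monotonicity in $x$ — and hence the lower bound $f(x,y)\ge f(e^{1/\beta}-1,K)=c(\beta,K)$ that the main theorem rests on — is unproven.
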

\begin{proof}
We first demonstrate the monotonicity of $f(\cdot,y)$ by directly calculating its partial derivatives. Note that $t\geq \log(1+t)$ holds for any $t\geq0$, we have 
\begin{equation}
    \frac{1}{x+1}\frac{\partial f(x,y)}{\partial y}=\frac{\log(1+\frac{x}{y})+\log(x+y)[\frac{x}{y}-\log(1+\frac{x}{y})]}{[(x+y)(\log(x+y)-\log y)]^2}>0,
\end{equation}
which implies that $f(x,y)$ is increasing in $y$. Now it remains to show the monotonicity w.r.t. $x$ when fixing $y=K$, which is slightly more intricate. The derivative of $f(x,K)$ w.r.t. $x$ now writes
\begin{align*} 
   f'(x,K) &= \frac{(K-1)\log(x+K)\log(1+\frac{x}{K})-(x+1)\log K}{[(x+K)(\log(x+K)-\log K)]^2} \\ &\triangleq \frac{-g(x,K)}{[(x+K)(\log(x+K)-\log K)]^2},
\end{align*}
and 
\begin{align}\label{eq:gxy103}
    g'(x,K) &= \frac{1}{x+K}\Big[(2K+x-1)\log K-2(K-1)\log(x+K)\Big] \\ \notag
    & = \frac{2(K-1)}{x+K} \Big[\frac{2K+x-1}{2(K-1)}\log K-\log(x+K)\Big] \\ \label{eq:gxy105}
    & = \frac{2(K-1)}{x+K} \Big[\frac{x+1}{2(K-1)}\log K-\log(1+\frac{x}{K})\Big] \\ \notag
    & \geq \frac{2(K-1)}{x+K} \Big[\frac{x+1}{2(K-1)}\log K-\frac{x}{K}\Big]  \\\label{eq:gxy107}
    & \geq \frac{x}{x+K} \Big[\log K-\frac{2(K-1)}{K}\Big].
\end{align}

We claim $g'(x,K)
\geq 0, \forall K\in \mathbb{N}^+$, and this is because
\begin{enumerate}
    \item if $K=1$, from \eqref{eq:gxy103} we have $g'(x,K)=0$.
    \item for $K\geq 5$, we can verify $\log K-\frac{2(K-1)}{K}>0$. From \eqref{eq:gxy107} we have $g'(x,K)>0$.
    \item for $K\in \{2,3,4\}$, we can verify $\frac{x+1}{2(K-1)}\log K-\log(1+\frac{x}{K})>0$ for any $x\geq 0$. Therefore \eqref{eq:gxy105} we have $g'(x,K)>0$.
\end{enumerate}
Now since $g'(x,K)\geq 0$, we conclude that $g(x,K)\geq g(0,K)=\log K \geq 0$, which implies $f'(x,K)\leq 0, \forall K\in \mathbb{N}^+$. Hence, $f(x,K)$ is decreasing in $x$.
\end{proof}

Now we are ready to prove Lemma \ref{lm:submodular_V_} and \ref{lm:utility_property_}.

\begin{proof}[\textbf{of Lemma \ref{lm:submodular_V_}}]
By the definition we only need to show the submodularity of $\pi_j(S)$ for any $j\in [m]$, i.e., 
\begin{equation}\label{eq:submodular29}
        \pi_j(\T_j(S\cup \{\s_x\};K))-\pi_j(\T_j(S;K)) \geq \pi_j(\T_j(S\cup \{\s_x,\s_y\};K))-\pi_j(\T_j(S\cup \{\s_y\};K)).
\end{equation}

With out loss of generality we assume $\sigma(\s_x,\x_j)\geq \sigma(\s_y,\x_j)$, and let $$\{v_1,\cdots, v_K\}=\{\exp(\sigma(\s,\x_j))|\s \in \T_j(S;K)\},$$
where $v_1\leq \cdots \leq v_K$. Then depending on the values of $v_x=\exp(\sigma(\s_x,\x_j)),v_y=\exp(\sigma(\s_y,\x_j))$ and $K$, there are three situations :
\begin{enumerate}
    \item $v_x\leq v_1$: \eqref{eq:submodular29} holds because its LHS and RHS are both equal to $0$.
    \item $v_x > v_1, K=1$: The LHS of \eqref{eq:submodular29} is equal to $\log \frac{v_x}{v_1}>0$, the RHS of \eqref{eq:submodular29} is equal to 0. 
    \item $v_x > v_1, K\geq 2$: The LHS of \eqref{eq:submodular29} is equal to $\log \frac{v_x+v_2+a}{v_1+v_2+a}$, the RHS of \eqref{eq:submodular29} is equal to $\log \frac{v_x+v_y+a}{v_y+v_2+a}$, where $a=\sum_{k=3}^K v_k$ if $K\geq 3$ and $a=0$ if $K=2$. We can verify 
    \begin{align*}
        & (v_x+v_2+a)(v_y+v_2+a)-(v_1+v_2+a)(v_x+v_y+a) \\= &(v_2-v_1)(a+v_1+v_2)+(v_x-v_1)(v_y-v_1)\geq 0.
    \end{align*}
\end{enumerate}
Therefore, \eqref{eq:submodular29} holds and Lemma \ref{lm:submodular_V_} follows by summing \eqref{eq:submodular29} over all $j\in [m]$. 
\end{proof}

\begin{proof}[\textbf{of Lemma \ref{lm:utility_property_}}]
By definition, 
\begin{equation}
        u_i(\s_i;\s_{-i})= \sum_{j=1}^m \Big(\log \Big[\sum_{\s'\in \T_j(S;K)} \exp{(\sigma(\s', \x_j))}\Big]\Big) \frac{\II[\s_i\in \T_j(S;K)]\exp(\sigma(\s_i, \x_j))}{\sum_{\s'\in \T_j(S;K)} \exp(\sigma(\s', \x_j))},
\end{equation}
and 
\begin{equation}
    W(S) = \sum_{j=1}^m \pi_j(\T_j(S;K)).
\end{equation}

It is sufficient to prove that for any user $j$,

\begin{align}\notag
    & \Big(\log \Big[\sum_{\s'\in \T_j(S;K)} \exp{(\sigma(\s', \x_j))}\Big]\Big) \frac{\II[\s_i\in \T_j(S;K)]\exp(\sigma(\s_i, \x_j))}{\sum_{\s'\in \T_j(S;K)} \exp(\sigma(\s', \x_j))} \\\label{eq:utility_property_67} &\geq c(\beta,K)\cdot\big[  \pi_j(\T_j(S;K))-\pi_j(\T_j(S_{-i};K))\big].
\end{align}

Note that when $\s_i\notin \T_j(S;K)$, \eqref{eq:utility_property_67} is trivial as its LHS=RHS=0. Now we suppose $\s_i\in \T_j(S;K)$ and thus $\T_j(S_{-i};K)$ and $\T_j(;K)$ only differ in one element. Without loss of generality we let 
$$\{\exp(\sigma(\s,\x_j))|\s \in \T_j(S_{-i};K)\}=\{v'_1,v_2,\cdots, v_K\},$$
and
$$\{\exp(\sigma(\s,\x_j))|\s \in \T_j(S;K)\}=\{v_1,v_2,\cdots, v_K\}, v_1\geq v'_1.$$

Because of our extended definition of $\T_j(S;K)$, $\T_j(S_{-i},K)$ is well defined when $K=n$, under which case we have $v'_1=\exp(\sigma(\bar{\s},\x_j))=1$. Now we let $z=v_2+\cdots+v_K$, \eqref{eq:utility_property_67} is equivalent to 
\begin{equation}\label{eq:77}
    \frac{v_1}{v_1+z}\log(v_1+z)\geq c(\beta,K) \cdot \log\Big[\frac{v_1+z}{v'_1+z}\Big].
\end{equation}
Since $v'_1=\exp(\sigma(\cdot,\x_j))\geq 1$, a sufficient condition for \eqref{eq:77} to hold is
\begin{equation}\label{eq:81}
    \frac{v_1}{v_1+z}\cdot \frac{\log(v_1+z)}{\log(v_1+z)-\log(1+z)}\geq c(\beta,K) .
\end{equation}

Note that $x=v_1-1 \in [0, e^{1/\beta}-1], y = z+1 \in [K, (K-1)e^{1/\beta}+1]$, the LHS of \eqref{eq:81} becomes a function of $(x,y)$ which has the following form
\begin{equation}
    f(x,y)=\frac{(x+1)\log(x+y)}{(x+y)(\log(x+y)-\log y)}.
\end{equation}
From Lemma \ref{lm:eq_monotone} we know $f(x,y)$ is monotonically increasing in $y$ for any $x>0$ and is monotonically decreasing in $x$ any integer $K\geq 1$. Therefore, it holds that 
    \begin{equation}\label{eq:cbetaK_minima}
        f(x,y)\geq f(x,K)\geq f(e^{1/\beta}-1, K)=c(\beta, K).
    \end{equation}

Hence, \eqref{eq:utility_property_67} holds and we complete the proof.
\end{proof}

\subsection{Proof of Theorem \ref{thm:PoA}}

With the help of Proposition \ref{prop:monotone_V}, Lemma \ref{lm:utility_property_} and \ref{lm:submodular_V_}, now we are ready to prove our claim in Theorem \ref{thm:PoA}. We will demonstrate that any \game{} instance $\G(\{\S_i\}_{i=1}^n, \X, \sigma, \beta, K)$ is a smooth game with parameter $(\lambda, \mu)=(c(\beta, K), c(\beta, K))$ so that its PoA can be upper bounded by $\frac{1 + \mu}{\lambda}=1+\frac{1}{c(\beta, K)}$.

\begin{proof}
Let $\s=(\s_1,...,\s_n)$ and $\s^*=(\s_1^*,...,\s_n^*)$ be two different strategy profiles. First, due to function $W$'s sub-modular property disclosed in Lemma \ref{lm:submodular_V_}, for every $i\in [n]$ we have 
\begin{equation}
    W([\s_i^*,\s_{-i}]) - W(\s_{-i}) \geq W([\s_1^*, \cdots, \s_{i-1}^*,\s_i^*,\s]) - W([\s_1^*, \cdots, \s_{i-1}^*,\s]).
\end{equation}
Summing over all player $i$ we obtain
\begin{align}\notag
   \sum_{i=1}^n (W([\s_i^*,\s_{-i}]) - W(\s_{-i})) &\geq \sum_{i=1}^n (W([\s_1^*, \cdots, \s_{i-1}^*,\s_i^*,\s]) - W([\s_1^*, \cdots, \s_{i-1}^*,\s])) \\ \notag
   & = W([\s^*,\s])-W(\s) \\  \label{eq:165}
   & > W(\s^*)-W(\s),
\end{align}
where the last inequality holds because of Proposition \ref{prop:monotone_V}. On the other hand, from Lemma \ref{lm:utility_property_} it also holds that 
\begin{equation}\label{eq:169}
    u_i(\s_i^*;\s_{-i}) \geq c(\beta, K)\cdot\big[W([\s_i^*,\s_{-i}]) - W(\s_{-i})\big],
\end{equation}

And therefore
\begin{align} 
\label{eq:174}
   \sum_{i=1}^n u_i(\s_i^*;\s_{-i}) & \geq c(\beta, K)\cdot \sum_{i=1}^n\big[W([\s_i^*,\s_{-i}]) - W(\s_{-i})\big] \\ \label{eq:175}
    & > c(\beta, K) [W(\s^*)-W(\s)].
\end{align}
where inequality \eqref{eq:174} holds by \eqref{eq:169}, and inequality \eqref{eq:175} holds by \eqref{eq:165}. 

Since \eqref{eq:175} holds for any $\s\in\S$, for any $\al\in CCE(\G)$ we can take expectation over $\s \sim \al$ and obtain

\begin{equation} 
   \sum_{i=1}^n \EE_{\s\sim \al}[u_i(\s_i^*;\s_{-i})] > c(\beta, K) [W(\s^*)-\EE_{\s\sim \al}[W(\s)]].
\end{equation}

Therefore,
\begin{align} \notag
    \EE_{\s\sim \al}[W(\s)] &= \EE_{\s\sim \al}[\sum_{i=1}^n u_i(\s)] \\ \label{eq:173}
    & \geq \EE_{\s\sim \al}[\sum_{i=1}^n u_i(\s_i^*;\s_{-i})] \\  \notag
    & \geq c(\beta, K)\cdot \sum_{i=1}^n\big[\EE_{\s\sim \al}[W([\s_i^*,\s_{-i}])] - \EE_{\s\sim \al}[W(\s_{-i})]\big] \\ \label{eq:1755}
    & > c(\beta, K) [W(\s^*)-\EE_{\s\sim \al}[W(\s)]].
\end{align}
where inequality \eqref{eq:173} follows by the definition of CCE and inequality \eqref{eq:1755} holds by \eqref{eq:175}. Rearranging terms we obtain 

\begin{equation}
    PoA(\G)=\frac{\max_{\s\in\S}W(\s)}{\min_{\al\in CCE(\G)}\EE_{\s\sim \al}[W(\s)]}< 1+\frac{1}{c(\beta,K)}.
\end{equation}
\end{proof}

\subsection{Proof of the Property of $c(\beta, K)$}
\begin{proof}
The $c(\beta, K)$ function has the following form:
\begin{align}
c(\beta, K)= \frac{(b+1)\log(b+K)}{(b+K)(\log(b+K)-\log K)}, b=e^{\frac{1}{\beta}}-1.
\end{align}  
We prove the following facts one by one.

\begin{enumerate}
    \item Fix any $\beta>0$, $c(\beta, K)$ is monotonically increasing in $K$; similarly, fix any $K\geq 1$, $c(\beta, K)$ is monotonically increasing in $\beta$.
    
    Note that $e^{\frac{1}{\beta}}-1$ is decreasing in $\beta$, from Lemma \ref{lm:eq_monotone} the claim holds.

    \item $c(\beta, K)=1$ if and only if $K=1$ or $\beta \rightarrow 0$.

    When $K=1$, $c(\beta, K)=1$ directly holds. When $\beta \rightarrow 0$, $b\rightarrow +\infty$ and $c(\beta, K) \rightarrow 1$. The ``only if" direction follows from the monotonicity property of $c(\beta, K)$.
    
    \item For any $\beta>0$ and $K\geq 1$, we have $c(\beta, K)\geq 1$ and thus $PoA(\G) < 2$ always holds.

    By the monotonicity of $c$,  $c(\beta, K)\geq c(\beta, 1)=1$. Hence, $PoA(\G)<1+\frac{1}{c(\beta, K)}\leq 2$.

    \item For sufficiently large $\beta$ and $K$, $c(\beta, K) \approx (1+\beta)\log K$ asymptotically, and therefore 
    \begin{equation}
        PoA(\G) < 1+\frac{1}{(1+\beta)\log K}.
    \end{equation} 

    When $\beta$ is sufficiently large, $b=e^{\frac{1}{\beta}}-1 \approx \frac{1}{\beta}\rightarrow 0$. Therefore, 

    \begin{align*}
        c(\beta, K) &= \frac{(b+1)\log(b+K)}{(b+K)\log(1+\frac{b}{K})} \\ 
        & \approx \frac{(b+1)K\log(b+K)}{(b+K)b} & \mbox{since $\log (1+x)\approx x \text{ as } x\rightarrow 0$ }\\ 
        & \approx \frac{(b+1)\log K}{b} & \mbox{since $K>>b$} \\
        & \approx (1+\beta)\log K. & \mbox{since $b\approx \frac{1}{\beta}$}
    \end{align*}
\end{enumerate}

\end{proof}

\section{Proof of Corollary \ref{coro:noregret_CCE}}\label{app:poa_proof_regret}
\begin{proof}
Let $\epsilon(T)=\frac{R(T)}{T}$, and $\s^*=(\s_1^*,...,\s_n^*)$ be a global maximizer of $W(\s)$. By definition, 
\begin{equation}
    \EE_{\s\sim\al}[u_i(\s)] \geq \EE_{\s\sim\al}[u_i(\s^*_i,\s_{-i})] - \epsilon(T).
\end{equation}
Summing over all player $i\in [n]$ we obtain 
\begin{equation}\label{eq:129}
    \EE_{\s\sim\al}[W(\s)]=\sum_{i=1}^n\EE_{\s\sim\al}[u_i(\s)] \geq \sum_{i=1}^n\EE_{\s\sim\al}[u_i(\s^*_i,\s_{-i})] - n\epsilon(T).
\end{equation}

On the other hand, by \eqref{eq:175} from the proof of Theorem \ref{thm:PoA}, we have 
\begin{equation}
    \sum_{i=1}^n u_i(\s_i^*;\s_{-i}) > c(\beta, K) [W(\s^*)-W(\s)], \forall \s \in S.
\end{equation}
Taking the expectation of $\s$ over distribution $\al$ we obtain
\begin{equation}\label{eq：138}
    \sum_{i=1}^n \EE_{\s\sim\al}[u_i(\s_i^*;\s_{-i})] > c(\beta, K) \big(W(\s^*)- \EE_{\s\sim\al}[W(\s)]\big).
\end{equation}

\eqref{eq:129} and \eqref{eq：138} together imply that 
\begin{equation}\label{eq:142}
    \EE_{\s\sim\al}[W(\s)] +n\epsilon(T)> c(\beta, K) \big(W(\s^*)- \EE_{\s\sim\al}[W(\s)]\big).
\end{equation}

Note that for any $\s\in\S$, we have $W(\s)=\sum_{j=1}^m \Big(\log \Big[\sum_{\s\in \T_j(\s;K)} \exp{(\sigma(\s, \x_j))}\Big]\Big)\geq \beta\log K$ and therefore, $n\epsilon(T) \leq \frac{n\epsilon(T)}{\beta\log K} \EE_{\s\sim\al}[W(\s)]$. Substituting it into \eqref{eq:142}, we obtain \eqref{eq:PoA_noregret}.

\end{proof}

\section{Proof of Corollary \ref{prop:welfare_loss}}\label{app:welfare_loss}

\begin{proof}
Note that fix any players' strategy profile $\s$, the top-$K$ matching mechanism maximizes the social welfare $W$. Therefore, $W(\G')\leq W^*_K$. On the other hand, from the PoA bound in Theorem \ref{thm:PoA} it holds that 
\begin{align*}
    \frac{W^*_K}{W(\G)} &< 1+ \frac{(b+K)(\log (1+b/K))}{(b+1)\log(b+K)} \\ 
    & < 1+ \frac{(b+K)(b/K)}{(b+1)\log(b+K)} \\ 
    & = 1 + \frac{b+K}{K\log (b+K)}.
\end{align*}

Rearranging term yields $W(\G')-W(\G) \leq W^*_K-W(\G)\leq W^*_K / (1+\frac{K\log (b+K)}{b+K})$.
\end{proof}

\section{Proof of Theorem \ref{thm:PoA_lowerbound}}\label{app:poa_lower_proof}
\begin{proof}
Let $b=\exp(1/\beta)-1$. Consider an $n$-player game where each player-$i$ has the same action set $\S_i=\{\x_1, \cdots, \x_n\}$. Let the user population $\X$ be a set with size $m=n+(n-1)a$, in which $n$ users have profile $\x_1$ and $a$ users have profile $\x_i$ for $i=2,\cdots, n$. Here $a=\beta\log K +1$ is a constant whose choice will become clear later. Let the scoring function $\sigma$ be the indicator function defined as follow: 

\begin{equation}  
\sigma(\s, \x)=\left\{  
             \begin{array}{lr}  
             1, \text{~~if~~} \s=\x,&   \\  
             0, \text{~~otherwise.}&    
             \end{array}  
\right.  
\end{equation} 

First we lower bound the optimal welfare $\max_{\s\in\S}W(\s)$. Consider the joint-strategy profile $\s^*=(\x_1, \x_2, \cdots, \x_n)$, under which each user gets one player with $\sigma$ score $1$ and $K-1$ player with $\sigma$ score 0. In this case, each user-$j$'s utility $\pi_j(\s)=\beta\log (b+K)$ and the social welfare $W(\s)=m\beta\log (b+K)$. Therefore, the optimal social welfare 
\begin{equation}  \label{eq:125}
\max_{\s\in \S}W(\s)\geq W(\s^*)=m\beta\log (b+K).    
\end{equation}

Next we show that $\s=(\x_1,\x_1,\cdots,\x_1)$ is a pure NE of $\G$ and thus $\s\in CCE(\G)$. Given players' joint-strategy $\s$, $n$ users will be assigned with $K$ players with $\sigma$ score $1$ and $(n-1)a$ users will be assigned with $K$ players with $\sigma$ score $0$. Therefore, the utility for an arbitrary player-$i$ is given by
\begin{align}\notag
        u_i(\s) &= \Big[n\cdot(\beta\log K + 1) + a(n-1)\cdot \beta \log K\Big]/n \\  \label{eq:131}
        & = \beta\log K + 1 + \frac{a(n-1)\beta \log K}{n}.
\end{align}

If player-$i$ switches from strategy $\s_1$ to $\s_j$, $n$ users still get $K$ players with score $1$, $(n-2)a$ users get players with score $0$, and $a$ users get $K$ players with scores $(1, 0,\cdots, 0)$. Therefore, player-$i$'s utility after the deviation is 

\begin{align}\notag
        u_i(\s_j, \s_{-i}) &= n\cdot0 + a(n-2)\cdot \beta \log K\cdot \frac{1}{n} + a\cdot \beta\log (b+K)\cdot \frac{e^{\frac{1}{\beta}}}{e^{\frac{1}{\beta}}+K-1} \\   \notag
        &= \frac{a(n-2)\beta \log K}{n} + \frac{b+1}{b+K}\cdot a\beta\log (b+K).
\end{align}

We can verify that $u_i(\s)\geq u_i(\s_j, \s_{-i})$ for any $2\leq j\leq n$ if we take 
\begin{equation}\label{eq:143}
    a=\beta\log K+1\leq\frac{\beta \log K+1}{\beta\Big(\frac{b+1}{b+K}\log (b+K)-\frac{1}{n}\log K\Big)}.
\end{equation}
This is because: 1. The inequality in \eqref{eq:143} always holds as $\frac{b+1}{b+K}\log (b+K) < \log (b+1)=\frac{1}{\beta}$ when $\beta\in [0,1]$ (this is due to the monotonicity of $\log x/x$); 2. $u_i(\s)= u_i(\s_j, \s_{-i})$ when $a=\frac{\beta \log K+1}{\beta\Big(\frac{b+1}{b+K}\log (b+K)-\frac{1}{n}\log K\Big)}$. Hence, $\s$ is an NE of $\G$. Putting \eqref{eq:125}, \eqref{eq:131}, and \eqref{eq:143} together, we have 
\begin{align}\notag
    PoA(\G) &=\frac{\max_{\s\in\S}W(\s)}{\min_{\al\in CCE(\G)}\EE_{\s\sim \al}[W(\s)]}\geq \frac{W(\s^*)}{W(\s)} \\ \label{eq:148}
    & = \frac{m\beta\log(b+K)}{n u_i(\s)} \geq  \frac{m}{n u_i(\s)} \\ \notag
    & = \frac{n+(n-1)a}{n(\beta\log K + 1) + a(n-1)\beta \log K} \\ \notag
    & =\frac{n-1}{n}+\frac{1-t^2a(a-1)}{a + ta(a-1)} \quad (t=\frac{n-1}{n}) \\ \label{eq:151}
    &> \frac{n-1}{n}+\frac{1}{5a-4} \\ \notag
    & =\frac{n-1}{n}+\frac{1}{1+5\beta\log K}.
\end{align}
where inequality \eqref{eq:148} holds because $\beta\log(b+K)\geq\beta\log(b+1)=1$, and \eqref{eq:151} holds because when $a=1+\beta\log K \in [1, 1.2]$ and $t=\frac{n-1}{n}\in [0.5,1)$, it is easy to verify that $\frac{1-t^2a(a-1)}{a + ta(a-1)} > \frac{1}{5a-4}$. It is equivalent to $4t^2a+4>5t^2a^2+ta$, which is true because  $t^2a(5a-4)+ta < a(5a-4)+a < 4$.
\end{proof}

\section{Proof of Proposition \ref{prop:unbound_user_welfare}}\label{app:unbound_user_welfare}
\begin{proof}
Consider a population of two users $\X=\{\x_1,\x_2\}$ and $n$ players in which one player has two pure strategies and the other $n-1$ players only have access to a single strategy, i.e, $\S_i=\{\s_0\},i=2,\dots,n$ and $\S_1=\{\s_1,\s_2\}$. Let the scoring function $\sigma$ be

\begin{equation}  
\sigma(\s, \x)=\left\{  
             \begin{array}{lr}  
             1, \text{~~if~~} \s=\s_1,\x=\x_1,&   \\ 
             \delta, \text{~~if~~} \s=\s_2,&   \\ 
             0, \text{~~otherwise.}&    
             \end{array}  
\right.  
\end{equation} 
 We will show that for any given $K\geq 1, 0\leq\beta\leq \min\{0.14, \frac{1}{5\log K} \}$ there exists $\delta \in (0,1)$ such that the PoA of game $\tilde{\G}(\{\S_i\}_{i=1}^n,\X,\sigma,\beta,K)$ is always strictly greater than $2$. 
 
 From the proof of Lemma \ref{lm:closed_form_utility}, the user utility and welfare functions of $\tilde{\G}$ share the same form as in \eqref{eq:user_utility}, \eqref{eq:welfare}, while the player utility functions of $\tilde{\G}$ have the following form

 \begin{equation}\label{eq:player_utility_G2}
        u_i(\s)= \sum_{j=1}^m \frac{\II[\s_i\in \T_j(\s;K)]\exp(\beta^{-1}\sigma(\s_i, \x_j))}{\sum_{\s_k\in \T_j(\s;K)} \exp(\beta^{-1}\sigma(\s_k, \x_j))}.
\end{equation}

Let $b=\exp(1/\beta)-1$ and we choose any $\delta\in [\delta_0,1)$ such that $\exp (\delta_0 / \beta) + K-1 = \frac{2}{\frac{1}{K}+\frac{1}{b+K}}$. Such $\delta_0\in (0,1)$ must exist because function $f(\delta)=\exp (\delta / \beta) + K-1$ is monotonically increasing in $[0,1]$ with range $[K, b+K] \supset \frac{2}{\frac{1}{K}+\frac{1}{b+K}}$. Given such choice of $\delta$, we can verify that
\begin{align}\notag
        2u_1(\s_2, \s_0, \cdots, \s_0) &= \frac{2\exp (\delta / \beta)}{\exp (\delta / \beta) + K-1}  \\\notag
        & \geq \frac{2\exp (\delta_0 / \beta)}{\exp (\delta_0 / \beta) + K-1}\\\notag
        &= \frac{\exp (1 / \beta)}{\exp (1 / \beta) + K-1} +\frac{1}{K} \\ \notag
        &\geq \frac{\exp (1 / \beta)}{\exp (1 / \beta) + K-1} +\frac{1}{n} = 2u_1(\s_1, \s_0, \cdots, \s_0),
\end{align}

which indicates that $(\s_2,\s_0, \cdots, \s_0)$ is a PNE of $\tilde{\G}$. Therefore, by picking $\delta=\delta_0$ we have
\begin{align}\notag
    PoA(\tilde{\G}) &=\frac{\max_{\s\in\S}W(\s)}{\min_{\al\in CCE(\tilde{\G})}\EE_{\s\sim \al}[W(\s)]}\geq \frac{W(\s_1, \s_0, \cdots, \s_0)}{W(\s_2, \s_0, \cdots, \s_0)} \\ \label{eq:381}
    &= \frac{\log(\exp (1/\beta)+K-1)+\log K}{2\log (\exp (\delta_0 / \beta) + K-1)} \\  \label{eq:382}
    & = \frac{\log(b+K)+\log K}{2\log [2K(b+K)] - 2 \log (b+2K)} & \mbox{by the choice of $\delta_0$}\\  \label{eq:383}
    & > 2,
\end{align}
where \eqref{eq:383} holds because \eqref{eq:383} is equivalent to 
\begin{equation}\label{eq:386}
    (b+2K)^4>16K^3(b+K)^3.
\end{equation}

And we show the correctness of \eqref{eq:386} by verifying the following situations:

\begin{enumerate}
    \item when $K\in \{2,3\}$, \eqref{eq:386} holds for all $\beta \in [0, 0.14]$, $b=\exp(1/\beta)-1$.
    \item when $K\geq 4$, from $\beta \leq \frac{1}{5\log K}$ we know $b+K=\exp(1/\beta)+K-1>K^5$ and thus 
    $\frac{(b+2K)^4}{(b+K)^3} > b > K^5 \geq 16 K^3$. Therefore, \eqref{eq:386} holds.
\end{enumerate}

Finally we show that when $K=1$ or $\beta \rightarrow 0$, $PoA(\tilde{\G})$ can be arbitrarily large.  

\begin{enumerate}
    \item when $\beta \rightarrow 0$, we have $b\rightarrow \infty$. From \eqref{eq:382} we have for any fixed $K$, 
    \begin{align}\notag
        \lim_{\beta\rightarrow 0}PoA(\tilde{\G}) &= \lim_{b\rightarrow +\infty} \Big\{\frac{\log(b+K)+\log K}{2\log [2K(b+K)] - 2 \log (b+2K)}\Big\} = \lim_{b\rightarrow +\infty}\log(b+K) \rightarrow +\infty.
    \end{align}
    
    \item when $K= 1$, the user's choice is deterministic and thus any $\delta\in (0,1)$ makes $(\s_2,\s_0,\cdots,\s_0)$ a PNE of $\tilde{\G}$. Let $\delta \rightarrow 0$ and from \eqref{eq:381} we have for any fixed $\beta$, 
    \begin{align}\notag
    \lim_{\delta\rightarrow 0}PoA(\tilde{\G}) &= \lim_{\delta\rightarrow 0} \Big\{ \frac{\log(\exp (1/\beta)+K-1)+\log K}{2\log (\exp (\delta / \beta) + K-1)}
    \Big\} = \lim_{\delta\rightarrow 0} \frac{1}{2\delta} \rightarrow + \infty.
    \end{align}
    
\end{enumerate}

\end{proof}

\section{Connections to Existing Models}\label{app:connection}

As an extended discussion to the related work, we show how our \game{}s connect to the following three previously proposed competition models for content creators. All the following models do not consider the presence of an RS and match each user with all content creators (players), which corresponds to the case $K=n$ in our setting. Interestingly, we found that each of them turns out to be a special case of our \game{}.

\vspace{2mm}
\noindent 
{\bf Facility location games under the no intervention mediator \citep{ben2019recommendation}}

Consider the following \game{} instance:
\begin{enumerate}
    \item the user population $\X\subseteq [0,1]$ is a finite set of size $m$,
    \item each player $i\in[n]$ shares the same action set $\S_i=[0,1]$,
    \item the scoring function is given by $\sigma(s, x)= |s-x|$,
    \item $(\beta, K)=(0, n)$,
    \item the utility function is induced by the user exposure metric, i.e., $u_i(\s)= \sum_{x\in \X}  \Pr(x \shortrightarrow s_i)$.

\end{enumerate}

If we let $m\rightarrow \infty$ so that $\X$ becomes a continuum with density function $g$ over the unit interval $[0,1]$, the game instance $\tilde{G}(\{\S_i\}_{i=1}^n, \X, \sigma, \beta, K)$ \footnote{Note that we use $\tilde{G}$ to refer to the variant of $G$ that utilizes the user exposure metric instead of the user engagement metric in player utility functions.} defined above is equivalent to the facility location game under the no intervention mediator proposed by \cite{ben2019recommendation}.

\vspace{2mm}
\noindent 
{\bf Hotelling-Downs model with limited attraction under support utility functions \citep{shen2016hotelling}}

Consider the following \game{} instance:
\begin{enumerate}
    \item the user population $\X=\{x_1,\cdots,x_m\}\subseteq [0,1]$ is a finite set of size $m$,
    \item each player $i\in[n]$ shares the same action set $\S_i=[0,1]\times[0,1]$. For each action $\s_i=(s_i,w_i)$ taken by player-$i$, it is associated with an attraction region $R_i=[s_i-\frac{w_i}{2}, s_i+\frac{w_i}{2}]\cap [0,1]$.
    \item for each $i\in [n]$, the scoring function is given by $\sigma(\s_i, x)= \II[x\in R_i]$,
    \item $(\beta, K)=(0, n)$,
    \item utility function is  the user engagement metric, i.e., $u_i(\s)= \sum_{j=1}^m \pi_j(\s) \Pr(x_j \shortrightarrow s_i)$.

\end{enumerate}

In fact, given $\beta=0$ and the above definition of $\sigma$, we can see the utility functions under both exposure and engagement metrics are identical, because it holds that $\pi_j(\s)\in \{0,1\}$ and $\pi_j(\s)=1$ if and only if $\Pr(x_j \shortrightarrow s_i)>0$. We can verify that the game instance $G(\{\S_i\}_{i=1}^n, \X, \sigma, \beta, K)$ defined above is equivalent to the Hotelling-Downs model with limited attraction under support utility functions proposed by \cite{shen2016hotelling}.

\vspace{2mm}
\noindent 
{\bf Exposure games  \citep{hron2022modeling}}

Consider the following \game{} instance:
\begin{enumerate}
    \item the user population $\X\subseteq \RR^d$ is a finite set of size $m$,
    \item each player $i\in[n]$ is associated with an action set $\S_i$ on the unit sphere in $\RR^d$, i.e., $\S_i\in \mathbb{S}^{d-1}$,
    \item the scoring function is given by the inner product, i.e., $\sigma(\s, \x)=\langle \s,\x\rangle $,
    \item $(\beta, K)=(\tau, n)$,
    \item the utility function is induced by the user exposure metric, i.e., $u_i(\s)= \sum_{x\in \X}  \Pr(x \shortrightarrow s_i)$.

\end{enumerate}

Note that in exposure games the parameter $\beta$ no longer represents the user decision noise but becomes a temperature parameter $\tau$ controlling the spread of exposure probabilities over items. The game instance $\tilde{G}(\{\S_i\}_{i=1}^n, \X, \sigma, \beta, K)$ defined above is equivalent to the exposure games proposed by \cite{hron2022modeling}.

\section{The Detailed Experimental Setup}\label{app:setup}

\subsection{The computation of globally optimal social welfare}

We use two heuristic methods for the computation of $\max_{\s\in\S}W(\s)$, when the exact solution is computationally infeasible: Simulated Annealing (SA) shown in Algorithm \ref{alg:SA} and Best-response Search (BRS) shown in Algorithm \ref{alg:BRS}. For SA, we set $T=5000$ and the temperature schedule $\tau_t=0.1/\sqrt{t}$; for BRS, we set $T=\max\{30, 2n\}$ and take the best output from 5 independent runs. Both methods perform as well as the brute-force search when the number of players $n$ and the size of action set $k$ are less than $6$, thus yielding the exact global optimal under such situations. For larger problem scales, we run both methods and select the best output to approximate $\max_{\s\in\S}W(\s)$.

\begin{algorithm}[h]
   \caption{Simulated annealing for computing the optimal welfare}
   \label{alg:SA}
\begin{algorithmic}
   \STATE {\bfseries Input:} Time horizon $T$, joint action space $\S=\prod_{i=1}^n\S_i$, welfare function $W(\s)$, temperature schedule $\{\tau_t\}_{t=1}^T$.
   \STATE {\bfseries Initialization:} A randomly selected joint action $\s^{(0)}=(\s_1^{(0)},\cdots,\s_n^{(0)})\in \S$.
   \FOR{$t=0$ {\bfseries to} $T$}
            \STATE Randomly choose a player $i\in [n]$ and randomly perturb her action $\s_i^{(t)}$ in $\s^{(t)}$ to yield $\s'^{(t)}$.
            \STATE Compute $W(\s'^{(t)}), W(\s^{(t)})$.
            \IF{$W(\s'^{(t)})> W(\s^{(t)})$} 
                \STATE Set $\s^{(t+1)}=\s'^{(t)}$.
            \ELSE
                \STATE With probability $e^{(W(\s'^{(t)}) - W(\s^{(t)}))/\tau_t }$, set $\s^{(t+1)}=\s'^{(t)}$; otherwise, $\s^{(t+1)}=\s^{(t)}$.
            \ENDIF
   \ENDFOR
   \STATE {\bfseries Output: $\max_{t\in [T]} W(\s^{(t)})$. }
\end{algorithmic}
\end{algorithm}

\begin{algorithm}[h]
   \caption{Best-response search for computing the optimal welfare}
   \label{alg:BRS}
\begin{algorithmic}
   \STATE {\bfseries Input:} Time horizon $T$, joint action space $\S=\prod_{i=1}^n\S_i$, welfare function $W(\s)$.
   \STATE {\bfseries Initialization:} A randomly selected joint action $\s^{(0)}=(\s_1^{(0)},\cdots,\s_n^{(0)})\in \S$.
   \FOR{$t=0$ {\bfseries to} $T$}
            \STATE Randomly choose a player $i\in [n]$ and search for her best-response that maximizes $W$, i.e., 
            $$\s_i^{(t+1)}=\arg\max_{\s_i\in\S_i} W(\s_i, \s_{-i}^{(t)}).$$
            \STATE Set $\s^{(t+1)}=(\s_i^{(t+1)}, \s_{-i}^{(t)})$.
   \ENDFOR
   \STATE {\bfseries Output: $W(\s^{(T)})$. }
\end{algorithmic}
\end{algorithm}

\subsection{The computation of the worst case welfare under CCE}


We can express the definition of CCE into a set of linear constraints. Let $\al \in$CCE$(\G)$ be a probability distribution over the joint action space $\S=\prod_{i=1}^n\S_i$. Then $\al$ satisfies
\begin{equation}\label{eq:lp_cce}
    \forall i, \forall \s'_i, \sum_{\s\in\S}\al(\s) u_i(\s) \geq \sum_{\s\in\S}\al(\s) u_i(\s'_i,\s_{-i}).
\end{equation}
And $\min_{\al\in CCE(\G)}\EE_{\s\sim \al}[W(\s)]$ is solving 
\begin{equation}\label{eq:lp_obj}
    \min_{\al\in CCE(\G)} \al(\s)W(\s)
\end{equation}
under linear constraints \eqref{eq:lp_cce}. Suppose each $\S_i$ shares the same size $k$, then we obtain a linear program with $k^n$ variables and $kn$ constraints.

\subsection{The details of no-regret dynamic simulation and computation of PotA}

For the computation of PotA, we need to simulate each player's sequence of play. We let all players run the following Exp-3 algorithm \ref{alg:Exp3} simultaneously to update their strategies in a fixed time horizon $T=5000$. According to \citep{auer2002nonstochastic}, Algorithm \ref{alg:Exp3} enjoys sublinear regret if $\eta=\epsilon\sim O(\sqrt{\frac{k\log k}{T}})$. However, it is not realistic to assume content creators in practice are sophisticated enough to figure out the game parameters and the optimal learning/exploration rates. Hence, unless specified, we always use a fixed value $(\eta, \epsilon)=(0.1,0.1)$ in our experiments. 

\begin{algorithm}[h]
   \caption{Exp3 for simulating player-$i$'s evolving strategies}
   \label{alg:Exp3}
\begin{algorithmic}
   \STATE {\bfseries Input:} Time horizon $T$, number of actions $k$, exploration parameter $\epsilon$, learning rate $\eta$.
   \STATE {\bfseries Initialization:} The score vector $\y_0=(y_1(0),\cdots, y_k(0))=(0,\cdots,0)$.
   \FOR{$t=0$ {\bfseries to} $T$}
   \STATE Compute a mixed strategy from the accumulated scores:
            $$
            p_j(t) = (1-\epsilon) \frac{ \exp (y_j(t)) }{ \sum_{l\in[k]} \exp(y_l(t)) } + \frac{\epsilon}{k},  \quad \forall j\in [k].
            $$
            \STATE Draw action $\s_{i,t}\in[k]$ randomly accordingly to the distribution $\p_t=(p_1(t), p_2(t), \dots p_k(t))$.
            \STATE Play action $\s_{i,t}$ and observe the utility $u_{i}(\s_{i,t},\s_{-i,t})$.
            \STATE Update the score
            $$
            y_{\s_{i,t}}(t+1) =  y_{\s_{i,t}}(t) + \eta\cdot\frac{u_{i}(\s_{i,t},\s_{-i,t})}{p_{\s_{i,t}}(t)}.
            $$
   \ENDFOR
\end{algorithmic}
\end{algorithm}

\end{document}